\documentclass{article}

\usepackage{color}
\usepackage{amsmath}
\usepackage{amssymb}
\usepackage{graphicx}
\usepackage{wasysym}
\usepackage{psfrag}
\usepackage{authblk}

\newcommand{\myqed}{\hspace*{\fill}$\blacksquare$}

 \newtheorem{thm}{Theorem}
 \newtheorem{lem}[thm]{Lemma}
 \newtheorem{cor}[thm]{Corolary}
 \newtheorem{defn}[thm]{Definition}
 \newtheorem{prop}[thm]{Proposition}
 \newenvironment{sep}[1]{\vspace{5pt}\noindent {\bf #1 ---}}{}
 \newenvironment{proof}{\noindent {\bf Proof ---}}{\myqed}



\newcommand{\fref}[1]{Fig.~{\ref{fig:#1}}}
\newcommand{\propref}[1]{proposition~{\ref{prop:#1}}}
\newcommand{\lemref}[1]{lemma~{\ref{lem:#1}}}
\newcommand{\corref}[1]{corollary~{\ref{cor:#1}}}
\newcommand{\thmref}[1]{theorem~{\ref{thm:#1}}}
\newcommand{\secref}[1]{section~{\ref{sec:#1}}}
\newcommand{\Secref}[1]{Section~{\ref{sec:#1}}}

\newcommand{\vect}[1]{\boldsymbol{\mathrm{#1}}}

\newcommand{\be}{\begin{equation}}
\newcommand{\ee}{\end{equation}}
\newcommand{\bex}{\begin{equation*}}
\newcommand{\eex}{\end{equation*}}
\newcommand{\nl}{\nonumber\\}

\newcommand{\Z}{\mathbf{Z}}
\newcommand{\N}{\mathbf{N}}
\newcommand{\C}{\mathbf{C}}
\newcommand{\id}{\mathbf{1}}

\newcommand{\st}{\,:\,}
\newcommand{\sset}[1]{\{#1\}}
\newcommand{\set}[2]{\{#1\st #2\}}
\newcommand{\sget}[1]{\langle#1\rangle}

\newcommand{\funct}[3]{#1:\,#2\longrightarrow #3}
\newcommand{\functmap}[5]
{
\begin{alignat}{1}
#1\,:\,\,#2 &\longrightarrow #3\nl
#4&\longmapsto #5
\end{alignat}
}
\newcommand{\functmapdouble}[9]
{
\begin{alignat}{3}
#1\,:\, #2 &\longrightarrow #3 \qquad &#1\,:\, #6 &\longrightarrow #7\nl
#4&\longmapsto #5\qquad &#8&\longmapsto #9
\end{alignat}
}

\newcommand{\inv}{^{-1}}

\newcommand{\powerset}[1]{{\bf P}(#1)}
\newcommand{\powersetfin}[1]{{\bf P}_{0}(#1)}

\newcommand{\pauli}{\mathcal{P}}
\newcommand{\gauge}{\mathcal{G}}
\newcommand{\stab}{\mathcal{S}}

\newcommand{\cen}[2]{\mathcal{Z}_{#2}({#1})}

\newcommand{\ket}[1]{|#1\rangle}

\newcommand{\cnstrx}{\mathrm{Cnstr}}
\newcommand{\cnstr}[1]{\cnstrx (#1)}
\newcommand{\cnstrr}[2]{\cnstrx_{#1}(#2)}
\newcommand{\cnstrfinx}{\cnstrx^0}
\newcommand{\cnstrfin}[1]{\cnstrfinx(#1)}
\newcommand{\cnstrfinr}[2]{\cnstrfinx_{#1}(#2)}

\newcommand{\suppx}{\mathrm{Supp}}
\newcommand{\supp}[1]{\suppx(#1)}
\newcommand{\suppi}[1]{\suppx[#1]}
\newcommand{\suppr}[2]{\suppx_{#1}(#2)}

\newcommand{\negrx}[1]{\mathrm {Neg}_{#1}}
\newcommand{\negr}[2]{\negrx{#1}(#2)}
\newcommand{\chgx}{\mathrm {Chg}}
\newcommand{\chgrx}[1]{\chgx_{#1}}
\newcommand{\chg}[1]{\chgx(#1)}
\newcommand{\chgr}[2]{\chgrx{#1}(#2)}
\newcommand{\syndx}{\mathrm {Com}}
\newcommand{\syndrx}[1]{\syndx_{#1}}

\newcommand{\syndr}[2]{\syndx_{#1}(#2)}
\newcommand{\syndrset}[2]{\syndx_{#1}(#2)}
\newcommand{\syndri}[2]{\syndx_{#1}[#2]}

\newcommand{\prox}{\mathrm{Pro}}
\newcommand{\pro}[1]{\prox(#1)}
\newcommand{\proi}[1]{\prox[#1]}
\newcommand{\morx}[1]{\negrx{#1}^{-1}}
\newcommand{\mor}[2]{\morx{#1}(#2)}

\newcommand{\comm}[2]{(#1;#2)}
\newcommand{\bl}[4]{\square_{#1,#2}^{#3,#4}}
\newcommand{\site}[2]{(#1,#2)}
\newcommand{\thk}{\mathrm {Thk}}
\newcommand{\trans}[2]{T_{#1,#2}\,}
\newcommand{\strans}[1]{T^{(#1)}\,}
\newcommand{\overlap}[1]{||_{#1}}
\newcommand{\range}[1]{\rho(#1)}
\newcommand{\coarsex}[1]{\mathrm{Crs}_{\hspace{.3pt}#1}}
\newcommand{\coarse}[2]{\coarsex{#1}(#2)}

\newcommand{\morphism}[3]{#2\,\overset{#1}\longrightarrow\,#3}
\newcommand{\morphismd}[5]{#3\,\overset{#1}\longrightarrow\,#4\,\overset{#2}\longrightarrow\,#5}

\newcommand{\PA}{\mathcal{A}}
\newcommand{\PB}{\mathcal{B}}

\newcommand{\phase}{\sget{i\id}}
\newcommand{\morph}[1]{\Phi(#1)}
\newcommand{\morphfin}[1]{\Phi^0(#1)}

\newcommand{\charge}[1]{C_{#1}}
\newcommand{\spinx}{\theta}
\newcommand{\spin}[1]{\spinx(#1)}
\newcommand{\mutualx}{\kappa}
\newcommand{\mutual}[2]{\mutualx(#1,#2)}
\newcommand{\stabtrivial}{\stab_{\mathrm{T}}}
\newcommand{\stabtrivialg}{\stab^{\mathrm{T}}_g}
\newcommand{\stabstrivial}{\stab_{\mathrm{S}}}

\newcommand{\gaugestrivial}{\gauge_{\mathrm{S}}}

\newcommand{\idtrivial}{1_{\mathrm{T}}}

\newcommand{\paulitrivial}{\pauli_{\mathrm{T}}}
\newcommand{\stabTC}{\stab_{\mathrm{TC}}}
\newcommand{\stabTCg}{\stab^{\mathrm{TC}}_g}
\newcommand{\pauliTC}{\pauli_{\mathrm{TC}}}
\newcommand{\stabSTC}{\stab_{\mathrm{STC}}}
\newcommand{\gaugeSTC}{\gauge_{\mathrm{STC}}}
\newcommand{\stabSTCg}{\stab^{\mathrm{STC}}_g}
\newcommand{\gaugeSTCg}{\gauge^{\mathrm{STC}}_g}
\newcommand{\stabFTC}{\stab_{\mathrm{FTC}}}
\newcommand{\gaugeFTC}{\gauge_{\mathrm{FTC}}}
\newcommand{\stabFTCg}{\stab^{\mathrm{FTC}}_g}
\newcommand{\gaugeFTCg}{\gauge^{\mathrm{FTC}}_g}

\newcommand{\stabhon}{\stab_{\hexagon}}
\newcommand{\gaugehon}{\gauge_{\hexagon}}
\newcommand{\stabhong}{\stab^{\hexagon}_g}
\newcommand{\gaugehong}{\gauge^{\hexagon}_g}

\newcommand{\stabTSCC}{\stab_{\mathrm{col}}}
\newcommand{\gaugeTSCC}{\gauge_{\mathrm{col}}}
\newcommand{\stabTSCCg}{\stab^{\mathrm{col}}_g}
\newcommand{\gaugeTSCCg}{\gauge^{\mathrm{col}}_g}

\newcommand{\pth}[1]{\mathrm{Path}(#1)}
\newcommand{\strx}{\mathrm{Str}}
\newcommand{\str}[2]{\strx(#1;#2)}
\newcommand{\strp}[3]{\str{#1}{\pth{#2,#3}}}
\newcommand{\strd}[2]{\strx(#1,#2)}

\newcommand{\Site}{{\mathbf{0}}}
\newcommand{\graph}{\Gamma}
\newcommand{\dgraph}{\Gamma^\ast}

\newcommand{\edges}{{\Gamma_\mathrm{edg}}}
\newcommand{\dedges}{{\Gamma_\mathrm{edg}^\ast}}
\newcommand{\faces}{{\Gamma_\mathrm{fc}}}
\newcommand{\dfaces}{{\Gamma_\mathrm{fc}^\ast}}

\newcommand{\hedge}{\vect e_1}
\newcommand{\vedge}{\vect e_2}
\newcommand{\face}{\vect f}
\newcommand{\dface}{\vect f^\ast}

\newcommand{\edgeopx}[1]{\epsilon_{#1}}
\newcommand{\hedgeopx}[1]{\hat\epsilon_{#1}}
\newcommand{\edgeopxp}[1]{\epsilon'_{#1}}
\newcommand{\edgeop}[2]{\edgeopx{#1}(#2)}
\newcommand{\hedgeop}[2]{\hedgeopx{#1}(#2)}
\newcommand{\dedgeop}[2]{\edgeopx{#1}(#2^\ast)}
\newcommand{\edgeopp}[2]{\edgeopxp{#1}(#2)}
\newcommand{\edgeopi}[2]{\edgeopx{#1}[#2]}

\newcommand{\indall}{K}
\newcommand{\indc}{K_c}
\newcommand{\indd}{K_d}
\newcommand{\inde}{K_e}

\newcommand{\mybox}[1]{
\begin{center}
\noindent\emph{#1}
\end{center}
}

\newcommand{\add}{+}

\begin{document}

\date{}
\title{Structure of 2D Topological Stabilizer Codes}
\author{H\'ector Bomb\'in}
\affil{Perimeter Institute for Theoretical Physics\\ 31 Caroline St. N., Waterloo, ON, N2L 2Y5, Canada}

\maketitle

\begin{abstract}
We provide a detailed study of the general structure of translationally invariant two-dimensional topological stabilizer quantum error correcting codes, including subsystem codes.
We show that they can be understood in terms of the homology of string operators that carry a certain topological charge.
In subsystem codes two dual kinds of charges appear.
We prove that two non-chiral codes are equivalent under local transformations iff if they have isomorphic topological charges.
Our approach emphasizes local properties over global ones.
\end{abstract}

\newpage
\tableofcontents

\section{Introduction}

Quantum error correcting codes \cite{shor:1995:scheme,steane:1996:error,knill:1997:theory,bennett:1996:mixed} play a fundamental role in the quest to overcome the decoherence of quantum systems.
Among them, stabilizer codes \cite{gottesman:1996:stabilizer,calderbank:1997:quantum} provide a large and flexible class of codes that are at the same time relatively easy to investigate.
Interactions between physical qubits are sometimes subject to locality constraints in a geometrical sense. 
E.g., qubits might be placed in a $D$-dimensional array with interactions available only between nearest neighbors.
In such cases the subclass of topological stabilizer codes (TSC) \cite{kitaev:2003:ftanyons,dennis:2002:tqm,bombin:2006:2dcc,bombin:2007:3dcc,bombin:2010:subsystem,bravyi:2010:topological,suchara:2011:constructions,haah:2011:local} is a natural choice.
TSCs not only have very nice locality properties, but are also flexible in terms of the manipulation strategies that they allow. 

The purpose of this paper is to investigate the general structure of two-dimensional TSCs,
covering both subspace codes and the more general subsystem codes \cite{kribs:2005:unified,poulin:2005:stabilizer,bacon:2006:operator}.
The only constraint imposed is translational symmetry of the bulk of the code, which provides a way to list the codes.

A stabilizer quantum error correcting code is described in terms of certain `check operators', a set of commuting observables that have to be measured in order to get information about what errors have affected an encoded state.
In TSCs these measurements are local in a geometrical sense.
In particular, a TSC is in practice usually given as a recipe to construct the check operators on a given lattice of qubits.
The lattice can be arbitrarily large, but check operators are defined locally, with support on a set of qubits contained in a bounded region.
What makes these codes topological, as opposed to simply local, is that no information about the encoded qubits can be recovered if only access to a local set of qubits is granted.
Indeed, operators on encoded qubits have support on a number of physical qubits that grows with the lattice size.

An essential feature of known topological codes is that they have an error threshold \cite{dennis:2002:tqm,wang:2003:confinement,katzgraber:2009:cc,katzgraber:2009:unionjack,andrist:2010:tricolored}.
I.e., in the limit of large lattices, for noise below a certain threshold error correction is asymptotically perfect.
This is true either in a simple error correction scenario or in a fault-tolerant scenario, and also when qubit losses are taken into account \cite{stace:2010:error}.
It is also often true \cite{yoshida:2010:classification} ---but for interesting exceptions, see\cite{bravyi:2010:topological,haah:2011:local}--- that the number of encoded qubits depends only on the homology of the qubit lattice.
In particular a trivial homology gives no encoded qubits, but even in a planar setting it is possible to recover non-trivial codes by introducing suitable boundaries  \cite{freedman:2001:planar} and other defects such as twists \cite{bombin:2010:twist}.
Moreover, because in such `homological' codes the lattices can be chosen very flexibly, it is possible to carry out certain computations by changing the code geometry over time, something called `code deformation' \cite{dennis:2002:tqm,raussendorf:2007:deformation,bombin:2009:deformation}.
Error correction turns out to be connected to classical statistical models \cite{dennis:2002:tqm,katzgraber:2009:cc,bombin:2010:subsystem} and this has given rise to fast algorithms to infer the errors \cite{duclos:2010:fast}.
Finally, there exist topological codes that are especially well suited to perform computations using transversal gates \cite{bombin:2006:2dcc,bombin:2007:3dcc}, which minimizes error propagation.

General properties of TSCs in two or higher dimensions have already been explored to different degrees. 
General constraints on the code distance ---directly related to the geometry of encoded operators--- were first found in \cite{bravyi:2009:no-go}, and improved in \cite{bravyi:2010:tradeoffs}.
The subsystem case was addresed in \cite{bravyi:2011:subsystem}.
The geometry of logical operators in a subclass of TSCs, subject to constraints such as scale invariance of the number of encoded qubits, was studied in \cite{yoshida:2010:classification}.
Constraints on the code distance for three-dimensional codes that do not satisfy this scale invariance condition have been recently developed \cite{bravyi:2011:energy}.

One of the main results in this paper is that all two-dimensional TSCs can be understood in terms of `string operators' that carry a `topological charge'.
In particular, homology plays an essential role and dictates the number of encoded qubits.
In the case of subspace codes, the corresponding Hamiltonian model \cite{kitaev:2003:ftanyons} ---which has the code as its gapped ground state--- exhibits anyonic excitations.
Moreover, all subspace codes that give rise to the same anyon model turn out to be equivalent, in the sense that there exists a local transformation connecting them.
The implications of this result, both from the condensed matter and quantum information perspectives, are explored in \cite{bombin:2011:universal}. 
In the case of subsystem codes we no longer have a direct interpretation in terms of physical anyons.
However, we find a nice duality structure between two kinds of charges.
The first kind corresponds to an anyon model ---possibly chiral---, and the second to fluxes with which the anyons interact topologically.
We show that codes giving rise to the same non-chiral anyon model are locally equivalent.
Computational implications of these results are discussed in the conclusions.

Our approach emphasizes local properties of the codes ---such as the structure of check operators--- over global ones ---such as the number of encoded qubits.
In practice we realize this by considering infinite versions of the codes that cover the plane.
The idea is that all operators acting on a finite number of qubits become automatically local in this infinite picture.
At the same time, the homology of the plane is trivial and this simplifies the analysis.

The paper is divided as follows.
\Secref{results} informally summarizes the main results, providing the intuition behind the main definitions and proofs.
\Secref{basics} introduces in a formal manner topological stabilizer groups (TSGs), used to model TSCs.
Subsequent sections develop diverse aspects of the structure of TSGs, culminating in a structure theorem in \secref{classification} that we use to prove local equivalence. 
Finally, \secref{conclusions} discusses natural extensions of this work.

To avoid repetition, we will often omit the qualifier ``two-dimensional'' when discussing lattices, TSCs, and so on, but it should be understood in all cases.

\section{Approach and results}\label{sec:results}

The purpose of this section is to summarize informally the approach taken to investigate TSCs and the results obtained.
At the same time, it explains the motivation behind the main definitions and the intuition behind some proofs.

\subsection{Stabilizer codes}

Given a system with $n$ qubits, its Pauli group is
\be\label{pauli}
\pauli_n:=\sget{i\id, X_1,Z_1,\dots,X_n,Z_n},
\ee
where $X_i$, $Z_i$ denote the Pauli $X$ and $Z$ operators acting on the $i$-th qubit.
A stabilizer code \cite{gottesman:1996:stabilizer,calderbank:1997:quantum} on $n$ qubits is defined by a subgroup of Pauli operators $\stab\subset\pauli$, called stabilizer group, such that $-1\not \in\stab.$
$\stab$ is then abelian and its elements are self-adjoint.
The code is composed by those states $\ket \psi\in\C^{2^n}$ such that
$
s \ket\psi =\ket\psi
$
for every stabilizer $s\in\stab$.
It is enough to check these conditions for a set $\sset {s_i}$ of independent generators of $\stab$.
Such stabilizer generators serve as check operators: they can be measured in order to recover information about any errors affecting the encoded states.
If $\stab$ has $n-k$ independent generators the code subspace has dimension $2^k$, it encodes $k$ logical qubits.
The logical or encoded Pauli operators are recovered as the quotient
\be
\frac{\cen\stab{\pauli_n}}\stab \simeq \pauli_{k},
\ee
where $\cen\stab{\pauli_n}$ denotes the centralizer of $\stab$ in $\pauli_n$.
This centralizer is the group of undetectable Pauli errors: those that do not change the error syndrome.
The quotient is necessary because the elements of $\stab$ are trivial undetectable errors, without any effect on encoded states.

Given a set $\sset{s_i}$ of generators of $\stab$ we can write down a Hamiltonian
$
H=-\sum_i s_i.
$
Its ground subspace is the code subspace, and there is a gap of two energy units to any excited state.
This gap is important when we consider families of TSCs with local generators in systems of arbitrary size, because the size independence of the gap gives rise to a gapped phase.
An error syndrome in the code amounts to an excitation configuration in the Hamiltonian system.

It is also possible to encode only in a subsystem of the code subspace.
In the stabilizer formalism this logical subsystem can be described as that in which the action of a certain gauge group $\gauge\subset\pauli$ is trivial \cite{poulin:2005:stabilizer}.
The gauge group satisfies
\be\label{gauge_stab}
\cen\gauge\gauge \propto \stab
\ee
and we assume $\phase\subset\gauge$.
Logical operators are recovered from
\be
\frac{\cen\gauge{\pauli_n}}\stab \simeq \pauli_{k},
\ee
where $2^k$ is no longer the dimension of the code subspace, which is instead $2^{k+r}$ for some $r$.
Then $\stab$ has $n-k-r$ independent generators and $\gauge$ has $n-k+r$ independent generators.
Undetectable errors form the group $\cen\stab{\pauli_n}$.
They are, up to a product with an element of $\gauge$, logical operators.
Indeed,
\be
\frac{\cen\gauge{\pauli_n}}{\phase\stab} \simeq \frac{\cen\stab{\pauli_n}}{\gauge}.
\ee
Stabilizer subsystem codes can be described just by giving $\gauge$, since then $\stab$ is fixed up to signs that can be chosen arbitrarily.

For subsystem codes we no longer have such a straightforward interpretation in terms of a quantum Hamiltonian.
A natural possibility is to take a set of generators $\sset{g_i}$ of the gauge code as the Hamiltonian terms.
This will guarantee that all the energy eigenvalues of the system have at least a degeneracy $2^k$, since encoded operators commute with all Hamiltonian terms.
However, there is no reason for an energy gap to persist in the limit of large system sizes.

\subsection{Topological codes and lattice Pauli groups}

For a TSC we mean a family of codes such that, loosely speaking, stabilizer generators are local, and non-trivial undetectable errors are non-local.
In addition, for subsystem codes we impose that gauge generators are local.
As we will see, certain topological subsystem codes naturally give rise to a few global generators either in the stabilizer or in the gauge group, but in such a way that it is irrelevant from the encoding perspective.
There exist codes with local gauge generators and non-local stabilizer generators, such as Bacon-Shor codes \cite{bacon:2006:operator},
but we do not consider them topological.
Indeed, they lack an essential feature of topological codes: a non-zero error threshold in the limit of large code size.

In order to characterize TSCs in a more concrete manner, let us first consider ordinary stabilizer codes, not subsystem ones.
As discussed in the introduction, a TSC is usually given as a recipe to construct the check operators on a given lattice of qubits.
We assume that this recipe, in the bulk of the code ---e.g., ignoring possible boundaries or other defects--- takes the form of a local and translationally invariant (LTI) set of check operators.
Then we can consider an infinite version of the same lattice and construct using the recipe an infinite group with check operators as generators.
We call this a lattice Pauli group (LPG).
Translational invariance is supposed to hold only at a suitable scale.

In general, for a LPG we mean a group that has as generators a LTI set of Pauli operators on a certain lattice of qubits. 
By definition, the Pauli group $\pauli$ on the infinite lattice of qubits is itself a LPG, with generators all the single qubit Pauli operators.
The goal is to understand which properties distinguish those LPGs that correspond to a TSC, which we call topological stabilizer groups (TSG).
Then we can shift our study of TSCs to that of TSGs.

We define TSGs as LPGs $\stab$ satisfying $-1\not\in\stab$ and the topological condition
\be\label{topo_condition}
\cen\stab\pauli\propto\stab.
\ee
Since in the infinite lattice all the Pauli operators are local, this condition reads:
\mybox{Local undetectable errors do not affect encoded states.}
To motivate \eqref{topo_condition} in detail, let $\stab$ be the LPG corresponding to a certain TSC given by the family of stabilizer groups $\stab_l$.
Any operator $O\in\cen\stab\pauli$ has an analog $O'$ acting on the bulk of the $l$-th code in the family for $l$ sufficiently large.
Moreover, $O'\in\cen{\stab_l}\pauli$ due to the locality of the stabilizer generators, and no matter what detailed definition of locality we adopt, $O'$ should be local for sufficiently large $l$.
Since $O'$ is undetectable and the codes $\stab_l$ are topological, it follows that $\phi O'\in\stab_l$ for some phase $\phi$.
Then if $\phi O\not\in \stab$ we can safely add $\phi O$, and all its translations, to an LTI generating set of $\stab$, enlarging $\stab$.
The only question left then is whether one might have to keep adding larger and large generators, but this is not the case because $\cen\stab\pauli$ has a LTI set of generators.
This is guaranteed by \corref{centralizer}:
\mybox{The centralizer of a LPG is a LPG.}

Now consider the subsystem case.
We define  topological stabilizer subsystem groups (TSSGs) as LPGs $\stab$ satisfying $-1\not\in\stab$ and the topological condition
\be\label{topo_condition_gauge}
\cen{\cen\stab\pauli}\pauli\propto\stab.
\ee
This condition is trivially satisfied when the number of qubits is finite, but is not generally true for LPGs, as an example in the next section shows.
To motivate \eqref{topo_condition_gauge}, let $\stab$ be the LPG corresponding to a certain topological stabilizer subsystem code given by the family of stabilizer groups $\stab_l$ and gauge groups $\gauge_l$.
We first notice that it makes sense to define the gauge group as the LPG
\be\label{gauge}
\gauge:=\cen\stab\pauli.
\ee
Indeed, any operator $O\in\cen\stab\pauli$ has a local analog $O'\in\cen{\stab_l}\pauli$ acting on the bulk of the $l$-th code in the family for $l$ sufficiently large.
Since $O'$ is undetectable and the codes $(\stab_l,\gauge_l)$ are topological, it follows that $\phi O'\in\gauge_l$ for some phase $\phi$ ---but phases are unimportant in the gauge group.
Conversely, any operator $O'$ of $\gauge_l$ with support in the bulk corresponds to an operator $O\in\cen\stab\pauli$ due to the locality of the stabilizer generators.
Second, due to the definition \eqref{gauge} we have $\stab\subset\gauge$ and from that $\cen\gauge\pauli\subset \gauge$ and $\cen\gauge\pauli=\cen\gauge\gauge$.
Any operator $O\in\cen\gauge\gauge$ has an analog $O'\in\cen{\gauge_l}{\gauge_l}$ for $l$ sufficiently large, and thus we can if needed enlarge $\stab$ as above to get $\stab\propto\cen\gauge\pauli$, which does not affect \eqref{gauge}.

\subsection{Examples}

The most important example of TSG is the \emph{toric code}.
In an infinite square lattice, we place one qubit per link and define two kinds of operators. 
For each vertex $v$ there is a vertex operator $X_v=X_1X_2X_3X_4$ where $i=1,2,3,4$ identify the qubits lying on those links meeting at $v$. Similarly, for each face $f$ there is a face operator $Z_f=Z_1Z_2Z_3Z_4$ with $i=1,2,3,4$ now identifying the qubits lying at those links composing the boundary of $f$. The stabilizer $\stab$ is generated by all such vertex and face operators. Using homology theory it is not difficult to check that it satisfies \eqref{topo_condition}.

To illustrate TSSGs, we can consider a variation of the toric code that, although useless as a quantum code, has nontrivial structure as we will see.
For this \emph{subsystem toric code} $\stab$ is generated by face operators $Z_f$. 
It follows from the properties of the toric code that $\gauge$ has as generators vertex operators $X_v$ and single qubit $Z$ Pauli operators.
Moreover, \eqref{topo_condition_gauge} is satisfied.

\subsection{Independent generators and constraints}

Given a LTI set $G$ of generators of a LPG, a constraint is a subset of $G$ such that the product of its elements is proportional to the identity ---but none of the elements is.
We distinguish between local constraints, with a finite number of elements, and global constraints, with a possibly infinite number of elements.
The latter are well defined thanks to locality.
Elements of $G$ are independent if they are not part of a local constraint.

LPGs $\PA$ that are the centralizer of some other LPG $\PB$, so that $\PA=\cen\PB\pauli$, have very specific properties regarding independent generators and constraints:
\mybox{The centralizer of a LPG admits a LTI set of independent generators. Such a set of generators has a finite number of global constraints.}
These results are \lemref{global_constraints} and \thmref{independent}, but see \secref{centralizer}.

For general LPGs it is not true that there exists always a LTI set of independent generators.
As a counterexample ---that also shows that \eqref{topo_condition_gauge} is not satisfied for any abelian LPG---, one can consider a square lattice with one qubit per site, and the LPG generated by nearest neighbor Pauli operators of the form $Z_iZ_j$. 
There is one generator per link, and they are not independent because the product of all the link operators in a closed circuit is the identity operator.
Moreover, it is not difficult to see that it is not possible to choose a translationally invariant set of independent generators.
Similarly, the result does not remain true for higher dimensions.
A counterexample in dimension three is given by the 3D toric code \cite{dennis:2002:tqm}, where the product of the plaquette operators forming a closed surface is the identity operator.

In both of the examples offered above, the toric code and the subsystem toric code, the given generators for $\stab$ and $\gauge$ are independent, as one can easily check. 
The set of all face operators gives rise to a global constraint, and the same is true for the set of all vertex operators. 
Thus, there are in total four global constraints, including the trivial empty set.
In the subsystem case both the generators of $\stab$ and $\gauge$ give rise to the same number of global constraints, two.
This is always so.

\subsection{Charge}

A central notion in this paper is that of charge.
To introduce it, first consider the commutator of two Pauli elements
\be
\comm p q:= pqp\inv q\inv=\pm 1.
\ee
Given $p\in\pauli$ we can construct a group morphism $\funct {\comm p{\,\cdot\,}}{\pauli}{\pm 1}.$
Obviously $\phase$ is a subset of the kernel of $\comm p{\,\cdot\,}$.
So let us consider more generally, given a LPG $\PA$, the set of morphisms $\morph\PA$ from $\PA$ to $\pm 1$ that satisfy this property.
They form an abelian group with product $(\phi\phi')(a):=\phi(a)\phi'(a)$.
We are especially interested in the subgroup $\morphfin\PA\subset\morph\PA$ of those morphisms such that the preimage of $-1$ contains a finite number of elements of any LTI set of generators.
E.g., if $\PA$ is a TSG, then the elements of $\morphfin\PA$ represent states of the corresponding Hamiltonian model with a finite number of excitations.
Since $\comm p{\,\cdot\,}\comm q{\,\cdot\,}=\comm {pq}{\,\cdot\,}$, the elements of $\morphfin\PA$ of the form $\comm p\cdot$ form a subgroup, that we denote $\syndri\PA\pauli$.
We are interested in those elements of $\morphfin\PA$ that do not correspond to a Pauli operator.
This leads to define the charge group 
\be
\charge\PA :=\frac {\morphfin\PA}{\syndri\PA\pauli}.
\ee
The original motivation for this definition comes from the case in which $\PA$ is a TSG.
In that case the equivalence classes can be regarded as classes of excitation configurations up to local transformations.
I.e., charge is conserved in any given region if the only transformations allowed are those that affect only that region.
For a centralizer LPG the charge group is finite because it is dual to the group of global constraints.
Interestingly, in the case of TSSGs it is useful to consider both the stabilizer charge group $\charge\stab$ and the gauge charge group $\charge\gauge$, which turn out to be dual too.

Consider the toric code example. 
The $Z_e$ single qubit Pauli operator corresponding to a given link $e$ anticommutes with the two vertex operators corresponding to the vertices $v, v'$ on the ends of the link, and commutes with all other vertex and face operators.
Moreover, given any even set of vertices $v_1,\dots,v_{2n}$ we can find a set of links $E$ such that $\prod_E Z_e$ anticommutes only the vertex operators $X_{v_i}$, $i=1,\dots,2n$ ---indeed, consider any set of links forming $n$ `strings', with the $j$-th string linking the $2j-1$ and $2j$ vertices. 
The reasoning holds for face operators up to duality in the lattice.
Therefore, the elements of $\syndri\stab\pauli$ are exactly those elements of $\morphfin\stab$ with value $-1$ on an even number of vertex operators and an even number of face operators. 
There are then four charges in $\charge\stab$, corresponding to the two possible values of vertex and face parities, in agreement with the number of global constraints.

In the subsystem toric code we only have face generators in $\stab$, so that the number of charges in $\charge\stab$ is two as expected. 
As for $\charge\gauge$, $Z_e$ pauli operators can be individually `flipped' with $X_e$ without affecting other generators.
Thus, $\charge\gauge$ has two elements corresponding to the parity of vertex operators with value $-1$.

\subsection{Strings and topological charge}

\begin{figure}
\begin{center}
\includegraphics[width=.45\columnwidth]{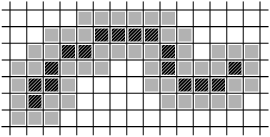}
\end{center}
\caption{
We represent sites as the faces of a square lattice.
The striped sites form a path.
Together with the grey sites they form the support of a corresponding string operator.}
\label{fig:string}
\end{figure}

\begin{figure}
\psfrag{(a)}{(a)}
\psfrag{(b)}{(b)}
\psfrag{(c)}{(c)}
\psfrag{(d)}{(d)}
\psfrag{p}{$p$}
\psfrag{pp}{$p'$}
\psfrag{q}{$q$}
\psfrag{r}{$r$}
\psfrag{qp}{$q'$}
\psfrag{rp}{$r'$}
\psfrag{u}{$u$}
\psfrag{v}{$v$}
\psfrag{w}{$w$}
\psfrag{x}{$x$}
\begin{center}
\includegraphics[width=.9\columnwidth]{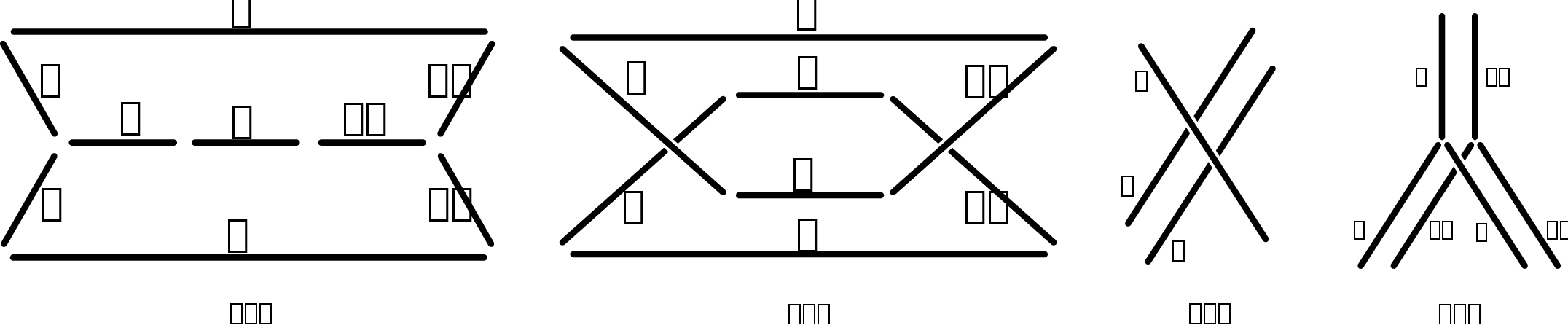}
\end{center}
\caption{
Geometries involved in the definition of string commutation rules.
}
\label{fig:commutation}
\end{figure}

Consider the charge group $\charge\gauge$ of a TSSG.
Given a LTI set of generators of $\gauge$, let us say that the support of $\phi\in\morph\gauge$ is the set of lattice sites for which there exist a generator $g$ with support in the site and $\phi(g)=-1$.
Since the number of charges is finite, we can coarse grain the lattice till each site can hold any of the charges, in the sense that there exists for any site and for any charge $c\in\charge\gauge$ a morphism $\phi\in\morphfin\gauge$ with charge $c$ and support at that site.
Consider two such morphisms $\phi,\phi'$ with the same charge $c$ but support at different sites $\sigma$, $\sigma'$.
Their product has trivial charge so that $\phi\phi'=\comm p {\,\cdot\,}$ for some $p\in\pauli$.
Moreover, if we coarse grain enough we can choose $p$ to have support along a string-shaped region ---a thickened path, see \fref{string}--- connecting $\sigma$ and $\sigma'$.
This is a matter of choosing a suitable set of operators $p_i$ for adjacent sites ---one per valid $\phi\phi'$ pair---, coarse graining till all the $p_i$ fit in sites adjacent to those involved, and finally forming string operators by composing these $p_i$ operators.
We say that $p$ is a string operator with charge $c$ and endpoints $\phi$, $\phi'$.
In the case of a TSG $\stab$ and from the perspective of the Hamiltonian model, a string operator $p$ of charge $c$ transports a charge $c$ from one endpoint to another.
E.g., when applied to a ground state $p$ creates a pair of excitations with charge $c$ on its endpoints.

Our next goal is to study the commutation properties of strings, which for many geometries only depend on their charges. 
We start considering three strings $p,q,r$ common charge $c$ and a common endpoint, see \fref{commutation}(a). 
The quantity 
\begin{equation}\label{spin_inf}
    \spin c:=\comm {pq}{pr} = \comm pq \comm pr \comm qr
\end{equation}
only depends on the charge $c$, so that it is well defined.
To check this, consider an alternative set of strings $p',q', r'$ as in \fref{commutation}(a), together with the auxiliary strings $u,v,w$. 
Let $s= pqp' q' uv$ and $t= prp' r' uw$. 
Then $\comm s t = \comm {pq}{pr} \comm {p' q'}{p' r'}$, but $s,t\in \cen\gauge\pauli\propto \stab$ and thus $\comm s t=1$ because $\stab$ is abelian. 
We can interpret \eqref{spin_inf} as the topological spin of the charge $c$ \cite{levin:2003:fermions}. 
We say that a charge $c$ is bosonic if $\spin c=1$ and fermionic if $\spin c=-1$.

Now take two crossing strings $p,q$ with charges $c,c'$, as in \fref{commutation}(b). 
The mutual statistics of the charges $c$, $d$ is given by the quantity \cite{levin:2003:fermions}
\begin{equation}\label{mutual}
    \mutual c d:=\comm p q.
\end{equation}
It only depends on the charges $c,d$, so that it is well defined. 
The proof is analogous to the previous one, now with the geometry of \fref{commutation}(b).
If $\mutual c d=-1$ mutual statistics are said to be semionic, otherwise they are trivial.
If in \fref{commutation}(c) $p$ has charge $c$, $q$ has charge $d_1$ and $r$ has charge $d_2$, it follows that
\be
\mutual c{d_1d_2}=\mutual c {d_1}\mutual c {d_2},
\ee
since $\mutual c{d_1d_2}=\comm {p}{qr}=\comm {p}{q}\comm {p}{r}=\mutual c {d_1}\mutual c {d_2}$. 

The topological spin and mutual statistics are related. 
Consider the geometry of \fref{commutation}(d), where the strings $p,q,r$ have charge $c$ and the strings $p', q', r'$ have charge $d$. 
The figure illustrates that
\begin{equation}
    \spin {cd}=\spin c\spin {d}\mutual{c}{d},
\end{equation}
since $\spin {cd}=\comm {pp'q q'}{pp'r r'}=\comm {pq}{pr}\comm {p' q'}{p' r'}\comm r {q'}=\spin c\spin {d}\mutual{c}{d}$. 
In particular $\mutual c c=1.$

In summary, the charges in $\charge\gauge$ can be regarded as the topological charges of an anyon model, with fusion given by the charge group product.
But the topological structure does not stop there, because it is possible to define $\mutualx$ for a string with charge in $\charge \stab$ and another with charge in  $\charge \gauge$.
This works because $\cen\stab\pauli \subset\cen{\cen\gauge\pauli}\pauli$, which is all we need for the construction in \fref{commutation}(b) to make sense.
We cannot define $\mutualx$ for two strings with charges in $\charge\stab$, nor can we define $\spinx$ for charges in $\charge\stab$, unless of course we are dealing with a TSG rather than a general $TSSG$.
Therefore, we should regard the charges in $\charge\gauge$ as anyons, and the charges in $\charge\stab$ as fluxes that interact topogically with these anyons through an Aharonov-Bohm effect.

There is a natural morphism $\funct {\charge\iota} {\charge \gauge}{\charge\stab}$ derived from the restriction of morphisms from $\gauge$ to morphisms from $\stab$.
It preserves commutation in the sense that $\mutual c d =\mutual {\charge\iota (c)}{d}$ for $c, d\in\charge\gauge$.
This is an important ingredient in the construction of canonical charge generators given in \thmref{charges}:
\begin{align}
\charge\gauge=\sget {c_1,\dots,c_\alpha, d_1,\dots,d_\alpha,e_1,\dots,e_\beta},\qquad\quad\tilde c_i = \charge\iota (d_i), \nl
\charge\stab=\sget {\tilde c_1,\dots,\tilde c_\alpha, \tilde d_1,\dots,\tilde d_\alpha, \tilde e_1,\dots,\tilde e_\beta},\qquad\quad \tilde d_i = \charge\iota (c_i),
\end{align}
where $i=1,\dots,\alpha$, the $e_i$ generate the kernel of $\charge\iota$ and all gauge charge generators are bosons except possibly $e_1$, $c_1$ and $d_1$, always with $\spin{c_1}= \spin{d_1}$.
Among gauge charge generators the only nontrivial mutual statistcs are $\mutual {c_i}{d_i}=-1$.
The mutual statistics between gauge and stabilizer charges gives rise to the duality.
Indeed, relabeling $\charge\gauge=\sget{c_1,\dots,c_{2\alpha+\beta}}$, $\charge\stab=\sget{\tilde c_1,\dots,\tilde c_{2\alpha+\beta}}$,
\be
\mutual {c_i}{\tilde c_j}=1-2\delta_{ij}, \qquad i,j=1,\dots,2\alpha+\beta.
\ee

The anyon model attached to $\charge\gauge$ only has a few parameters, $\alpha$, $\beta$, $\chi=\spin{c_1}$ and $f=\spin{e_1}$.
We call this the characteristic of the TSSG.
The anyon model is chiral when $\chi=-1$, and when $f=-1$ there are fermions among the $e_i$.
There exist TSSGs covering all possible combinations of these parameters, as shown in \secref{canonical}.
Since the subsystem toric code only has a nontrivial element in $\charge\gauge$ it is an example of a TSSG with $\alpha=0$ and $\beta=1$.
Moreover, string operators are products of single qubit $Z$ Pauli operators, and thus $f=1$.

In the case of TSGs $\beta=0$, leaving the parameters $\alpha$ and $\chi$.
The toric code has $\alpha=1$ and $\chi=1$, and thus combining $N$ toric codes together we get a code with $\alpha=N$ and $\chi=1$.
Chiral codes are not likely to exist \cite{kitaev:2003:ftanyons}.

\subsection{Code structure and homology}

\begin{figure}
\psfrag{a}{\scriptsize$c_1$}
\psfrag{b}{\scriptsize$c_\alpha$}
\psfrag{c}{\scriptsize$d_1$}
\psfrag{d}{\scriptsize$d_\alpha$}
\psfrag{e}{\scriptsize$e_1$}
\psfrag{f}{\scriptsize$e_\beta$}
\psfrag{g}{\scriptsize$\tilde e_i$}
\begin{center}
\includegraphics[width=.45\columnwidth]{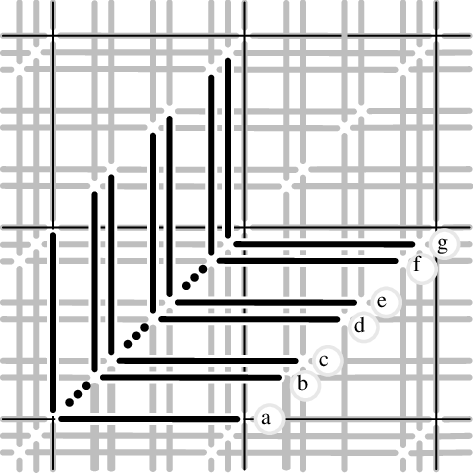}
\end{center}
\caption{
Framework of string segments. Each charge contributes a square lattice.
}
\label{fig:framework}
\end{figure}

With string operators and their geometrical commutation relations at hand, it is easy to uncover the general structure of TSSGs.
Given a TSSG, we can put it in a standard form ---its prototype is the toric code--- through the following process.
The first step is to find a set of canonical charges.
Then we construct a translationally invariant framework of string operators, each in the form of a straight segment.
For each charge there is a square lattice of segment operators, spatially disposed as indicated in \fref{framework}.
With suitable adjustments, the commutation relations between these segments operators are fixed by $\spinx$ and $\mutualx$.
Strings with charge $c_i$ ($d_i$) \emph{also} have charge $\tilde d_i$ ($\tilde c_i$), and we adjust segment operators with charge $e_i$ so that they are elements of $\gauge$
---in the technical part we impose more structure on those with charge $\tilde e_i$, but it does not affect the present discussion, only that of local equivalence.

\begin{figure}
\psfrag{x1}{$X_{2i-1}$}
\psfrag{x2}{$X_{2i}$}
\psfrag{z1}{$Z_{2i-1}$}
\psfrag{z2}{$Z_{2i}$}
\begin{center}
\includegraphics[width=.9\columnwidth]{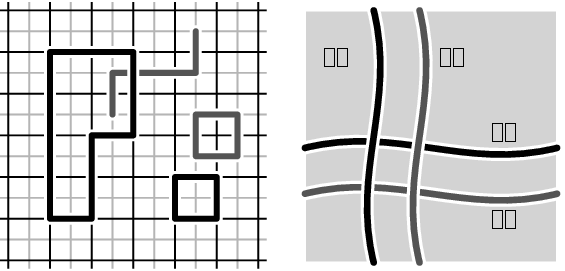}
\end{center}
\caption{
(Left) A square lattice and its dual.
String operators can be visualized as collections of segments on these two lattices.
We show for such strings: a direct plaquette, a dual plaquette, a direct boundary and a dual open string.
The last two cross once.
(Right) Logical operators corresponding to strings with charge $c_i$, in black, and strings with charge $d_i$, in grey, for any $i=1,\dots,\alpha$.
The topology is that of a torus, with opposite boundaries identified.
}
\label{fig:strings}
\label{fig:torus}
\end{figure}

Rather than the intricate framework of \fref{framework}, it is enough to visualize, for each pair of dual charges $c_i,d_i$ (or $e_i,\tilde e_i$) a single square lattice and its dual, see \fref{strings}.
Segment operators with charge $c_i$ or $e_i$ belong to the direct lattice, and those with charge $d_i$ or $\tilde e_i$ to the dual lattice.
We can form string operators by taking products of segment operators with the same charge.
Operators from different lattices commute, and the commutation properties of string operators are simply dictated by $\mutualx$ and $\spinx$.
E.g., a direct string with charge $c_i$ anticommutes with a dual string of charge $d_i$ iff the two strings cross an odd number of times.

Closed strings operators ---or boundary strings since the homology is trivial in the plane--- belong to $\stab$ or at least to $\gauge$ ---the former is the case of $\tilde e_i$ closed strings.
Any boundary operator is a product of `plaquette operators', each of them the product of the segment operators forming a plaquette.
Crucially, it is possible to find LTI sets of independent generators of $\gauge$ and $\stab$ that include these plaquette operators, which are the only ones subject to constraints ---now obvious.
In particular, those plaquette operators formed by $\tilde c_i, \tilde d_i$ or $\tilde e_i$ ($c_i$, $d_i$ or $e_i$) strings are generators of $\stab$ ($\gauge$).

The trivial stabilizer generators, i.e., those that are not plaquette operators, play no significant role, in the sense that it is always possible to find a local transformation that removes them as disentangled qubits. 
After this removal, we are left with a system where every Pauli operator can be decomposed as $sg$, where $s$ is a product of string operators and $g$ is a product of trivial gauge generators.

\subsection{Back to codes}

The homological perspective on string operators becomes most relevant when we go back to a code in a finite lattice, as we sketch here.
The simplest possible way to do this is by considering a finite lattice with periodic boundary conditions, so that the topology is that of a torus.
The decomposition in terms of string operators and trivial gauge generators still holds ---the lattice must be large enough, though, in terms of the support of the segment operators and trivial gauge generators.
But now closed string operators need not be boundaries, and this is an all-important feature.

Regarding gauge and stabilizer generators, each `global constraint' becomes now a constraint of the form $\prod_i p_i \propto 1$, with $p_i$ running over plaquette operators of a given lattice, direct or dual, so that we can remove one such plaquette from the corresponding generating sets.
But this still does not give rise in general to a proper stabilizer and gauge group pair.
Indeed, when $\beta\neq 0$ we have to either add nontrivial cycle strings with charge $e_i$ to $\stab$ or nontrivial cycle string with charge $\tilde e_i$ to $\gauge$.
This does not affect the number of encoded qubits ---but see below!.
In fact, apart from this detail everything works as in a toric code.
Each pair $(c_i,d_i)$ contributes two logical qubits, so that $k=2\alpha$, and logical operators correspond to nontrivial cycles, see \fref{torus}.

From the above discussion it seems that the existence of the $\beta$ charges $e_j$ serves no purpose from a coding perspective. 
For quantum codes this is indeed true in the sense that they contribute no encoded qubits. However, interestingly they do provide encoded \emph{bits}! 
The eigenvalues of nontrivial closed strings with charge $\tilde e_i$ label the classical states of the memory ---and should be regarded as elements of $\gauge$---, and nontrivial closed string operators with charge $e_i$ act as bit flip operators. 
Detecting errors thus amount to look for the endpoint of open string operators with charge $e_i$. I.e., the stabilizer ---which must include only strings of trivial homology---, still provides the error syndrome through the eigenvalues of its generators. 
The subsystem toric code exemplifies all this.

\subsection{Equivalence}

We say that two TSSGs are locally equivalent if they can be related by a combination of LTI Clifford transformations, lattice coarse graining and addition/removal of disentangled qubits in a translationally invariant way.
Such local transformations preserve the topological charge structure, and the converse also holds in some cases, see \secref{equivalence}:
\mybox{Non-chiral TSSGs are locally equivalent iff they have the same characteristic.}
This result, true also for chiral TSGs, is a consequence of the structure discussed above, together with the existence of certain `elementary' TSSGs and the ability to `complete'
partially defined LTI Clifford transformations, see \lemref{extension}.

\subsection{Centralizers and 2D topology}\label{sec:centralizer}

\begin{figure}
\psfrag{(a)}{\,\,\,(a)}
\psfrag{(b)}{\,\,\,(b)}
\psfrag{(c)}{\,\,\,(c)}
\psfrag{(d)}{\,\,\,(d)}
\begin{center}
\includegraphics[width=.9\columnwidth]{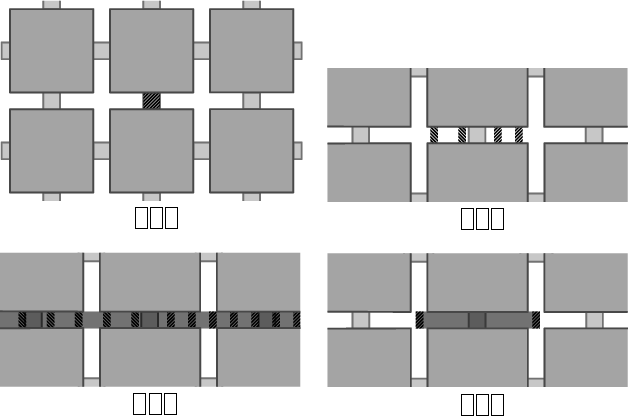}
\end{center}
\caption{
The regions involved in the construction of an LTI set of independent generators.
}
\label{fig:regions}
\end{figure}

To finish this summary of results, let us explain the role played by the topology of the plane and the fact that the LPG is a centralizer of a LPG for the existence of an LTI set of independent generators.
Let the support of a constraint be the union of the support of its elements.
Consider a LPG $\PA$ of the form $\PA=\cen\PB\pauli$.
In order to recover a LTI set of independent generators from an arbitrary LTI set of generators of $\PA$, all we need to do is remove in a translationally invariant way a sufficient amout of generators.
If a generator is part of a constraint, it can be removed.
The removal of constrained generators with support in the regions $A$, marked with dark grey in \fref{regions}(a), can be done without much difficulty.
It is enough to choose a large enough separation between these regions, so that generators in different regions are subject to independent constraints, which allows a translationally invariant removal.
After the removal it is unclear whether the generators with support in different regions $B$, marked with light grey in \fref{regions}(a), are also subject to independent constraints.
But given a constraint $R$ with support, say, in the region $B_0$ marked with stripes in \fref{regions}(a), we can find ---as shown below--- another constraint $R_0$ containing the same generators as $R$ in $B_0$ but with no support in the rest of regions $B$.
This shows that generators with support in different regions $B$ are indeed subject to independent constraints.
Then we are left with the regions $C$, marked with white in \fref{regions}(a), but these are disconnected from each other and we can deal with them as follows.
The constraints that are left can only include generators with support in regions $C$ now, and since they are distant enough and the generators are local, the restriction of any constraint to a particular region $C$ is still a constraint.
Thus, generators with support in different regions $C$ are also subject to independent constraints.

Thus, the crucial point is the existence, for any given constraint $R$ with support in regions $B$ and $C$, of a constraint $R_0$ with the same support as $R$ in $B_0$ but no support in other regions $B$.
To guarantee it, we take the size of the regions $A$ as large as needed, so that near $B_0$ any such $R$ has support only in a long and narrow band.
Then, up to translations, the elements in $R$ must repeat along the band at certain stretches of a length larger than the size of the generators.
At each side of the band we pick two such repeated stretches, marked with stripes in \fref{regions}(b), and form a new constraint $R_{\infty}$ that extendes to infinity by repeating the set of constraints in between the repeated stretches up to translations.
This is depicted in \fref{regions}(c), where the places where the repeated stretches overlap are marked with stripes and the support of $R_{\infty}$ is darkened.
We can then consider a subset of $R_{\infty}$, containing only the elements on the darkened region in \fref{regions}(d), which produces a set of generators $S$ with product $p_1p_2\in\pauli$, where $p_i$ have each support in one of the two regions marked with stripes.
The all-important property of these two regions is that they are sufficiently far away from region $A$.
Since $p_1p_2\in\cen\PB\pauli$ and $\PB$ has local generators, it follows that $p_1,p_2\in\cen\PB\pauli$ since the supports of $p_1$ and $p_2$ are sufficiently far appart.
There exist sets of generators $S_i$ of $\PA$, $i=1,2$, with supports in a small neighborhood of the support of $p_i$ and such that their product is proportional to $p_i$.
Then to obtain $R_0$ it is enough to take the symmetric difference of $S$, $S_1$ and $S_2$.

\section{Basic definitions}\label{sec:basics}

This section intends to put together most of the constructions needed later, with the aim of both presenting them and serving as a reference for later sections.
It includes a list of basic examples of topological codes.

\subsection{Notation}

\begin{sep}{Sets} 
For integers we use the notation $\N_0:=\sset 0 \cup \N^\ast$. $|S|$ denotes the cardinality of a set $S$, $\powerset S$ its power set and $\powersetfin S:=\set{s\in \powerset S}{|s|<\aleph_0}.$
Given a mapping $\funct f A B$ and any subsets $a\subset A$, $b\subset B$, we denote by $f[a]\subset B$ the image of $a$ and by $f\inv[b]\subset A$ the preimage of $b$.
Given any set $S$, we regard $\powerset S$ as the abelian group with addition given by the symmetric difference $S_1\add S_2:=S_1\cup S_2-S_1\cap S_2.$
\end{sep}

\begin{sep}{2D lattice} 
Since we are interested in topological aspects, we consider without loss of generality square lattices.
Sites are labeled as usual with integer coordinates $\site j k\in\Z^2$, with $\Sigma=\Z^2$ the set of all sites, and $\Site:=\site 0 0$.
A block is a set of sites of the form 
\be
\bl m n {m'} {n'} := \set{\site j k} {m\leq j< m'; n\leq k< n'}.
\ee
Given a set of sites $\gamma\in\powersetfin\Sigma$, $\range\gamma$ denotes the smallest $k\in \N_0$ such that $\gamma\subset\bl m n {m+k}{n+k}$ for some $m,n\in\Z$.
For $\gamma\subset\Sigma$, $l\in\N_0$, define a `thickened' set 
\be
\thk^l (\gamma) :=\bigcup_{\site i j\in\gamma} \bl {i-l}{j-l} {i+l+1}{j+l+1}.
\ee
\end{sep}

\begin{sep}{Connectedness}
Two sites $\site i j$ and $\site m n$ are adjacent if $|m-i|\leq 1$ and $|n-j|\leq 1$.
An ordered list of sites $(\sigma_k)_{k=1}^n=(\sigma_0,\cdots,\sigma_n)$ where $\sigma_k$ is adjacent to $\sigma_{k+1}$ for all $k=1,\dots,n$ is a path from $\sigma_0$ to $\sigma_n$.
We define the inverse $\gamma\inv$ of a path $\gamma$ in the usual way, an also the composition of paths $\gamma'\circ\gamma$, where the last site of $\gamma$ is the first site of $\gamma'$.
A set of sites $\gamma\subset\Sigma$ is connected if for any $\sigma,\sigma'\in\gamma$ there exists a path $(\sigma_k)_{k=1}^n$ with $\sigma_0=\sigma$, $\sigma_n=\sigma'$ and $\sigma_i\in\gamma$, $i=1,\dots,n$.
Given an ordered set of sites $(\sigma_k)_{k=1}^n=((a_k,b_k))_{k=1}^n$, let $\sigma_k':=(a_{k+1},b_k),$ $k=1,\dots,n-1$ and define 
$
\pth {\sigma_1,\dots,\sigma_n}
$
to be the path starting at $\sigma_1$, moving in a straight line to $\sigma_1'$, then in a straight line to $\sigma_2$, then $\sigma_2'$,\dots, and ending at $\sigma_n$.
Sometimes we identify a path $(\sigma_i)$ with the set $\sset{\sigma_i}$.
\end{sep}

\begin{sep}{Homology}
Let $\graph$ be an infinite square lattice.
$\faces$, $\dfaces$, $\edges$ and $\dedges$ denote, respectively, the set of faces, dual faces, edges and dual edges.
We represent $\Z_2$ chains of these objects as elements of $\powerset\faces$, $\powerset\dfaces$, $\powerset\edges$ and $\powerset\dedges$.
$\partial$ denotes the usual boundary operation, which we express in particular as
\begin{align}\label{homology}
\morphismd {\partial}{\partial}{\powerset\faces}{\powerset \edges}{\powerset\dfaces},\qquad
\morphismd {\partial}{\partial}{\powerset\dfaces}{\powerset \dedges}{\powerset\faces}.
\end{align}
Given an edge $e$, it dual is denoted $e^\ast$.
We name the elements $\hedge,\vedge\in\edges$, $\face\in\faces, \face^\ast\in\dfaces$ according to \fref{lattice}.
For faces, the symbol $^\ast$ is just a label.
\end{sep}

\begin{figure}
\psfrag{f}{$\face$}
\psfrag{fa}{$\face^\ast$}
\psfrag{e1}{$\hedge$}
\psfrag{e2}{$\vedge$}
\psfrag{e1a}{$\hedge^\ast$}
\psfrag{e2a}{$\vedge^\ast$}
\begin{center}
\includegraphics[width=.45\columnwidth]{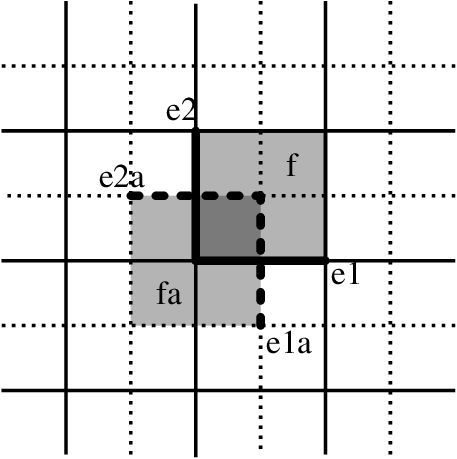}
\end{center}
\caption{
Naming some elements of $\graph$.
}
\label{fig:lattice}
\end{figure}

\begin{sep}{Qubits and operators}
Sites hold a common and finite number of qubits $n$.
Let $X_{j,k}^\alpha$, $Z_{j,k}^\alpha$ denote single qubit $X$ and $Z$ operators acting on the qubit with label $\alpha=1,\dots,n$ on the site $\site j k$. 
They generate for each site $\site j k$ an algebra of operators $A_{\site j k}$.
Consider the infinite tensor product $A:=\otimes_{\site j k} A_{\site j k}$, with elements $\otimes_{\site j k} a_{\site j k}$ with all but a finite number of the product terms identical $a_{\site j k}=\id$.
The Pauli group $\pauli\subset A$ is $\pauli:=\phase\sget{X_{ij}^\alpha,Z_{ij}^\alpha}_{i,j,\alpha}.$
Given a Pauli operator $p\in\pauli$, $\supp p$ denotes the set of sites containing qubits on the support of $p$, i.e., on which $p$ acts nontrivially.
Its size is measured by $\range p :=  \range{\supp p}$.
For $\gamma\subset\Sigma$ we let $p|_\gamma$ be the restriction of $p$ to qubits in $\gamma$, defined in the obvious way and only up to a global phase. 
Given $p,q\in\pauli$ their commutator is $\comm p q:= pqp\inv q\inv=\pm 1.$
For $\mathcal A,\mathcal B\subset \pauli$, the centralizer of $\mathcal A$ in $\mathcal B$ is
\be
\cen {\mathcal A}{\mathcal B}:=\set {b\in\mathcal B}{\forall a\in \mathcal A \quad \comm a b=1 }.
\ee
We write $\mathcal A\propto \mathcal B$ meaning $\phase\mathcal A=\phase\mathcal B$.
\end{sep}

\begin{sep}{Pauli group subsets}
For $S\subset\pauli$, $\gamma\subset\Sigma$ and $q\in\pauli$ we define
\begin{align}
\supp S:=\bigcup \suppi S,\qquad&
S|_{\gamma}:=\set{s\in S}{\supp s\subset\gamma},\nl
S\overlap \gamma:=\set{s\in S}{\supp s\cap\gamma\neq\emptyset},\qquad&
\comm S q := \prod_{p\in S_q } \comm p q,
\end{align}
where $S_q=\set{p\in S}{\supp p\cap\supp q\neq\emptyset}$ and only for $S_q$ finite the definition holds.
We set $\range S := \range{\supp S}.$
We say that a set of generators $\PA_g$ of a group $\PA\subset \pauli$ is independent if the product of any finite subset of nontrivial generators is nontrivial 
---trivial elements of $\pauli$ are those in $\phase$.
We define a `product' group morphism
\functmap {\prox}{\powersetfin \pauli} {\pauli/\phase}{S}{ \phase\prod_{p\in S}p.}
\end{sep} 

\begin{sep}{Morphisms} 
Let $\PA\subset\pauli$ be a group. The group morphisms $\funct \phi \PA \sset{1,-1}$ such that $\phi(i)=1$ form an abelian group $\morph {\PA}$ with
$
(\phi\phi')(a):=\phi(a)\phi'(a).
$
Given a subset $\PA_g$ of $\PA$, we define a group morphism
\functmap  {\negrx {\PA_g}} {\morph \PA}{\powerset{\PA_g-\phase}}{\phi}{\PA_g\cap\phi^{-1}[-1],}
which has an inverse $\morx{\PA_g}$ if $\PA_g$ is an independent set of generators of $\PA$.
The support of $\phi\in\morph\PA$ with respect to $\PA_g$ is
\be
\suppr {\PA_g} \phi := \supp {\negr {\PA_g} \phi}.
\ee
We define the subgroup of morphisms with finite support
\be
\morphfin {\PA_g}:=\set{\phi\in\morph {\sget{\PA_g}}}{|\suppr {\PA_g}\phi|<\aleph_0}.
\ee
Given $\phi\in\morphfin {\PA_g}$, we will denote with the same name the group morphism 
\functmap \phi {\powerset {\PA_g}}{\sset{1,-1}}{S}{(-1)^{| S \cap \negr{\PA_g}\phi|}}
The commutation structure gives rise to morphisms that need a name,
\functmapdouble {\syndrx{\PA}}{\pauli} {\morph {\PA}}{p}{ \comm p\cdot} {\powerset\pauli} {\morph {\PA}}{S}{ \comm S\cdot}
For $p\in \pauli$ or $p\in\powerset{\pauli}$ we set $\suppr{\PA_g}p:=\suppr {\PA_g}{\comm p\cdot}.$
\end{sep}

\begin{sep}{Translation operators} 
We define translation operators for sites setting
\be
\trans m n \site j k:=\site {j+m}{k+n}.
\ee
The action on edges or faces of $\graph$ is analogous.
For Pauli operators we define $\funct{\trans m n}{\pauli}{\pauli}$ as group morphisms with $\trans m n (i\id):=i\id$ and
\be
\trans m n (\sigma_{j,k}^\alpha):=\sigma_{j+m,k+n}^\alpha,\qquad \sigma=X,Z.
\ee
For $\phi\in\morph \PA$ we set
\be
(\trans mn (\phi)) (\cdot):=\phi( \trans {-m}{-n} (\cdot)).
\ee
To generalize translation to sets of translatable objects, we recursively define whenever it makes sense
\be
\trans m n (A) :=\set {\trans m n (a)}{a\in A}.
\ee
Finally, we set for $d\in\N^\ast$
\be
\strans d (\cdot):=\bigcup_{m,n\in\Z} \trans {md} {nd} (\cdot).
\ee
\end{sep}

\begin{sep}{Coarse graining and composition} 
A qubit lattice can be coarse grained by identifying each block 
$\bl {ml} {nl} {(m+1)l} {(n+1)l}$
as the site $\site m n$ of the new lattice, for some $l\in\N^\ast$.
We denote the effect of coarse graining on a given object $x$ by $\coarse l x$, be it an operator, a set of sites, a group or any other structure.

Two disjoint qubit lattices can be put together to form a single lattice.
If $\PA$ is a Pauli subgroup on the first lattice and $\PB$ is a Pauli subgroup on the second lattice we define a new group
$
\PA\otimes\PB:=\set{a\otimes b}{a\in\PA,b\in \PB}.
$
\end{sep}

\subsection{Lattice Pauli morphism}

Consider Pauli groups $\pauli_i$ over square lattices.

\begin{defn}
A lattice Pauli (iso)morphism  is a group (iso)morphism  $\funct F{\pauli_{1}}{\pauli_{2}}$
such that $F(i)=i$ and, for any $m,n\in\Z$ and some $d\in\N^\ast$,
\begin{equation}
\trans {md} {nd} \circ F = F\circ \trans {md} {nd}.
\end{equation}
We say that $F$ has period $d$ and define
\be
\range F := \max_{\sigma\in\bl 0 0 dd} \range {\sset\sigma\cup\supp{F[\pauli_1|_{\sset\sigma}]}}.
\ee
\end{defn}
Such a morphism is always injective. $\range F$ is finite and
\be
\supp{F(p)}\subset\thk^{\range F-1}(\supp p)
\ee
for every $p\in\pauli_1$.
By coarse graining we can always have $d=1$ and $\range F = 2$.
If $n_i$ is the number of qubits per site in $\pauli_i$, $n_1\leq n_2$ and the equality holds iff $F$ is an isomorphism.
To see way, take $\gamma=\bl 0 0 L L$ and let $g_1=2n_1L^2$ be the number of generators of $\pauli_1':=\pauli_1|_{\gamma}$ and $g_2=2n_2(L+2\range F-2)^2$ be the number of generators of $\pauli_2':=\pauli_2|_{\thk^{\range F-1}(\gamma)}$.
Then $F[\pauli_1']\subset\pauli_2'$ gives $g_1\leq g_2$ and thus in the limit of large $L$ we get the desired result.
The following `extension lemma' will be very useful later and is true for any spatial dimensionality.

\begin{lem}\label{lem:extension}
For every lattice Pauli morphism $\funct F{\pauli_{1}}{\pauli_{3}}$ there exist $\pauli_2$ and a lattice Pauli isomorphism $\funct {F_{\mathrm ext}}{\pauli_{1}\otimes\pauli_{2}}{\pauli_{3}}$ with the same period and
\be
F_{\mathrm ext} (p\otimes\id) = F(p).
\ee
\end{lem}

Since $F_{\mathrm ext}$ preserve centralizers, there exists $\funct G {\pauli_2}{\pauli_3}$ with
\be\label{extension_FG}
F_{\mathrm {ext}}(p\otimes q)= F(p)G(q),\quad G[\pauli_2]=\cen{F[\pauli_1]}{\pauli_3}.
\ee

\begin{proof}
In this proof we need to deal with `regions' that are collections of qubits, rather than of complete sites.
So let $\Sigma(\alpha)$ denote all qubits with label $\alpha$ and $\pauli_i^\alpha:=\pauli_i|_{\Sigma(\alpha)}$.
We can assume w.l.o.g. $d=1$.
To avoid trivial cases we assume $1\leq n_1<n_3$, and set $n_2=n_3-n_1$.
If we can construct a lattice Pauli isomorphism $\funct {F'}{\pauli_{3}}{\pauli_{1}\otimes\pauli_2}$ with period 1 and such that $F'(F(\cdot)) = F''(\cdot)\otimes\id$ for some isomorphism $\funct {F''}{\pauli_{1}}{\pauli_{1}}$, the result follows taking $F_{\mathrm ext}= {F'}\inv\circ(F''\otimes {\bf I}_{\pauli_2})$.
We construct $F'$ as a composite $F'=F_b\circ F_a$. 
In turn, $\funct {F_a}{\pauli_{3}}{\pauli_{3}}$ is composed of two kinds of maps. 
The first kind is that of single qubit Clifford gates, preserving translational invariance.
The second corresponds to CNot gates applied to pairs of qubits with different label $\alpha$ ---but possibly in different sites--- again preserving translational invariance.
Composing such `moves', it is a routine computation ---we are essentially back to a finite Pauli group case, with the restriction of not being able to perform CNot gates between qubits with the same label--- to construct an isomorphism $F_1$ such that $F_1'[\pauli_1^1|_{\Site}]\subset\pauli_3^1$ for $F_1':=F_1\circ F$.
Then we have $F_1'[\pauli_1^1]=\pauli_3^1$ ---translational invariance only implies inclusion, but the restriction of $F_1'$ to $\pauli_1^1$ and $\pauli_3^1$ is actually an isomorphism, since $\pauli_1^1$, $\pauli_3^1$ can be regarded as lattices with $1$ qubit per site. 
Now let $u\in F_1'[\pauli_1^1]$, $v\in F_1'[\pauli_1^\alpha]$, $1<\alpha\leq n_1$. Then $1=\comm u v = \comm u{v|_{\Sigma(1)}}$ and thus $v|_{\Sigma(1)}\subset\cen {\pauli_3^1} {\pauli_3^1}\propto\phase$. 
The idea is then to repeat the procedure and successively construct for $i=2,\dots,n_1$, isomorphisms $F_i$ from $F_{i-1}$ without using any elementary moves involving qubits with labels $\alpha<i$, so that $(F_i\circ F)[\pauli_1^\beta]=\pauli_3^\beta$ for $\beta\leq i$.
It suffices then to set $F_a=F_{n_1}$ and let $\funct {F_b}{\pauli_{3}}{\pauli_{1}\otimes\pauli_2}$ be the isomorphism such that $F_b(\sigma_{i,j}^\alpha\sigma_{i,j}^{n_1+\beta})=\sigma_{i,j}^\alpha\otimes\sigma_{i,j}^{\beta}$, $\alpha=1,\dots,n_1$, $\beta=1,\dots,n_2$.
\end{proof}

\subsection{Lattice Pauli groups}

\begin{defn}
Let $S$ be a set such that $\suppi S$ and $\trans m n [S]$ are defined.
$S$ is local and translationally invariant (LTI) if there exist $k,l\in\N^\ast$ such that for any $s\in S$, $m,n\in\Z$,
\be\label{locality}
 \range s \leq k,\quad\qquad \trans {ml} {nl} (s)\in S.
\ee
In that case we say that $S$ is $k$-bounded and has period $l$.
\end{defn}

\begin{defn}
A subgroup $\PA\subset\pauli$ is a lattice Pauli group (LPG) if it admits a LTI set of generators $\PA_g\subset\PA$.
\end{defn}

When we say that $\PA_g$ generates a LPG $\PA$ it is understood that $\PA_g$ is a LTI set.
The following result is an immediate consequence of the definition of LPGs.

\begin{prop}\label{prop:coarse_grain1}
Consider a set of LPGs $\bar \PA^1,\dots,\bar\PA^n$ on a given lattice. There exists $L\in\N^\ast$ such that  the coarse grained LPGs 
\be
\PA^k=\coarse L {\bar \PA^k}\subset\pauli, \qquad k=1,\dots,n,
\ee
admit LTI sets of independent generators $\PA^k_g$ such that
\begin{enumerate}
  \item $\PA^i_g$ has period 1, and
  \item $\range a\leq 2$ for any $a\in\PA^i_g$. 
\end{enumerate}
\end{prop}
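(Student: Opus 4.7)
The strategy is first to coarse grain so that the step and range are normalized to $1$ and $2$ respectively, and then to prune the generators in a translation-equivariant way to achieve independence. By the definition of an LPG, each $\bar\PA^k$ admits an LTI generating set $\bar G^k$ of some step $l_k$, bounded by some $k_k\in\N^\ast$. I would pick $L$ to be a positive common multiple of $l_1,\dots,l_n$ with $L\geq\max_k k_k$, and coarse grain by this $L$.

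Items (1) and (2) are then immediate from the choice of $L$. Since each $g\in\bar G^k$ has old-lattice range at most $k_k\leq L$, its support spans at most two consecutive new sites along each axis, so $\coarse L{\bar G^k}$ is $2$-bounded in the new lattice. Because $l_k\mid L$, every unit translation of the new lattice is a symmetry of $\bar G^k$ in the old lattice, so $\coarse L{\bar G^k}$ has step $1$.

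The remaining ingredient is independence. Since $\coarse L{\bar G^k}$ has step $1$ and range $2$, it decomposes into finitely many translation orbits, and for each orbit one selects a canonical representative anchored at the origin. Modulo $\phase$ these finitely many representatives live in the finite-dimensional $\Z_2$-vector space of Pauli operators supported on a fixed bounded neighbourhood of $\Site$ in the new lattice. I would extract, via a deterministic rule such as a lexicographic ordering, a minimal subset of representatives whose span in this window equals the span of all canonical representatives, and close under translation to obtain $\PA^k_g$. The resulting set is LTI with step $1$ and range $\leq 2$, and it still generates $\PA^k$ because every removed orbit is expressible as a product of retained ones.

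The hardest part is verifying that $\PA^k_g$ is independent in the sense of the definition. Any hypothetical violation is a finite subset whose product lies in $\phase$; such a subset is supported in a bounded window and, after translation, must reduce to a linear relation among the retained representatives in the fixed local $\Z_2$-vector space --- a relation the pruning step has been designed to eliminate. Making this reduction fully rigorous, in particular accommodating relations that mix several translates of the same orbit (possibly requiring one further enlargement of $L$ so that every relevant constraint fits inside a single window), is the main technical obstacle.
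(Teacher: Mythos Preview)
The paper does not give a proof: it simply states that the proposition ``is an immediate consequence of the definition of LPGs.'' Your argument for items (1) and (2) --- take $L$ a common multiple of the steps, at least as large as the maximum bound --- is exactly that immediate argument, and is correct.

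The word ``independent'' in the statement is evidently a slip of the pen. The paper itself, in \secref{results}, exhibits the LPG generated by nearest-neighbour $Z_iZ_j$ operators as an example that admits \emph{no} LTI set of independent generators; coarse graining does not change this, since the plaquette constraints persist at every scale. Consistently, \thmref{independent} later expends considerable effort to establish independence for the special class of \emph{centralizer} LPGs, and \propref{coarse_grain2} explicitly \emph{assumes} LTI independent generators as a hypothesis --- which would be redundant if \propref{coarse_grain1} already delivered them. The intended content of \propref{coarse_grain1} is only the normalization of step and range.

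So your pruning construction is chasing a phantom. The gap you correctly flag --- that a finite relation among generators may mix several translates of the same orbit in a way that cannot be reduced to a relation inside a single window, no matter how large $L$ is taken --- is exactly the obstruction, and the $Z_iZ_j$ example shows it is genuinely unfillable for general LPGs. Your instinct that this was the hard part was right; the resolution is that it is not part of the claim.
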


We will assume that all the LPGs that we discuss satisfy these properties, unless otherwise stated.

\begin{defn}
Given two LPGs $\PA,\PB$ defined on disjoint qubit lattices, their composition is the LPG $\PA\otimes\PB$ defined on the union of the two lattices. 
\end{defn}
$\PA\otimes\PB$ is indeed an LPG because $\PA_g\cup\PB_g$ is a valid LTI generating set.

\begin{defn}
A LPG (iso)morphism from $\PA_1\subset\pauli_1$ to $\PA_2\subset\pauli_2$,
denoted $\morphism F {\PA_1}{\PA_2},$ is a lattice pauli (iso)morphism $\funct F{\pauli_{1}}{\pauli_{2}}$ with $F[\PA_1]=\PA_2.$
\end{defn}
Two LPGs $\PA_1$ and $\PA_2$ are isomorphic if there exists such an isomorphism $F$.
Given a LPG morphism $\morphism F {\PA_1}{\PA_2}$ we can coarse grain both lattices, that of $\PA$ and that of $\PB$, and obtain a morphism
\be
\morphism {\coarse l F} {\coarse l {\PA}}{\coarse l {\PB}},
\ee
where $l\in\N^\ast$.
Given two LPG morphisms $\morphism {F} {\PA_1} {\PA_2}, \morphism {G} {\PB_1} {\PB_2},$ if the corresponding lattices are disjoint we form in the obvious way the morphism
\be
\morphism {F\otimes G} {\PA_1\otimes\PB_1}{\PA_2\otimes\PB_2}.
\ee

\subsection{Topological stabilizer groups}

\begin{defn}
A lattice stabilizer group is a LPG $\stab\subset\pauli$ such that $-1\not \in\stab.$
\end{defn}

\begin{defn}
A topological stabilizer group (TSG) is a lattice stabilizer group $\stab\subset\pauli$ such that $\cen {\stab}{\pauli} \propto \stab.$
\end{defn}

\begin{defn}
A topological stabilizer subsystem group (TSSG) is a lattice stabilizer group $\stab\subset\pauli$ such that $\cen {\cen \stab\pauli}{\pauli} \propto \stab.$
Its gauge group is $\gauge:=\cen\stab\pauli.$
\end{defn}

TSGs are TSSGs.
As we will show in \corref{centralizer}, $\gauge$ is a LPG.
The product $\stab_1\otimes\stab_2$ of two TSSGs is also a TSSG.
A LPG isomorphism $F$ from $\stab_1$ to $\stab_2$ is also a LPG isomorphism from $\gauge_1$ to $\gauge_2$.

\subsection{Examples}

\begin{figure}
\psfrag{h}{$h$}
\psfrag{v}{$v$}
\psfrag{1}{1}
\psfrag{2}{2}
\psfrag{3}{3}
\psfrag{4}{4}
\psfrag{5}{5}
\psfrag{6}{6}
\psfrag{(a)}{(a)}
\psfrag{(b)}{(b)}
\psfrag{(c)}{(c)}
\begin{center}
\includegraphics[width=.9\columnwidth]{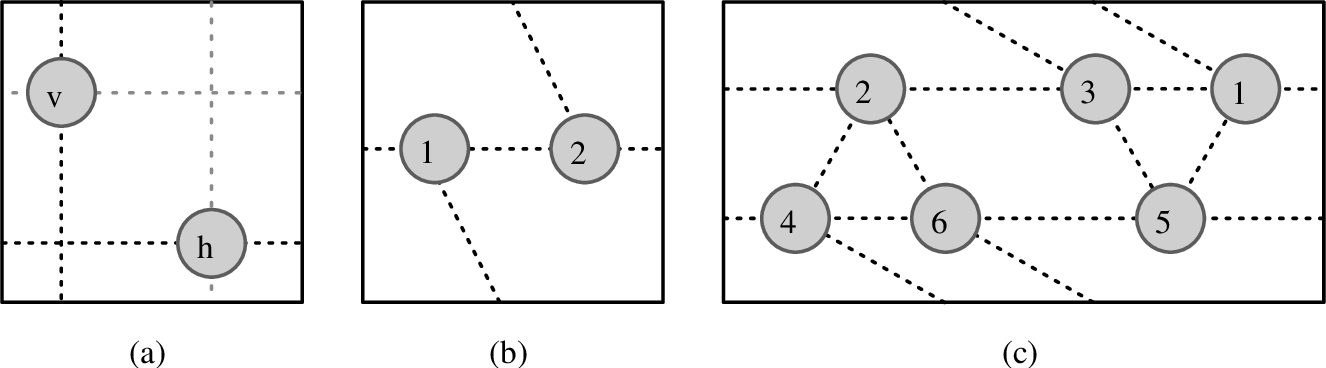}
\end{center}
\caption{
Unit cells for the TPGs in the examples. 
(a) In the toric code, qubits can be identified with horizontal and vertical links in a square lattice, or with vertical and horizontal links in its dual.
Stabilizers are related to faces and dual faces.
(b) In the honeycomb subsystem code qubits can be identified with the vertices of a honeycomb lattice.
Gauge generators are related to links.
(c) In the topological subsystem color code qubits can be identified with the vertices of a lattice derived from the honeycomb.
The numbering is obtained by moving counterclockwise along any hexagonal plaquette.
Gauge generators are related to links.
}
\label{fig:examples}
\end{figure}

Some examples of TSSGs follow. 
Among them, two subsystem variants of the toric code that we will find extremely useful.
All the generator sets below are independent.
Lattices are displayed in \fref{examples}.

\begin{sep}{Empty code} 
The simplest TSG, with no qubits.
\end{sep}

\begin{sep}{Trivial code} 
This TSG is constructed on a lattice with one qubit per site, with $\paulitrivial$ denoting the pauli group.
The stabilizer $\stabtrivial$ has generators
\be
\stabtrivialg:= \strans 1 (\sset{Z_{0,0}}).
\ee 
\end{sep}

\begin{sep}{Subsystem trivial code} 
This TSSG is constructed on the same lattice as the trivial code.
The stabilizer is trivial, $\stabstrivial=\sset\id$ and $\gaugestrivial=\paulitrivial$.
\end{sep}

\begin{sep}{Toric code} 
This TSG  \cite{kitaev:2003:ftanyons} is constructed on a lattice with two qubits per site, labeled as $h$ and $v$.
We denote by $\pauliTC$ the corresponding pauli group.
The stabilizer $\stabTC$ has generators
\begin{align}\label{TC_stab}
\stabTCg:= \strans 1 (\sset{S^X,S^Z}),&\qquad
S^Z:=Z_{0,0}^h Z_{0,1}^h Z_{0,0}^v Z_{1,0}^v,\nl
S^X:=X_{0,0}^h X_{-1,0}^h X_{0,0}^v X_{0,-1}^v.&
\end{align}
\end{sep}

\begin{sep}{Subsystem toric code} 
This TSSG is constructed on the same lattice as the toric code.
The stabilizer $\stabSTC$ and the gauge group $\gaugeSTC$ have generators
\be
\stabSTCg:= \strans 1 (\sset{S^Z}),\qquad
\gaugeSTCg:= \phase\strans 1 (\sset{S^X,Z_{0,0}^h,Z_{0,0}^v}).
\ee 
\end{sep}

\begin{sep}{Fermionic subsystem toric code} 
This TSSG is constructed on the same lattice as the toric code.
$\stabFTC$ and $\gaugeFTC$ have generators
\begin{align}
\stabFTCg:= \strans 1 (\sset{S^Z_f}),&\qquad
\gaugeFTCg:= \phase\strans 1 (\sset{S^X,Z_f^h,Z_f^v}),\nl
Z_f^h:= Z_{0,0}^h X_{-1,0}^h X_{0,-1}^v X_{1,-1}^v, &\qquad
S^Z_f:= Z_{0,0}^h Z_{0,1}^h Y_{0,0}^v Y_{1,0}^vX_{1,0}^h X_{-1,1}^h,\nl
Z_f^v:= Z_{0,0}^v X_{-1,0}^h X_{0,-1}^v X_{0,0}^h.&
\end{align}
\end{sep}

\begin{sep}{Honeycomb subsystem code} 
This TSSG  \cite{suchara:2011:constructions} is constructed on a lattice with two qubits per site.
$\stabhon$ and $\gaugehon$ have generators
\begin{align}
\stabhong:= \strans 1 (\sset{S^{\hexagon}}),&\qquad
\gaugehong:= \phase\strans 1 (\sset{G^X_{0,0},G^Y_{0,0},G^Z_{0,0}}),\nl
G^X_{i,j}:=X_{i,j}^1X_{i,j}^2,&\qquad
S^{\hexagon}:=X_{0,0}^2Z_{1,0}^1Y_{1,0}^2X_{1,1}^1Y_{0,1}^1Z_{0,1}^2,\nl
G^Y_{i,j}:=Y_{i+1,j}^1Y_{i,j}^2,&\qquad
G^Z_{i,j}:=Z_{i,j+1}^1Z_{i,j}^2.
\end{align}
\end{sep}

\begin{sep}{Topological subsystem color code} 
This TSSG  \cite{bombin:2010:subsystem} is constructed on a lattice with six qubits per site, labeled 1-6.
$\stabTSCC$ and $\gaugeTSCC$ have generators
\begin{align*}
&\stabTSCCg:= \strans 1 (\sset{S^1_{0,0},S^2_{0,0}}),\qquad
\gaugeTSCCg:= \phase\strans 1 (\sset{G^1_{0,0},G^2_{0,0},\dots,G^{10}_{0,0}}),\nl
&S^2_{i,j}:=X_{i,j}^1Y_{i+1,j}^2X_{i+1,j}^3Y_{i+1,j+1}^4X_{i,j+1}^5Y_{i,j+1}^6 Y_{i,j+1}^1Y_{i,j+1}^2Y_{i,j}^3 \cdot\nl
&\quad\cdot Y_{i+1,j}^4Y_{i+1,j}^5Y_{i+1,j+1}^6X_{i+1,j}^1X_{i+1,j+1}^2X_{i,j+1}^3X_{i,j+1}^4X_{i,j}^5X_{i+1,j}^6,\nl
&S^1_{i,j}:=Z_{i,j}^1Z_{i+1,j}^2Z_{i+1,j}^3Z_{i+1,j+1}^4Z_{i,j+1}^5Z_{i,j+1}^6,\qquad G^{10}_{i,j}:=Z_{i,j}^2Z_{i,j}^6,
\end{align*}
\begin{align}
G^1_{i,j}&:=X_{i,j}^1Y_{i+1,j}^2,\quad & 
G^2_{i,j}&:=X_{i+1,j}^2Y_{i+1,j}^3,\quad&
G^3_{i,j}&:=X_{i+1,j}^3Y_{i+1,j+1}^4,\nl 
G^4_{i,j}&:=X_{i+1,j+1}^4Y_{i,j+1}^5,\quad&
G^5_{i,j}&:=X_{i,j+1}^5Y_{i,j+1}^6,\quad & 
G^6_{i,j}&:=X_{i,j+1}^6Y_{i,j}^1,\nl
G^7_{i,j}&:=Z_{i,j}^1Z_{i,j}^5,\quad & 
G^8_{i,j}&:=Z_{i,j}^3Z_{i,j}^5,\quad &
G^9_{i,j}&:=Z_{i,j}^2Z_{i,j}^4.
\end{align}
\end{sep}

\subsection{Local equivalence}

Two TSSGs are locally equivalent when they are the same up to local transformations, i.e., coarse graining, LPG isomorphisms and composition with a trivial TSG $\stabtrivial$ or a trivial TSSG $\stabstrivial$.

\begin{defn}
The TSSGs $\stab_1$, $\stab_2$ are locally equivalent if there exists a LPG isomorphism $\funct F {\pauli_1\otimes\paulitrivial^{k_1+m_1}}{\pauli_2\otimes\paulitrivial^{k_2+m_2}}$
\be\label{equivalence_s}
\morphism F {\coarse {l_1} {\stab_1}\otimes \stabtrivial^{\otimes k_1}\otimes 1}{\coarse {l_2} {\stab_2}\otimes\stabtrivial^{\otimes k_2}\otimes 1},
\ee
where $k_1,k_2,m_1,m_2\in\N_0$ and $l_1,l_2\in\N^\ast$.
\end{defn}
For convenience, call this a $(k_1,k_2,m_1,m_2,l_1,l_2)$-equivalence.
\begin{defn}
The TSGs $\stab_1$ and $\stab_2$ are locally equivalent if they are $(k_1,k_2,0,0,l_1,l_2)$-equivalent as TSSGs.
\end{defn}
Two TSGs that are equivalent as TSSGs are also equivalent as TSGs.
Indeed, if $F$ gives the TSSG equivalence it maps gauge elements not in the stabilizer to gauge elements not in the stabilizer.
I.e., denoting by $\paulitrivial^{\otimes m_i}$ the copies of $\paulitrivial$ holding $\stabstrivial^{\otimes m_i}$, we have $F[\paulitrivial^{\otimes m_1}]=\paulitrivial^{\otimes m_2}$.
But centralizers are mapped to centralizers, and the centralizer of $\paulitrivial^{\otimes m_i}$ is $\pauli_i\otimes\paulitrivial^{\otimes k_i}$, so that a restriction of $F$ will give the equivalence.

We still have to check transitivity.
But given a $(k_1,k_2,m_1,m_2,l_1,l_2)$-equivalence $F$ of $\stab_1$ and $\stab_2$, and a $(k_2,k_3,m_2,m_3,l_2,l_3)$-equivalence $G$ of $\stab_2$ and $\stab_3$ we can define
an isomorphism $G\circ F$ from $\stab_1$ to $\stab_3$
\begin{align}\label{compose}
G\circ F:= & (\coarse {l_2} G\otimes \idtrivial^{\otimes t_2})\circ S\circ(\coarse {l_2'} F \otimes \idtrivial^{\otimes t_1})
\end{align}
where $t_1=k_2'l_2+m_2'l_2,$ $t_2=k_2l_2'+m_2l_2'$, $\idtrivial$ is the identity morphism $\morphism {\idtrivial} \paulitrivial\paulitrivial$
and $S$ swaps the last $t_2$ qubit labels with the previous $t_1$ qubit labels involved in the trivial codes. 
This is a $(k_1',k_3',m_1',m_3',l_1',l_3')$-equivalence with
\begin{alignat}{3}
k_1'&:=k_1l_2'+k_2'l_2, 
&\qquad 
l_1'&:=l_1l_2',
&\qquad 
m_1'&:=m_1l_2'+m_2'l_2, 
\nl
k_3'&:=k_3l_2+k_2 l_2',
&\qquad 
l_3'&:=l_3l_2.
&\qquad 
m_3'&:=m_3l_2+m_2 l_2',
\end{alignat}

\section{Constraints and independent generators}

This section shows how a LTI set of independent generators can be constructed for any LPG that is the centralizer of a LPG.
$\PA=\sget{\PA_g}$, $\PB=\sget{\PB_g}$ are LPGs on a given 2D lattice.
$\pauli_g$ denotes the set of single qubit $X$, $Z$ Pauli operators.

\subsection{Constraints}

\begin{figure}
\psfrag{n}{$n$}
\psfrag{m}{$m$}
\psfrag{l}{$l$}
\begin{center}
\includegraphics[width=.45\columnwidth]{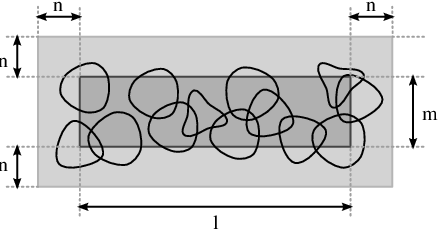}
\end{center}
\caption{
Illustration of \propref{long_is_narrow}.
$\gamma_l$ appears in darker grey.
The closed shapes represents the support of elements of $\negr {\PB_g}{\phi}$.
The support of $S$ is contained in the larger rectangle.
}
\label{fig:long_narrow}
\end{figure}

\begin{defn}
The groups of $\PB$-constraints of $\PA_g$ are
\begin{align}
\cnstrr{\PB}{\PA_g}&:=\set {S\subset \PA_g-\phase}{\syndrset{\PB} S=1},\nl
\cnstrfinr{\PB}{\PA_g}&:=\powersetfin{\PA_g}\cap\cnstrr{\PB}{\PA_g}
\end{align}
\end{defn}

When $\PB=\pauli$ we drop the label.
These constraints give us a measure of how non-independent the elements of $\PA_g$ are with respect to their action in $\PB$.
$\PA_g$ is independent iff $\cnstrfin{\PA_g}=\sset \emptyset.$

The following proposition, illustrated in \fref{long_narrow}, states that given $a\in\PA$ such that it commutes with all generators of $\PB$ outside of a block of a fixed width but arbitrary length, then there exist another $a'\in\PA$ that anticommutes with the same elements of $\PB$ as $a$ does but such that it is the product of a set of generators of $\PA$ inside a slightly larger block.

\begin{prop}\label{prop:long_is_narrow}
For any $m\in\N^\ast$ there exists $n\in\N_0$ such that for every $a\in\PA$ and $l\in\N_0$ with
\be
\negr {\PB_g}a
\subset \PB_g\overlap{\gamma_l}, \qquad \gamma_l:=\bl 0 0 {l} {m},
\ee
there exists $S\in\powersetfin{\PA_g}$ with 
\be
\syndrset \PB S = \syndrset \PB a,\qquad \supp S\subset \thk^n (\gamma_l).
\ee
We denote by $N(\PA_g,\PB_g,m)\in\N_0$ the smallest of such $n$.
\end{prop}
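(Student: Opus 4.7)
The goal is to modify an arbitrary finite realization of $\phi$ by an element of $\cnstrfinr{\PB}{\PA_g}$ so that its support lies in a thickening of the strip $\gamma_l$, with thickness depending only on $m$ and on the range/step parameters of $\PA_g, \PB_g$. I start by choosing any $S_0\in\powersetfin{\PA_g}$ with $\syndr{\PB}{S_0}=\phi$, which exists since $\phi\in\syndri{\PB}{\PA}$. Let $k$ be a common range bound on $\PA_g$ and $\PB_g$ provided by \propref{coarse_grain1}. The Pauli element $p_0:=\pro{S_0}\in\PA$ then satisfies $\comm{p_0}{b}=1$ for every $b\in\PB_g$ with $\supp b\cap\gamma_l=\emptyset$, directly from the hypothesis on $\suppr{\PB_g}\phi$.

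Next, fix $n\geq 3k$ and split $p_0=p_\mathrm{in}\cdot p_\mathrm{out}$ by qubit support, with $\supp p_\mathrm{in}\subset\thk^n(\gamma_l)$ and $\supp p_\mathrm{out}$ disjoint from $\thk^n(\gamma_l)$. Because $n>2k$, no $b\in\PB_g$ can have its support straddling the boundary $\partial\thk^n(\gamma_l)$. For each such $b$ one of two cases applies: either $\supp b\subset\thk^n(\gamma_l)$, in which case $b$ commutes with $p_\mathrm{out}$ by disjoint supports; or $\supp b\cap\thk^n(\gamma_l)=\emptyset$, in which case $\supp b\cap\gamma_l=\emptyset$ (using $n\geq k$) so $b$ commutes with $p_0$, and also with $p_\mathrm{in}$ by disjoint supports, hence commutes with $p_\mathrm{out}$. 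Therefore $p_\mathrm{out}\in\cen{\PB}{\pauli}$.

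The \textbf{main obstacle} is to promote this qubit-support-level splitting to a splitting inside $\PA$: the goal is to realize $p_\mathrm{out}$ as $\pro{T}$ for some $T\in\powersetfin{\PA_g}$ whose support stays in a neighborhood of $\supp p_\mathrm{out}$, in particular avoiding $\thk^{n-k}(\gamma_l)$. The difficulty is that a qubit-support restriction is not intrinsically a group-theoretic operation on $\PA$, so $p_\mathrm{out}$ need not a priori lie in $\PA$. The plan to overcome this is to decompose $\supp p_\mathrm{out}$ into its connected components --- each of which, by the buffer, is separated from $\gamma_l$ --- and to use the local independent generation of $\PA$ given by \propref{coarse_grain1} to build, component by component, a subset $T$ of $\PA_g$ whose image under $\prox$ agrees with $p_\mathrm{out}$ up to a Pauli supported inside $\thk^n(\gamma_l)$. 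This step forces $n$ to exceed a universal constant times $m$ plus $k$ so that the construction can be carried out uniformly in $l$; translational invariance of $\PA_g$ ensures that the bound is $l$-independent.

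Having produced such a $T$, one has $\syndr{\PB}{T}=\syndr{\PB}{p_\mathrm{out}}=1$, so $T\in\cnstrfinr{\PB}{\PA_g}$, and $S:=S_0\add T$ satisfies $\syndr{\PB}{S}=\phi$ with $\pro{S}=p_\mathrm{in}$. By the locality of $\PA_g$, the support of $S$ is contained in $\thk^{n+k}(\gamma_l)$, so taking $N(\PA_g,\PB_g,m):=n+k$ gives the claim. The hard part is genuinely the third step, which is the statement that commutant-valued Pauli operators supported far from $\gamma_l$ can be absorbed by local $\PA_g$-generators; the first two steps only set up the geometry, and the final step is bookkeeping.
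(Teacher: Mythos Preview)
Your outline has genuine gaps that are not just bookkeeping. The dichotomy you assert for $b\in\PB_g$ is false: the boundary of $\thk^n(\gamma_l)$ is a rectangle at distance $n$ from $\gamma_l$, and generators of range $\le k$ sitting near that rectangle straddle it no matter how large $n$ is, so you cannot separately control $\comm{p_{\rm in}}{b}$ and $\comm{p_{\rm out}}{b}$ for such $b$. More seriously, the ``main obstacle'' is not actually overcome. Nothing in the hypotheses forces $p_{\rm out}\in\PA$: the proposition holds for arbitrary LPGs $\PA,\PB$, with no assumption that $\PA=\cen{\PB}{\pauli}$, and \propref{coarse_grain1} only provides an LTI generating set of small range --- it says nothing about expressing a Pauli operator that need not lie in $\PA$ via nearby $\PA_g$-generators. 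Your final step is also a non sequitur: from $\pro{S}\propto p_{\rm in}$ one cannot conclude $\supp S\subset\thk^{n+k}(\gamma_l)$, since a subset of $\PA_g$ can have arbitrarily large support while its product is small (e.g.\ a long closed loop of link operators with trivial product). Finally, nothing in your argument explains why a single $n$ works uniformly for all $l$.

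The paper's proof supplies exactly the missing idea by taking a different route: induction on $l$ via the finiteness of column syndromes. Since $\PB_g\overlap{\gamma_1}$ is finite, the group $\PB_1:=\sget{\PB_g\overlap{\gamma_1}}$ admits only finitely many morphisms of the form $\syndr{\PB_1}{a}$ with $a\in\PA$. For each such morphism $\phi$ one fixes, once and for all, a representative $a_\phi\in\PA$ realizing it with $\PB$-syndrome in a strip of minimal length, together with an expression $S_\phi\subset\PA_g$; the constant $n$ is then chosen large enough to contain all the finitely many $S_\phi$. For general $a$ with syndrome in $\PB_g\overlap{\gamma_l}$ one peels off the first column by writing $a=a_\phi a'$, where $a':=a_\phi^{-1}a$ has syndrome supported in $\PB_g\overlap{\trans 1 0(\gamma_{l-1})}$, and applies the inductive hypothesis. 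The uniformity of $n$ in $l$ comes precisely from this finiteness, which your global-cut strategy never exploits.
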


\begin{proof}
Let $\PB_1:=\sget{\PB_g\overlap{\gamma_1}}$. 
For $\phi\in \morph {\PB_1}$ let 
\be
A_\phi :=\set{a\in\PA}{
\negr {\PB_g} a
\subset\PB_g\overlap{\gamma_\infty}, \syndr {\PB_1} a=\phi}
\ee
and $\Phi_1:=\set{\phi\in \Phi_{\PB_1}}{A_\phi\neq \emptyset}$.
$\Phi_1$ is finite because $\PB_1$ is finite.
For each $\phi\in \Phi_1$ we choose $a_\phi\in A_\phi$ that satisfies  
$\negr {\PB_g} {a_\phi}\subset \PB_g\overlap{\gamma_l}$
for a minimal $l$ and let $l_\phi$ be this minimum value.
We also choose $S_\phi\in\powersetfin{\PA_g}$ with $\syndr \PB {a_\phi} = \syndrset \PB {S_\phi}$.
Finally, we choose $n\in\N_0$ so that $\supp {S_\phi}\subset \thk^{n}({\gamma_{l_\phi}})$ for any $\phi\in \Phi_1$.

We now prove the proposition by induction on $l$. 
Let $\phi=\syndr{\PB_1}a$, so that $a\in A_\phi$.
If $l=1$, $S=S_\phi$ satisfies the required properties because $\negr {\PB_g} {a_\phi}=\negr {\PB_1} {a_\phi}=\negr {\PB_1} {a}=\negr {\PB_g} {a}$. 
If $l>1$, let $a' = \trans {-1} 0 (aa_\phi)$. 
Since $l_\phi\leq l$, $\negr {\PB_g}{a'}\subset \PB_g\overlap{\gamma_{l-1}}$ and by induction there exists $S'\in\powersetfin{\PA_g}$ with 
$\syndrset \PB {S'} = \syndrset \PB {a'}$ and $\supp {S'}\subset \thk^n (\gamma_{l-1}).$
Then $S=S_\phi\trans 1 0 (S')$ satisfies the required properties.
\end{proof}

It is worth rewriting this in the special case $\PB=\pauli$.

\begin{cor}\label{cor:long_is_narrow_pauli}
For any $m\in\N^\ast$ there exists $n\in\N_0$ such that for every $a\in\PA$ and $l\in\N_0$ with
\be
\supp a\subset \gamma_l:=\bl 0 0 {m} {l},
\ee
there exists $S\in\powersetfin{\PA_g}$ with $a \in\pro S$ and
\be
\supp S\subset \thk^n (\gamma_l).
\ee
We denote by $N(\PA_g,m)\in\N_0$ the smallest of such $n$.
\end{cor}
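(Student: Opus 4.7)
My plan is to adapt the proof of \propref{long_is_narrow} essentially verbatim, trading the morphism-level classification used there for a direct classification of operators in $\PA$ by their single-qubit Pauli restriction to the first row of the strip. The induction on $l$ then peels off one row at a time, and the key input is that a single row can carry only finitely many such restrictions.

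Explicitly, let $\gamma_\infty$ denote the semi-infinite strip $\bigcup_{l\geq 0}\gamma_l$, and for each $p$ in the finite quotient $\pauli|_{\bl 0 0 m 1}/\phase$ set
\[
A_p := \set{a\in\PA}{\supp a\subset\gamma_\infty,\ a|_{\bl 0 0 m 1}\propto p}.
\]
Let $\Phi_1:=\set{p}{A_p\neq\emptyset}$, which is finite. For each $p\in\Phi_1$ I pick $a_p\in A_p$ with $\supp{a_p}\subset\gamma_{l_p}$ for minimal $l_p\in\N_0$, together with some $S_p\in\powersetfin{\PA_g}$ satisfying $a_p\in\pro{S_p}$; then I take $n\in\N_0$ to be the smallest integer for which $\supp{S_p}\subset\thk^n(\gamma_{l_p})$ holds for every $p\in\Phi_1$.

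The induction on $l$ is now immediate. The base case $l=0$ uses $S=\emptyset$. For $l\geq 1$ and $a\in\PA$ with $\supp a\subset\gamma_l$, let $p:=a|_{\bl 0 0 m 1}$; since $a\in A_p$, minimality of $l_p$ forces $l_p\leq l$, and in particular $\supp{a_p}\subset\gamma_l$. The operator $a\,a_p^{-1}\in\PA$ has trivial restriction to $\bl 0 0 m 1$, so $\supp(a\,a_p^{-1})\subset\gamma_l\setminus\bl 0 0 m 1=\trans 0 1(\gamma_{l-1})$. Applying the inductive hypothesis to $\trans 0 {-1}(a\,a_p^{-1})$ and translating back by the step-$1$ invariance of $\PA_g$ yields $S'\in\powersetfin{\PA_g}$ with $a\,a_p^{-1}\in\pro{S'}$ and $\supp{S'}\subset\thk^n(\trans 0 1\gamma_{l-1})\subset\thk^n(\gamma_l)$. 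Then $a\in\pro{S_p\add S'}$ with total support in $\thk^n(\gamma_{l_p})\cup\thk^n(\gamma_l)=\thk^n(\gamma_l)$, closing the induction. The one point requiring care is that $n$ must be independent of $l$: this is precisely why the bound $l_p\leq l$ has to be verified at each step, so that $\thk^n(\gamma_{l_p})\subset\thk^n(\gamma_l)$ and the contribution of the canonical $S_p$ never overflows the prescribed thickening.
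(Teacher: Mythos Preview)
Your argument is correct: the finite classification of first-row restrictions plays exactly the role that the finite set $\Phi_1$ of restricted morphisms plays in \propref{long_is_narrow}, and the induction on $l$ goes through as written. The bound $l_p\le l$ is indeed the crucial point ensuring that the thickened boxes nest, and you handle it properly.

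That said, the paper's proof is a one-liner rather than a reworking of the inductive argument. It simply applies \propref{long_is_narrow} with $\PB=\pauli$ and $\PB_g=\pauli_g$ (and the axes swapped), after observing two things: the map $a\mapsto\syndr\pauli a$ identifies $\PA/\phase$ with $\syndri\pauli\PA$, so that ``$\phi=\syndr\PB S$'' becomes ``$a\in\pro S$''; and the support of $\syndr\pauli a$ with respect to single-qubit Pauli generators is exactly $\supp a$, so the strip hypothesis transfers verbatim. In other words, the corollary is literally the special case $\PB=\pauli$ of the proposition, and no new induction is needed. Your approach rediscovers the same mechanism at the operator level (restriction to a row instead of restriction of a morphism), which is perfectly valid and perhaps more concrete, but it duplicates work already packaged in \propref{long_is_narrow}.
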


\begin{lem}\label{lem:bounded_cnstr}
$\cnstrfinr {\PB}{\PA_g}$ admits a LTI set of generators.
\end{lem}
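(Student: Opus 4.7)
The plan is to exhibit a universal $K \in \N^\ast$ such that every $S \in \cnstrfinr{\PB}{\PA_g}$ is a symmetric difference of $\PB$-constraints of range at most $K$. Once this is established, the collection of all such constraints of range $\leq K$ is itself $K$-bounded and closed under $\Z^2$ translations (since $\PA_g$ and $\PB_g$ have step $1$ by \propref{coarse_grain1}), and therefore constitutes a LTI generating set of $\cnstrfinr{\PB}{\PA_g}$. The main tool is \propref{long_is_narrow}, which repairs the boundary created when a constraint is cut in half.

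The reduction shrinks the bounding rectangle of $S$ iteratively. Suppose $\supp S$ lies in a rectangle $R$ of horizontal extent $l$ and vertical extent $m$. Split $R$ into its left and right halves $R_1, R_2$, and decompose $S = S_1 + S_2$ where $S_i$ collects the elements of $S$ supported in $R_i$. Since $\syndr{\PB}{S} = 1$, we have $\phi := \syndr{\PB}{S_1} = \syndr{\PB}{S_2}$. If $b \in \PB_g$ has support entirely inside $R_2$, then every $a \in S_1$ commutes with $b$, so $\phi(b) = 1$; symmetrically for $b$ supported in $R_1$. Because $\PA_g$ and $\PB_g$ have range at most $2$, the elements of $\PB_g$ on which $\phi$ can be nontrivial are confined to a vertical strip $C$ about the cut of some bounded width $c$ and height $m$. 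Applying \propref{long_is_narrow} to $C$ (using the symmetry between the two coordinate directions, since the proof of that proposition is unchanged after relabeling the axes) yields $S' \in \powersetfin{\PA_g}$ with $\syndr{\PB}{S'} = \phi$ and $\supp{S'} \subset \thk^n(C)$, where $n$ depends only on $c$. Setting $T_i := S_i + S'$, each $T_i$ is a finite $\PB$-constraint because $\syndr{\PB}{T_i} = \phi + \phi = 1$, and $S = T_1 + T_2$. The horizontal extent of each $T_i$ is at most $l/2 + (c/2 + n)$, strictly less than $l$ once $l > c + 2n$.

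Iterating the horizontal split reduces $S$ to a symmetric difference of constraints of horizontal extent at most $L_0 := c + 2n$. Running the analogous split in the vertical direction on each such piece shrinks its vertical extent while enlarging the horizontal extent by at most an additive constant $2n'$ coming from the new application of \propref{long_is_narrow} to a short horizontal strip of width $L_0$. After finitely many horizontal and then vertical reductions every $S \in \cnstrfinr{\PB}{\PA_g}$ decomposes into constraints of range at most a universal $K$, completing the proof. The main obstacle is the bookkeeping of these constants: they must be fixed in the order $c, n, L_0, n'$, and each split must be shown to strictly decrease the relevant extent; a minor subtlety is that \propref{long_is_narrow} is literally stated only for horizontally oriented strips, which either requires invoking the analogous statement for vertical strips or phrasing the reduction so that the original version suffices.
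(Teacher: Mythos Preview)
Your approach uses exactly the same cut-and-repair mechanism as the paper (split a bounding rectangle, observe the syndrome of a half is confined to a thin strip, invoke \propref{long_is_narrow} to manufacture a small constraint with that syndrome, add it to both halves). The gap is in the iteration scheme. Each split in one direction enlarges the \emph{other} direction: when you halve horizontally, the repair $S'$ lives in $\thk^n$ of a strip of height $m$, so $T_i$ has vertical extent $m + O(n)$, not $m$. After the $O(\log l)$ horizontal halvings needed to push the horizontal extent down to $L_0$, the vertical extent has crept up to $m + O(n\log l)$. The subsequent vertical halvings then take $O(\log(m+n\log l))$ steps, during which the horizontal extent grows back to $L_0 + O(n'\log(m+n\log l))$. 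The final bound depends on the original dimensions of $S$, so no universal $K$ falls out, and the set $\{S:\range S\le K\}$ you produce is not a generating set.

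The paper avoids this by making the scheme adaptive: it always cuts along the \emph{longer} side, and it inducts on the sum $a+b$ of the two extents rather than on one extent at a time. A single cut (slicing off a slab of fixed width along the longer side) then yields two pieces with $a_i+b_i<a+b$, because the $O(N)$ growth in the short direction is strictly smaller than the shrinkage in the long direction once $a>L=8N+2$. Your argument becomes correct if you replace ``halve horizontally until small, then halve vertically'' by ``cut along the longer side and show the perimeter strictly decreases''; everything else you wrote (the localization of the syndrome to a width-$c$ strip, the appeal to the axis-exchanged version of \propref{long_is_narrow}, the symmetric-difference bookkeeping) is fine.
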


\begin{proof}
Let $n_1=N(\PA_g,\PB_g,1)$ according to \propref{long_is_narrow} and $n_2=N(\PA_g,\PB_g,1)$ according to the axis-exchanged version of \propref{long_is_narrow}.
Set $N=\max(n_1,n_2)$.
We claim that
\be
G:=\set{S\in\cnstrfinr {\PB}{\PA_g}}{\range  S\leq L},
\ee
where $L:=8N+2$, generates $\cnstrfinr {\PB}{\PA_g}$.

Take any $S\in \cnstrfinr{\PB}{\PA_g}$.
If $\range S\leq L$ then $S\in\sget G$. 
In other case, without lost of generality we assume $\supp R \subset \bl 0 0 a b$ with $a>b$ and $a> L$. 
Let $l= 4N+1$ and set $S=S_1\add S_2$ with $S_1=S|_{\bl 0 0 {l} {b}}$. 
Then $\supp {S_2}\subset\bl {l-1} 0 {a}{b}$, $\syndrset \PB {S_1} = \syndrset \PB {S_2}=:\phi\in\syndri \PB \PA$, and 
\be
\suppr {\PB_g} \phi\subset \suppr {\PB_g} {S_1} \cap \suppr {\PB_g}{S_2} \subset \PB_g\overlap{\bl {l-1}{0}{l}{b}}.
\ee
Due to \propref{ } (up to a translation), there exists $S_3\in \powersetfin{\PA_g}$ with $\phi=\syndrset \PB {S_3}$ and $\supp {S_3}\subset \thk^N (\bl {l-1}{0}{l}{b}).$
Then $S=S_1'+S_2'$, where $S_1'=S_1\add S_3$ and $S_2'=S_2\add S_3$ are elements of $\cnstrfinr {\PB}{\PA_g}$.
But $S_i'\in \bl {u_i+a_i} {v_i+b_i}{a_i} {b_i}$, $i=1,2$, with $a_1=l+N$, $b_1=b+2N$, $a_2=a+1+N$ and $b_2=b+2N$.
Thus $a_i+b_i<a+b$ and by induction on $a+b$ it follows that $R\in\sget G$.
 \end{proof}

Since $\cen\PB\PA = \PA\cap \bigcup\proi {\cnstrfinr \PB{\PA_g}}$, we have:
\begin{cor}\label{cor:centralizer}
$\cen \PB \PA$ is a LPG.
\end{cor}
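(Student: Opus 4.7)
The plan is to derive the corollary directly from \lemref{bounded_cnstr}. The bridge between constraints and the centralizer is the identity $\syndr{\PB}{S}=\syndr{\PB}{\pro{S}}$ valid for any $S\in\powersetfin{\PA_g}$, which follows from the bilinearity of the Pauli commutator modulo phase. Any element $a\in\cen{\PB}{\PA}$ can be written as $a=\phi\,\pro{S}$ with $\phi\in\phase$ and $S\in\powersetfin{\PA_g}$, and the condition $\syndr{\PB}{a}=1$ forces $S\in\cnstrfinr{\PB}{\PA_g}$; conversely, $\pro{S}$ lies in $\cen{\PB}{\PA}/\phase$ for every $S\in\cnstrfinr{\PB}{\PA_g}$. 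Thus $\pro{\cdot}$ induces a surjection $\cnstrfinr{\PB}{\PA_g}\to\cen{\PB}{\PA}/\phase$.

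Applying \lemref{bounded_cnstr}, I would obtain a LTI generating set $G\subset\cnstrfinr{\PB}{\PA_g}$ of some step $d$ and uniformly bounded range. For each $R\in G$ I pick a Pauli representative $p_R\in\pro{R}$. Since each nonempty $R\in G$ is a finite subset of $\PA_g$, no nontrivial translation in $d\Z^2$ fixes $R$, so these choices can be made translation-consistently by selecting one $p_R$ per orbit representative and setting $p_{\trans{md}{nd}(R)}:=\trans{md}{nd}(p_R)$. The set
\bex
H:=\set{p_R}{R\in G}\cup\bigl(\phase\cap\cen{\PB}{\PA}\bigr)
\eex
is then translationally invariant, and locality of $H$ follows from that of $G$ together with the bound on $\range r$ for $r\in\PA_g$: the support of $p_R$ lies in the union of supports of the elements of $R$, which is uniformly bounded.

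It remains to check that $H$ generates $\cen{\PB}{\PA}$: given $a=\phi\,\pro{S}$ as above, \lemref{bounded_cnstr} expresses $S$ as a finite symmetric difference of translates of elements of $G$, so $\pro{S}$ is the corresponding product of $p_R$'s modulo phase, whence $a\in\sget{H}$. The substantive geometric work is already packaged in \lemref{bounded_cnstr}; the only subtlety here is the translation-consistent choice of representatives, which is automatic from the freeness of the translation action on nonempty finite sets.
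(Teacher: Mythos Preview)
Your argument is correct and follows the same route as the paper. The paper's proof is the single line ``Notice that $\cen\PB\PA \propto \bigcup\proi {\cnstrfinr \PB{\PA}}$,'' which implicitly relies on exactly the mechanism you spell out: \lemref{bounded_cnstr} supplies an LTI generating set of $\cnstrfinr{\PB}{\PA_g}$, and pushing it through $\prox$ (with phases adjoined) yields an LTI generating set of $\cen{\PB}{\PA}$. Your extra care about translation-consistent choices of representatives and about including $\phase\cap\PA$ is warranted but does not change the strategy.
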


When $a\in\cen\PB\pauli$ does not have support on the boundary of a region, the restriction of $a$ to this region still belongs to the centralizer:

\begin{prop}\label{prop:separate}
Let $\PA=\cen\PB\pauli$.
For any $a\in\PA$ and $\gamma\in\powersetfin \Sigma$ 
\be
\supp a\cap(\thk^1(\gamma)-\gamma)=\emptyset \quad\Longrightarrow\quad a|_\gamma\in\phase \PA.
\ee
\end{prop}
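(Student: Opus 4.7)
The plan is to use the assumed range bound on generators of $\PB$ (recall $\range b\leq 2$ for $b\in\PB_g$ by \propref{coarse_grain1}) together with the geometric separation imposed by the hypothesis, and then conclude by showing that the restriction of $a$ to $\gamma$ still commutes with every $b\in\PB_g$.

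First I would split $a$ as a product. Since the support of $a$ is disjoint from the ``boundary ring'' $\thk(\gamma)-\gamma$, we have $\supp a\subset\gamma\cup(\Sigma-\thk(\gamma))$, and these two regions have disjoint qubits. Thus there exist $a_1,a_2\in\pauli$ with $a\in \phase a_1 a_2$, $a_1\in\phase a|_\gamma$ and $\supp a_1\subset\gamma$, $\supp a_2\subset\Sigma-\thk(\gamma)$. The goal is to show $a_1\in\phase\PA$, i.e.\ $a_1$ commutes with every $b\in\PB_g$.

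Next I would establish the key geometric claim: for every $b\in\PB_g$, either $\supp b\cap \supp a_1=\emptyset$ or $\supp b\cap \supp a_2=\emptyset$. Since $b$ has range at most $2$, $\supp b$ is contained in some $2\times 2$ block, so all sites in $\supp b$ are pairwise adjacent. If $\supp b$ meets $\gamma$ at some site $\sigma$, then every other site of $\supp b$ is adjacent to $\sigma$ and hence lies in $\thk(\gamma)$, so $\supp b$ is disjoint from $\Sigma-\thk(\gamma)\supset\supp a_2$. Conversely, if $\supp b$ meets $\Sigma-\thk(\gamma)$, the same adjacency argument rules out any site of $\supp b$ lying in $\gamma$, so $\supp b$ is disjoint from $\supp a_1$.

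Finally I would conclude. From the claim, $\comm{b}{a_1}\comm{b}{a_2}=\comm{b}{a_1a_2}=\comm{b}{a}=1$, and one of the two commutators on the left is trivially $1$ because the corresponding supports are disjoint; hence so is the other. In particular $\comm{b}{a_1}=1$ for every $b\in\PB_g$, so $a_1\in\cen{\PB}{\pauli}=\PA$ (up to phase), giving $a|_\gamma\in\phase\PA$. The only delicate step is the geometric claim, and it ultimately rests on the range-$2$ normalization of $\PB_g$ together with the precise definition of $\thk$; nothing here is hard, but one must be careful not to confuse ``adjacent'' with ``equal'' when chasing sites through $\thk(\gamma)-\gamma$.
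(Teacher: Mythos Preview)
Your proof is correct and follows essentially the same approach as the paper's: use the $2$-boundedness of $\PB_g$ to see that any generator $b$ meeting $\gamma$ has support entirely inside $\thk(\gamma)$, hence cannot touch $\supp a-\gamma$, so $\comm{b}{a|_\gamma}=\comm{b}{a}=1$; the case where $b$ misses $\gamma$ is trivial. The paper compresses this into a single sentence without the explicit $a_1a_2$ decomposition, but the argument is the same.
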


\begin{proof}
Since $\PB_g$ is 2-bounded, for any $b\in\PB_g$ with $\supp b\cap\gamma\neq\emptyset$ we have $\supp b\cap(\supp a-\gamma)=\emptyset$ and thus $\comm b {a|_\gamma} = \comm b a=1$.
\end{proof}

\begin{lem}\label{lem:global_constraints}
Let $\PA=\cen\PB\pauli$ and $\PA_g$ be independent. 
$\cnstr {\PA_g}$ is finite.
\end{lem}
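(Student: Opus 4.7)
My plan is to show that the restriction map
\[
\rho_L:\cnstr{\PA_g}\longrightarrow\powerset{\PA_g|_{\gamma_L}-\phase},\qquad R\longmapsto R\cap\PA_g|_{\gamma_L},
\]
with $\gamma_L:=\bl{-L}{-L}{L}{L}$, becomes injective once $L$ is chosen large enough. Since the target is then a finite set, this will immediately give $|\cnstr{\PA_g}|<\aleph_0$. Since $\rho_L$ is a group homomorphism with respect to symmetric difference, injectivity amounts to the implication that any $R\in\cnstr{\PA_g}$ with $R\cap\PA_g|_{\gamma_L}=\emptyset$ is itself empty.

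Given such an $R$, and for each $L'\gg L$, I consider the finite truncation $R_{L'}:=R\cap\PA_g|_{\gamma_{L'}}$ and its product $p_{L'}:=\pro{R_{L'}}\in\PA$. Because $R$ is a constraint and generators in $R\setminus R_{L'}$ have support disjoint from $\gamma_{L'-2}$, $p_{L'}$ must commute with every single-qubit Pauli supported at distance at least two from $\partial\gamma_{L'}$; combined with the fact that no generator in $R$ has support contained in $\gamma_L$, so that $\supp p_{L'}$ avoids a deep interior of $\gamma_L$, this localizes $p_{L'}$ (up to an overall phase) to the union of a bounded-width inner frame $F^{\text{in}}$ around $\partial\gamma_L$ and a bounded-width outer frame $F^{\text{out}}$ around $\partial\gamma_{L'}$.

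Since $F^{\text{in}}$ and $F^{\text{out}}$ are far apart once $L'\gg L$, \propref{separate} applied to each region separately produces $p_{L'}^{\text{in}},p_{L'}^{\text{out}}\in\PA$ with supports in $F^{\text{in}}$ and $F^{\text{out}}$ respectively and $p_{L'}\propto p_{L'}^{\text{in}}\cdot p_{L'}^{\text{out}}$. Decomposing each square frame into its four straight sides and invoking \corref{long_is_narrow_pauli} on each side, we express $p_{L'}^{\text{in}}\propto\pro{S_{L'}^{\text{in}}}$ and $p_{L'}^{\text{out}}\propto\pro{S_{L'}^{\text{out}}}$ for finite $S_{L'}^{\text{in}},S_{L'}^{\text{out}}\subset\PA_g$ supported in slight thickenings of $F^{\text{in}},F^{\text{out}}$. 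Then $R_{L'}\add S_{L'}^{\text{in}}\add S_{L'}^{\text{out}}\in\cnstrfin{\PA_g}=\sset{\emptyset}$ by independence of $\PA_g$, whence as subsets of $\PA_g$ we have $R_{L'}=S_{L'}^{\text{in}}\add S_{L'}^{\text{out}}$. This forces every generator of $R$ with support inside $\gamma_{L'}$ to lie in a thickening of $F^{\text{in}}\cup F^{\text{out}}$. Letting $L'\to\infty$, any generator of $R$ at bounded distance from $\partial\gamma_L$ eventually sits strictly between $F^{\text{in}}$ and $F^{\text{out}}$ and must therefore be absent; hence $\supp R$ is contained in a bounded neighborhood of $\partial\gamma_L$, so $R$ is finite. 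A second appeal to independence gives $R=\emptyset$.

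The main obstacle is the splitting step. Arranging the widths of $F^{\text{in}}$ and $F^{\text{out}}$ precisely enough that the hypothesis $\supp p_{L'}\cap(\thk(F^{\text{in}})\setminus F^{\text{in}})=\emptyset$ (and its outer counterpart) genuinely holds, and then invoking \corref{long_is_narrow_pauli}---which is stated for straight strips---on all four sides and corners of each square frame, requires careful tracking of the constants $N$ supplied by \propref{long_is_narrow}. This is precisely where the centralizer hypothesis $\PA=\cen\PB\pauli$ really enters, since both \propref{separate} and the narrow-support propositions rely on $\PA$ being the centralizer of a LPG.
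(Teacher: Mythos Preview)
Your overall strategy---show that the restriction of a global constraint to a large box determines it---is the same one the paper uses, and the opening reductions are fine. The serious problem is in the step you yourself flag as the ``main obstacle''.

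First a minor point: your localization of $\supp p_{L'}$ is confused. From the constraint property alone you already get $\supp p_{L'}\subset\gamma_{L'}\setminus\gamma_{L'-2}$, i.e.\ the support is contained in the \emph{outer} frame only. The hypothesis $R\cap\PA_g|_{\gamma_L}=\emptyset$ gives $\supp R_{L'}\subset\gamma_{L'}\setminus\gamma_{L-2}$, which is weaker, so there is no separate inner frame for $p_{L'}$.

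The real gap is the claim that $p_{L'}^{\text{out}}\propto\pro{S_{L'}^{\text{out}}}$ with $\supp S_{L'}^{\text{out}}$ contained in a thickening of $F^{\text{out}}$ whose width is \emph{independent of $L'$}. This is precisely what fails for closed loop operators. Think of the toric code: a product of edge operators along $\partial\gamma_{L'}$ is a stabilizer, but the \emph{only} way to write it as a product of plaquette generators is to fill in the entire interior (or exterior). There is no expression with support near the frame. \corref{long_is_narrow_pauli} does not help because it is stated for straight rectangles $\bl 0 0 m l$, and you cannot cut the frame into four straight pieces each lying in $\PA$: any region $\gamma$ that separates the loop has $\thk(\gamma)\setminus\gamma$ meeting $\supp p_{L'}$, so \propref{separate} does not apply. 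Worse, notice that your localization of $\supp p_{L'}$ to $F^{\text{out}}$ did not use the hypothesis at all; if the ``express near the frame'' step worked uniformly, your argument would show $R_{L'}$ is supported near $\partial\gamma_{L'}$ for \emph{every} global constraint $R$, hence $\cnstr{\PA_g}=\sset\emptyset$, contradicting the existence of nontrivial charges.

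The paper sidesteps this topological obstruction by working one-dimensionally: it shows that if $R$ avoids a short vertical slit $\bl 0 0 2 D$, then $R$ avoids an entire half-infinite horizontal strip to its right. The mechanism is pigeonhole---along a thin horizontal band the pattern of $R$ must repeat, which lets one manufacture a \emph{finite} constraint (hence empty by independence) and conclude the half-strip is empty. Iterating this in both axes pins $R$ down by its restriction to $\bl 0 0 2 D$. Your frame argument could be repaired only if you found a way to certify that $p_{L'}$ is a ``boundary'' near its support, but that is essentially the content of the lemma you are trying to prove.
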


\begin{proof}
Let $n_1:=N(\PA_g,2)$ according to \corref{long_is_narrow_pauli} and $n_2:=N(\PA_g,2)$ according to the axis-exchanged version of \corref{long_is_narrow_pauli}.
Set $N=\max(n_1,n_2)$ and $D=2N+2$.
First, we want to show that for any $R\in\cnstr{\PA_g}$ we have
\be
R\overlap{\bl 0 0 2 {D}}=\emptyset\quad\Longrightarrow \quad R\overlap{\bl 2 N \infty {D-N}}=\emptyset.
\ee
We start observing that for any $l\in\N^\ast-\sset{1}$, there exist $k_1,k_2\in \N^\ast$ with $k_2-2>k_1>l$ and such that
\be
\left(\trans {k_2-k_1} 0(R) \add R\right)\overlap{\bl {k_2} {0} {k_2+2} {D}}=\emptyset,
\ee
because $\set{S\subset \PA_g}{S=S\overlap{\bl 0 0 2 {D}}}$ is finite. Let
$R_{0}:=R\overlap {\bl {2}{0}{\infty}{D}},$ $R_1 :=\trans {k_2-k_1} 0 (R_0)\add R_0.
$
Clearly we have $R_1\overlap {\bl {0}{0}{2}{D}\cup\bl {k_2}{0}{k_2+2}{D}}=\emptyset.$
Then $R_2:=R_1\overlap {\bl 2 0 {k_2} D}$ satisfies $\supp{\pro {R_2}} \subset \gamma_1\cap \gamma_2$ for $\gamma_1:=\bl 0 {-2} {k_2+2} {0}$, $\gamma_2:=\bl 0 {D} {k_2+2} {D+2}.$
According to \propref{separate} there exist $a_i\in\PA$, $i=1,2$, such that $a_1a_2\in\pro {R_2}$.
Applying \corref{long_is_narrow_pauli}, there exist $S_i\in \powersetfin{\PA_g}$, $i=1,2$ with $a_i\in\pro {S_i}$ and $\supp {S_i}\subset \thk^N (\gamma_i)$. 
Then $R_3=R_2\add S_1\add S_2\in\cnstrfin{\PA_g}=\sset\emptyset$ and $R_1\overlap{\bl 2 {N} l {D-N}}=R_3\overlap{\bl 2 {N} l {D-N}}=\emptyset$.
This can only be true if $R\overlap{\bl 2 {N} l {D-N}}=\emptyset$.

We next show that for any $R_1,R_2\in\cnstr{\PA_g}$ with $R_1\overlap{\bl 0 0 2 D}=R_2\overlap{\bl 0 0 2 D},$
we have $R_1=R_2$, which implies the finiteness of $\cnstr{\PA_g}$. 
Let $R=R_1\add R_2$.
Then according to the above result $R\overlap{\bl 2 {N} \infty {D-N}}=\emptyset$. 
Moreover, we can apply the result with the first coordinate inverted and a suitable translation, getting $R\overlap{\bl {-\infty} {N} \infty {D-N}}=\emptyset$.
But now for any $j\in\Z$ we have $R\overlap {\bl j {N} {j+D} {2+N}}=\emptyset$.
Applying the same reasoning with the coordinates exchanged gives $R\overlap {\bl {j+N} {-\infty} {j+D-N} \infty}=\emptyset$ for any $j\in\Z$. 
Thus $R_1\add R_2=\emptyset$.
\end{proof}

\subsection{Independent generators}

\begin{figure}
\psfrag{nn}{$2n$}
\psfrag{m}{$m$}
\psfrag{l}{$2l$}
\psfrag{mm}{$M$}
\begin{center}
\includegraphics[width=.45\columnwidth]{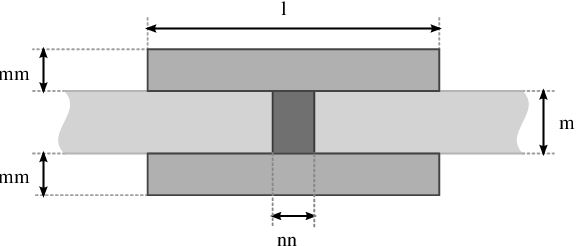}
\end{center}
\caption{
Illustration of \propref{extend_narrowing}.
$R$ has no support on the region $\gamma$, in medium grey.
The support of $S$ is contained in the light grey area, which extends to infinity on both sides.
$S$ and $R$ coincide in the region $\gamma_{-n}^n$, in dark grey.
}
\label{fig:extend_narrow}
\end{figure}

We need a couple of lemmas as a preparation for the main result of this section. The first states that given a constraint $R$ without support on the `sides' of a `corridor', as long as this sides are wide enough and the corridor long enough, there will be another constraint $S$ that coincides with $R$ in the centre of the corridor but that has support only on an infinite version of the corridor, as illustrated in \fref{extend_narrow}.

\begin{prop}\label{prop:extend_narrowing}
Let $\PA_g$ be $M$-bounded.
For any $m,n\in\N_0$ there exists $l\in \N_0$ such that the following holds.
For any $R\in \cnstr{\PA_g}$ satisfying
\be\label{cond_R}
R\overlap{\gamma}=\emptyset,\qquad \gamma:=\bl {-l} {-M} {l} {0}\,\cup\, \bl {-l} {m} {l} {m+M},
\ee
there exists $S\in\cnstr{\PA_g}$ such that
\be\label{prop_S}
 S\subset {\PA_g|_{\gamma_{-\infty}^{\infty}}},\quad (S\add R)\overlap{\gamma_{-n}^{n}}=\emptyset,\quad \gamma_{a}^{b}:=\bl a 0 {b}{m}.
\ee
We denote by $L(\PA_g,m,n)$ the minimal such $l$.
\end{prop}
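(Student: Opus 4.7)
The plan is to follow the pigeonhole and periodic-extension strategy sketched in \secref{centralizer}. The key structural fact is strip-confinement: since $\PA_g$ is $M$-bounded and $R$'s support avoids $\gamma$, any element of $R$ whose x-range lies in $[-l+M, l-M]$ cannot cross $y=0$ or $y=m$ without entering $\gamma$, so its entire support lies in $\{0 \leq y < m\}$. Hence $R$ and $T := R \cap \PA_g|_{\gamma_{-\infty}^\infty}$ coincide on generators whose x-range stays inside this bulk band.

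Next, define the cross-sections $C(x) := R \overlap{\bl{x-M}{0}{x+M}{m}}$. Since they are subsets of the finite family of generators in a window of fixed size, there are only $N_\text{pat} := 2^{|\PA_g \overlap{\bl{0}{0}{2M}{m}}|}$ possible translated patterns. For $l$ sufficiently large, pigeonhole produces positions $-l+2M \leq x_1 < x_2 \leq -n-M$ and $n+M \leq y_1 < y_2 \leq l-2M$ with $C(x_2) = \trans{x_2-x_1}{0}[C(x_1)]$ and $C(y_2) = \trans{y_2-y_1}{0}[C(y_1)]$; one arranges $d_L := x_2-x_1 \geq M$ and $d_R := y_2-y_1 \geq M$ so that each generator's x-range fits in a single period. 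Setting $B_L := T \overlap{\bl{x_1}{0}{x_2}{m}}$ and $B_R := T \overlap{\bl{y_1}{0}{y_2}{m}}$, define
\begin{equation*}
S := \bigl(T \overlap{\bl{x_1}{0}{y_2}{m}}\bigr) \cup \bigcup_{k \geq 1} \trans{-k d_L}{0}[B_L] \cup \bigcup_{k \geq 1} \trans{k d_R}{0}[B_R].
\end{equation*}
By construction each piece consists of strip generators, so $S \subset \PA_g|_{\gamma_{-\infty}^\infty}$.

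To see $S \in \cnstr{\PA_g}$, I verify $\comm S q = 0$ by reducing to $\comm R{q'} = 0$ for translates $q'$ of $q$. For $q$ supported well inside the middle block, $\comm S q$ equals $\comm T q = \comm R q = 0$ by strip-confinement; for $q$ supported in the interior of the $k$-th left tile, translation invariance gives $\comm S q = \comm{B_L}{\trans{k d_L}{0}(q)} = \comm R{\trans{k d_L}{0}(q)} = 0$. At the splice $x_1$ and each tile boundary $x_1 - k d_L$ (and symmetrically on the right), the cross-section matching $C(x_1) = \trans{-d_L}{0}[C(x_2)]$ identifies, as sets, the contributions coming from the main block and from the adjacent tile, so the full family of $S$-generators touching a given site there is precisely the family of $R$-strip-generators touching that site, yielding $\comm S q = \comm R q = 0$. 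Finally, $(S \add R) \overlap{\gamma_{-n}^n} = \emptyset$ because any generator meeting $\gamma_{-n}^n$ is, by strip-confinement and $l > n+M$, a strip generator with x-range inside $[x_1, y_2]$, and therefore appears in $S$ iff it appears in $R$.

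The main obstacle is the constraint verification at the splice points and at each tile boundary, where one must carefully bookkeep generators whose x-ranges straddle the splice and invoke the cross-section bijection on both sides simultaneously. Choosing the window of width $2M$ and requiring $d_L, d_R \geq M$ is exactly what ensures that every such straddler is covered by the matching, so that no spurious elements need to be added or removed to restore the constraint property.
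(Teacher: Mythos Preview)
Your proposal is correct and follows essentially the same pigeonhole-and-periodic-extension strategy as the paper. The paper's write-up differs only in bookkeeping: it defines the fundamental tile with a half-open boundary, $R_x:=R\overlap{\gamma_{m_x}^{n_x}}-R\overlap{\gamma_{m_x-xM}^{m_x}}$, so that successive translates are disjoint, and then verifies $\comm{S}{p}=1$ not by seam-by-seam case analysis but by first establishing the exact periodicity $(\trans{ad_x}{0}(S)\add S)\overlap{\gamma_{m_x+ad_x}^{\infty}}=\emptyset$ and using it to translate any $p$ back into the central region where $S$ and $R$ agree. This sidesteps the straddler bookkeeping you flag as the main obstacle, but the underlying argument is the same.
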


\begin{proof}
For $a<b$, we set $\gamma_b^a:=\gamma_a^b$.
Let $l\in\N_0$ be such that for any $R\in \cnstr{\PA_g}$ satisfying \eqref{cond_R}
there exist $m_x,n_x\in\N_0$, $x=\pm$, with $n\leq xm_x<xm_x+M\leq xn_x\leq r-M$ and such that
\be
\left(\trans {d_x} 0(R) \add R\right)\overlap{\gamma_{n_x}^{n_x+xM}}=\emptyset,\qquad d_x:=n_x-m_x.
\ee
There exists such $l$ because the set $\PA_g\overlap{\gamma_0^{M}}$ is finite and $\PA_g\overlap{\gamma_t^{M}}=\trans t 0 (\PA_g\overlap{\gamma_0^M})$.
We claim that this choice for $l$ satisfies the statement in the lemma.
To check it, given $R$, $m_x$, $n_x$ as above let 
\begin{align}
R_x&:=R\overlap{\gamma_{m_x}^{n_x}}-R\overlap{\gamma_{m_x-xM}^{m_x}},\quad
R_x^\infty:=\bigcup_{a\in\N^\ast}\trans {ad_x} 0 (R_x),\nl
S&:=R\overlap{\gamma_{n_{-}}^{n_+}} \cup R_{-}^\infty \cup R_+^\infty.
\end{align}
Due to \eqref{cond_R}, $R\overlap{\gamma_{n_{-}}^{n_+}}\subset R|_{\thk^M(\gamma_{n_{-}}^{n_+})}\subset R|_{\gamma\cup\gamma_{-r}^{r}}\subset{\PA_g|_{\gamma_{-\infty}^{\infty}}}$ and thus $S\subset {\PA_g|_{\gamma_{-\infty}^{\infty}}}$. 
The definition of $S$ ensures that
\be\label{propS1}
(S\add R)\overlap{\gamma_{n_--M}^{n_++M}}=\emptyset,
\ee
and thus in particular $(S\add R)\overlap{\gamma_{-n}^n}=\emptyset$.
The definition is also such that for any $a\in\N^\ast$, $x=\pm$,
\be\label{propS2}
(\trans {ad_x} 0 (S)\add S)\overlap{\gamma_{m_x+ad_x}^{\infty}}=\emptyset
\ee
To complete the proof, it suffices to show that for any $p\in\pauli_g$ we have $\comm S p=1$.
Let $\supp p\subset\bl t {-\infty} {t+1}{\infty}$ for some $t\in\N_0$.
W.l.o.g. we set $t\geq0$.
If $t\leq n_+$, $\comm S p=\comm R p = 1$ according to \eqref{propS1}.
In other case $t=m_+ + ad_+ +t'$ for some $a\in\N^\ast$, $0\leq t'< d_+$.
Using \eqref{propS2} we get $\comm S p = \comm {\trans {ad_+} 0 (S)} p=\comm S {\trans {-ad_+} 0 (p)}=1$.
\end{proof}

Next we give a procedure to remove dependent generators of $\PA_g$ from selected regions preserving translational invariance.
The constraints must have generators that do not have support on more than one of these regions at a time.

\begin{prop}\label{prop:erase_constraints}
Let $\PA_g$ have period $d$ and $\cnstrfin{\PA_g}=\sget{R_0}$ for some LTI set $R_0\subset\powersetfin{\PA_g}$.
Let $\gamma\in\powersetfin\Sigma$ be such that for any $S\in R_0$ there is at most  a pair $(m,n)\in\Z^2$ such that
\be
S\overlap {\trans {md}{nd} (\gamma)} \neq \emptyset.
\ee
Then there exists $A\subset \PA_g\overlap\gamma$ such that 
\be
\PA=\sget{\PA_g'},\qquad
\PA_g':=\PA_g-\strans d (A),
\ee
and $S\overlap{\strans d (\gamma)}=\emptyset$ for any constraint $S\in\cnstrfin{\PA_g'}$.
\end{prop}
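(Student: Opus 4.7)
The plan is to reduce the construction of $A$ to a finite $\Z_2$-linear-algebra problem, using the uniqueness hypothesis to decouple distinct translates of $\gamma$. Since $R_0$ is LTI of step $d$ with uniformly bounded range, the number of $\strans d$-orbits in $R_0$ is finite. First, I pick orbit representatives $S_1,\dots,S_N$ for exactly those orbits that contain an element meeting some $\trans{md}{nd}(\gamma)$, translating each so that $S_i\overlap\gamma\neq\emptyset$; the uniqueness hypothesis then forces $\gamma$ to be the only translate of $\gamma$ met by $S_i$. Setting $B_0:=\PA_g\overlap\gamma$ (a finite set since $\PA_g$ is $2$-bounded) and $v_i:=S_i\cap B_0\subset B_0$, the whole argument will take place inside the finite-dimensional $\Z_2$-vector space $V_0\subset\powerset{B_0}$ spanned under symmetric difference by $v_1,\dots,v_N$.

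Next, I choose $A\subset\bigcup_i v_i\subset B_0$ as an \emph{information set} for $V_0$: a finite set with $|A|=\dim V_0$ such that the restriction $\rho(v):=v\cap A$ defines a $\Z_2$-isomorphism $V_0\to\powerset A$. Gaussian elimination produces such an $A$ because the characteristic vectors of $v_1,\dots,v_N$ span a subspace of $\powerset{B_0}$ of dimension $\dim V_0$, with all nonzero coordinates sitting in $\bigcup_i v_i$. The crucial consequence is that each $a\in A$ lies in some $S_i$, so by the uniqueness hypothesis $\supp a\cap\trans{md}{nd}(\gamma)=\emptyset$ for every $(m,n)\neq(0,0)$; in particular the translates $\trans{md}{nd}(A)$ are pairwise disjoint, and more generally $\trans{m'd}{n'd}(S_i)$ contributes to $\trans{md}{nd}(B_0)$ only when $(m,n)=(m',n')$.

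Both conclusions then follow. For $\PA=\sget{\PA_g'}$ with $\PA_g':=\PA_g-\strans d(A)$, translation invariance reduces the task to writing each $a\in A$ as a product over $\PA_g'$: pick $\alpha_i\in\Z_2$ with $(\sum_i\alpha_iv_i)\cap A=\{a\}$ and form $T:=\sum_{i:\alpha_i=1}S_i\in\cnstrfin{\PA_g}$ (sum in the symmetric-difference sense); the disjointness above gives $T\cap\strans d(A)=(T\cap B_0)\cap A=\{a\}$, so $T-\{a\}\subset\PA_g'$, and $\pro T=\phase$ yields $a\in\sget{\PA_g'}$. For the constraint claim, any $S\in\cnstrfin{\PA_g'}\subset\cnstrfin{\PA_g}$ is a finite symmetric difference $\sum_j\trans{m_jd}{n_jd}(S_{i_j})$ of $R_0$-translates; its part inside $\PA_g\overlap{\strans d(\gamma)}$ splits disjointly as $\bigsqcup_{(m,n)}\trans{md}{nd}(w_{m,n})$ with $w_{m,n}:=\sum_{j:(m_j,n_j)=(m,n)}v_{i_j}\in V_0$. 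The hypothesis $S\cap\strans d(A)=\emptyset$ forces $w_{m,n}\cap A=\emptyset$ for every $(m,n)$, and injectivity of $\rho$ then gives each $w_{m,n}=0$, so $S\overlap{\strans d(\gamma)}=\emptyset$.

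The hard part, and the precise place where the uniqueness hypothesis is indispensable, is keeping this finite linear-algebra argument honest across the infinitely many translates of $\gamma$. Without it a single generator in some $S_{i_j}$ could lie in several $\trans{md}{nd}(B_0)$ at once, the decomposition of $S\cap\PA_g\overlap{\strans d(\gamma)}$ would cease to be orthogonal across $(m,n)$, and the bijectivity of $\rho$ on the finite piece $V_0$ would not upgrade to the required global statement about $\cnstrfin{\PA_g'}$.
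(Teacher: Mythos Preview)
Your argument is correct and takes a genuinely different route from the paper's.  The paper proceeds by induction on $|\PA_g\overlap\gamma|$: it picks a single $a\in\PA_g\overlap\gamma$ appearing in some $S\in R_0$, removes $\strans d(\{a\})$, updates $R_0$ by adding the appropriate translate of $S$ to each generator containing a translate of $a$, and repeats.  You instead solve the problem in one shot by $\Z_2$-linear algebra: the orbit representatives $S_1,\dots,S_N$ meeting $\gamma$ produce a finite subspace $V_0\subset\powerset{B_0}$, and you take $A$ to be an information set for $V_0$.  The uniqueness hypothesis then cleanly decouples the translates, so that both the generation claim and the vanishing of constraints over $\strans d(\gamma)$ reduce to the bijectivity of the restriction $V_0\to\powerset A$.

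Two small points worth tightening.  First, when you write $S=\sum_j\trans{m_jd}{n_jd}(S_{i_j})$, a general $S\in\cnstrfin{\PA_g}=\sget{R_0}$ may also involve translates of orbit representatives that meet no translate of $\gamma$; these extra summands contribute nothing to $S\cap\strans d(B_0)$, so your formula for $w_{m,n}$ survives unchanged, but it is worth saying so.  Second, the finiteness of $B_0$ follows from $\PA_g$ being $k$-bounded for some $k$ (part of being LTI), not specifically $2$-bounded.

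What each approach buys: the paper's inductive removal is shorter and entirely elementary, while your linear-algebra formulation makes transparent exactly what $A$ is (an information set for the local constraint code at $\gamma$) and why no smaller set would do.  Your version also isolates more cleanly where the uniqueness hypothesis enters, namely in the orthogonality of the decomposition across translates.
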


\begin{proof}
If $S\overlap{\gamma}=\emptyset$ for any constraint $S\in\cnstrfin{\PA_g}$, clearly $A=\emptyset$ suffices. 
In other case, choose $a\in \PA_g\overlap\gamma$ and $S\in R_0$ such that $a\in S$.
We claim that $\PA=\sget{\PA_g'}$ with $\PA_g':=\PA_g-\strans d (a)$.
Due to translational invariance, it suffices to show $a\in\sget {\PA_g'}$. 
But $a\in\pro{S-\sset{a}}$, and since $\strans d (a) \cap S=\sset a$ we have $(S-\sset{a})\subset \PA_g'$.
Let $S_{m,n}:=\trans {md}{nd} (S)$. Given $S'\in R_0$, let $f(S')=S'+S_{m,n}$ if $\trans{md}{nd}(a)\in S'$, $f(S')=S'$ otherwise. 
Then the set $R_0'=f[R_0]$ generates $\cnstrfin{\PA_0'}$.
Moreover, $\PA_g'$ and $R_0'$ satisfy the conditions of the lemma, because $\supp {S\add S_{m,n}}\subset \supp S\cup\supp {S_{m,n}}$.
Since $|(\PA_g'\overlap\gamma)|=|(\PA_g\overlap\gamma)|-1$, the result follows by induction on $|(\PA_g\overlap\gamma)|$.
\end{proof}

\begin{figure}
\psfrag{2N}{$2N$}
\psfrag{L}{$L$}
\psfrag{K}{$K$}
\psfrag{D}{$D$}
\psfrag{G0}{$\gamma_0$}
\psfrag{G1}{$\gamma_1$}
\psfrag{G2}{$\gamma_2$}
\psfrag{G1X}{$\gamma_1^X$}
\psfrag{G2X}{$\gamma_2^X$}
\psfrag{G3X}{$\gamma_3^X$}
\psfrag{G4X}{$\gamma_4^X$}
\psfrag{G5X}{$\gamma_5^X$}
\begin{center}
\includegraphics[width=.9\columnwidth]{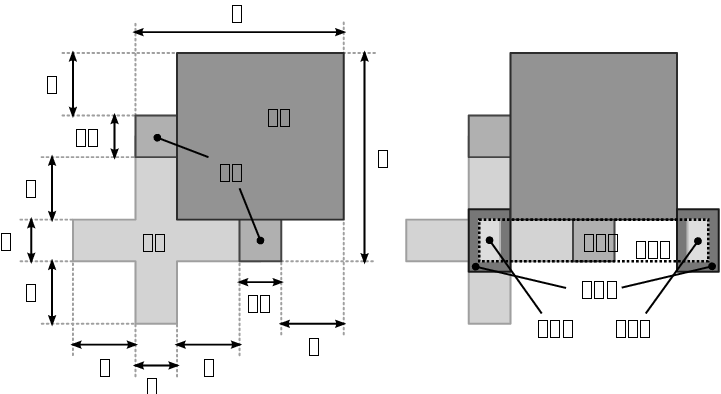}
\end{center}
\caption{
The regions involved in the proof of \thmref{independent}. $\gamma_2^X$ is outlined with dashed lines.
}
\label{fig:independent}
\end{figure}

\begin{thm}\label{thm:independent}
Let $\PA=\cen\PB\pauli$. Then $\PA$ admits LTI independent generators.
\end{thm}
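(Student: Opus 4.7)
The plan is to iteratively prune a given LTI generating set $\PA_g$ of $\PA$ into an independent one, following the three-stage scheme sketched in \secref{centralizer} (regions $A$, $B$, $C$). By \propref{coarse_grain1} I may assume $\PA_g$ is $2$-bounded with step $1$, and by \lemref{bounded_cnstr} the group $\cnstrfin{\PA_g}$ has a LTI generating set $R_0$, say $K$-bounded. Pick a period $D$ much larger than all the lengths $K$, $N(\PA_g,m)$ and $L(\PA_g,m,n)$ that will appear, and partition the plane (up to translations by $D\Z^2$) into a small central square $\gamma_0$ (the $A$-region), four rectangular strips around it (the $B$-regions), and four corner squares (the $C$-regions). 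The periodic family $\strans D(\gamma_0)$ of $A$-regions is then spaced so far apart that no constraint in $R_0$ can straddle two of them, and I apply \propref{erase_constraints} with $\gamma=\gamma_0$ to erase, in a translationally invariant way, all the generators in $\strans D(A)$ that were needed by constraints touching $\gamma_0$. After this step, every remaining element of $\cnstrfin{\PA_g}$ is supported in the union of $B$-regions and $C$-regions.

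The heart of the proof is the $B$-region stage, which is exactly the geometric argument explained in \secref{centralizer}. Let $R$ be a remaining finite constraint with nonempty support in some distinguished strip $B_0$. Inside the long, narrow neighborhood of $B_0$ the possible local pieces of $R$ form a finite set, so after enough periods along the strip the same configuration of generators must appear twice, and the symmetric difference of the two shifted copies gives a constraint that agrees with $R$ in a fixed central window of $B_0$ but is empty outside a narrow band around the strip. Iterating this to both sides constructs an infinite constraint $R_\infty$ along the whole strip; chopping out a long finite subsegment yields a finite subset $S\subset\PA_g$ whose product is $p_1p_2$ for two Pauli operators $p_1,p_2$ localized near the two chopping lines, at a distance far larger than the range of $\PB_g$. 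By \propref{separate}, each $p_i$ lies in $\phase\PA$ separately, so \corref{long_is_narrow_pauli} provides $S_1,S_2\in\powersetfin{\PA_g}$ localized near $p_1,p_2$ with $p_i\in\pro{S_i}$. Adding $S$, $S_1$, $S_2$ via symmetric difference produces the desired replacement $R_0\in\cnstrfin{\PA_g}$ that matches $R$ inside $B_0$ but has empty support in every other $B$-region. This means constraints on generators in distinct $B$-regions are independent, so \propref{erase_constraints} applies a second time (with $\gamma$ taken to be a single $B$-region) and kills all remaining constraints that touch $B$.

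At this point every surviving constraint is supported entirely in $\bigcup \strans D(C)$. The $C$-regions are pairwise disconnected and each is at distance greater than the range of $\PA_g$ from every other $C$-region, so the restriction of any constraint to a single $C$-region is itself a constraint. Hence only local constraints remain, which can be removed by a final application of \propref{erase_constraints} with $\gamma$ equal to a single $C$-region. The resulting LTI set $\PA_g'$ generates $\PA$ and has $\cnstrfin{\PA_g'}=\sset\emptyset$, i.e., it is independent.

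The routine parts (coarse graining, applying \propref{erase_constraints}, and the disconnected $C$-stage) are straightforward bookkeeping. The main obstacle is the $B$-region stage: one has to choose the constants $D$, $K$, and the widths of the strips carefully so that \propref{extend_narrowing} produces an infinite constraint along $B_0$ that is truly localized in a narrow band, and so that the chopped operators $p_1,p_2$ are separated by more than the range of $\PB_g$, allowing \propref{separate} to split them. Keeping the construction translationally invariant throughout is the other delicate point, handled by performing every erasure simultaneously on all $D\Z^2$-translates of the chosen region.
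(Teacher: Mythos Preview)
Your three-stage plan and the tools you invoke (\propref{erase_constraints}, \propref{extend_narrowing}, \propref{separate}, \corref{long_is_narrow_pauli}) are exactly those of the paper, but the geometry you describe is inverted and this breaks both the $B$- and the $C$-stage.

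The hypothesis of \propref{extend_narrowing} is that $R$ has no support in the strips $\bl{-l}{-M}{l}{0}$ and $\bl{-l}{m}{l}{m+M}$ immediately above and below the band; this is what forces the finitely many possible cross-sections of $R$ along the band to recur. In your scheme the first stage clears only a \emph{small} central square, so a surviving constraint may still occupy the large corner regions sitting directly above and below each $B$-strip, and the hypothesis fails. The informal \secref{centralizer} already flags this: ``we take the size of the regions $A$ as large as needed, so that near $B_0$ any such $R$ has support only in a long and narrow band.'' Accordingly, in the actual proof the first region is the \emph{big} square $\gamma_0=\bl{K}{K}{D}{D}$ filling almost the whole fundamental domain; only after it is cleared are constraints confined to the width-$K$ grid where narrowing applies. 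The second region $\gamma_1$ is a pair of tiny gateways inside the arms of that grid, and the third region $\gamma_2$ is the leftover cross. The same inversion also spoils your $C$-stage: with a small central square the four corners of each cell abut the corners of the neighbouring cells, so $\strans D(C)$ is one connected set rather than a family of disconnected pieces; in the paper it is the removal of the gateways $\gamma_1$ that disconnects the crosses $\gamma_2$ from one another. Swapping the roles of your large and small regions (big squares first, tiny gateways second, crosses last) recovers the paper's argument.
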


\begin{proof}
We will construct from $\PA_g$ a LTI set of independent generators $\PA_3$ through a series of intermediate sets $\PA_i$ so that $\PA_3\subset\PA_2\subset\PA_1\subset\PA_0:=\PA_g$.
At each step we will apply \propref{erase_constraints} to a given region $\gamma_i$ and obtain 
\be\label{step}
\PA_{i+1}=\PA_i-\strans D (A_i)
\ee
for a suitable $D\in\N^\ast$, $A_i\subset\PA_i\overlap{\gamma_i}$.
Let us define $R_i:=\cnstrfin{\PA_i}$.
Since we apply \propref{erase_constraints}, we have $S\overlap {\strans D (\gamma_i)}=\emptyset$ for any constraint $S\in R_j$ if $j> i$.
The goal is thus to have $\sset\emptyset=R_3\subset R_2\subset R_1\subset R_0.$
$R^i_g$ will denote a LTI set of generators of $R_i$ of period $D$.

\noindent {\bf 0)}
W.l.o.g. we assume that $\PA^0_g$ and $R_g^0$ have period $1$ and are $2$-bounded.
Let $n_1:=N(\PA_g,2)$ according to \corref{long_is_narrow_pauli} and $n_2:=N(\PA_g,2)$ according to the axis-exchanged version of \corref{long_is_narrow_pauli}.
Set $N=\max(1,n_1,n_2)$ and $K:=2N+2$. 
Let $l_1=L(\PA_g,K,2)$ according to \propref{extend_narrowing} and $l_2=L(\PA_g,K,2)$ according to the axis-exchanged version of \propref{extend_narrowing}.
Set $L:=\max(l_1,l_2)$ and $D:=K+2L+2N$.
The geometry of the proof, see \fref{independent}, involves the regions
\begin{align}
\gamma_0&:=\bl K K D D,\quad &
\gamma_1&:=\gamma_1^X\cup\gamma_1^Y,\quad&
\gamma_2&:=\bl {2-L} 0 {K+L-2} K\cup\bl 0 {2-L} K {K+L-2},\nl
\gamma_1^X&:=\bl {K+L-2}  0 {K+L+2} {K}, \quad &
\gamma_1^Y&:=\bl   0 {K+L-2}  K {K+L+2},\quad&
\gamma_2^X&:=\bl {N}  0 {D+K-N} {K}, \nl
\gamma_2^Y&:=\bl  0 {N}  K {D+K-N},\quad&
\gamma_3^X&:=\bl {N} {0} {K-N} {K}, \quad &
\gamma_3^Y&:=\bl {0} {N} {K} {K-N},\nl
\gamma_4^X&:=\trans D 0 (\gamma_3^X), \quad &
\gamma_4^Y&:=\trans D 0 (\gamma_3^X),\quad&
\gamma_5^X&:=\thk^N(\gamma_3^X\cup\gamma_4^X), \nl
\gamma_5^Y&:=\thk^N(\gamma_3^Y\cup\gamma_4^Y),\quad&
\gamma_6^X&:=\gamma_2^X\cup\gamma_5^X, \quad &
\gamma_6^Y&:=\gamma_2^Y\cup\gamma_5^Y.
\end{align}

\noindent {\bf 1)}
$\PA_0$, $R_g^0$ and $\gamma_0$ satisfy the conditions of \propref{erase_constraints} and we get $\PA_1$ setting $i=0$ in \eqref{step}.
Let for $\alpha=X,Y$
\be
Q^\alpha:=\set{S\in R_1}{S\overlap {\gamma_1^\alpha}\neq\emptyset, S|_{\gamma_6^\alpha}\neq S}.
\ee
We claim that $R_1=\sget{R^1_g}$ with
$R^1_g:=R^1-\strans D (Q^X\cup Q^Y).$
Take any $R\in R_1$. 
The set
\be
\set{\gamma\in \strans D (\sset{\gamma_1^X,\gamma_1^Y})}{R\overlap\gamma\neq\emptyset}
\ee
is finite, and thus we proceed by induction on its cardinality.
If it is empty $R\in R_1$. In other case, due to translation and axis-exchange symmetries we can assume that $S\overlap {\gamma_1^X}\neq\emptyset$.
Then it suffices to show that there exists $R'\in R_1$ such that  $(R\add R')\overlap{\gamma_1^X}=\emptyset$.
To construct $R'$ we first notice that $R\in R_0$ and apply \propref{extend_narrowing} ---up to a translation $\trans {K+L} 0$--- to obtain $S\in R_0$ such that
\be
 S\subset {\PA^0_g|_{\gamma_{-\infty}^{\infty}}},\quad (S\add R)\overlap{\gamma_{-M}^{M}}=\emptyset,\quad \gamma_{a}^{b}:=\bl {K+L+a} 0 {K+L+b}{K}.
\ee
Setting $S':=S|_{\gamma_2^X}$ we get $(S' \add R)\overlap{\gamma_{-M}^{M}}=\emptyset$ and 
$\supp{S'}\subset \gamma_3^X\cup\gamma_4^X.$
From \propref{separate}, \corref{long_is_narrow_pauli} and the involved geometry it follows the existence of $S_3,S_4\subset\powersetfin{\PA_0}$ such that $\supp {S_i}\subset \thk^N {\gamma_i^X},$ $i=3,4,$
and $S'\add S_3\add S_4\in R_0$. We can take $R':=S' \add S_1\add S_2$ because $R'|_{\gamma_6^X}=R'$ and $(R\add R')\overlap{\gamma_{-M}^{M}}=\emptyset$.

\noindent {\bf 2)}
$\PA_1$, $R_g^1$ and $\gamma_1$ satisfy the conditions of \propref{erase_constraints} and we get $\PA_1$ setting $i=1$ in \eqref{step}.
We claim that $R^2_g:=\strans D \set{S\in R_2}{\supp S\subset \gamma_2}$
generates $R_2$. Indeed, 
\be\label{total}
\Sigma=\strans D (\gamma_0\cup\gamma_1\cup\gamma_2),
\ee
gives $\supp S\subset\strans D (\gamma_2)$ for any $S\in R_2$.
Then $R_2\ni S_{m,n}:=S|_{\trans m n (\gamma_2)},$
because for any $b\in\PB_g$ there is at most one pair $(m,n)\in\Z^2$ with $\supp b\cap \trans m n (\gamma_2)\neq\emptyset$.

\noindent {\bf 3)}
$\PA_3$, $R_g^2$ and $\gamma_2$ satisfy the conditions of \propref{erase_constraints} and we get $\PA_3$ setting $i=2$ in \eqref{step}.
Due to \eqref{total} we have $R_4=\sset\emptyset$ as desired.
\end{proof}

\section{Charge and strings}\label{sec:charge_cnstr}

This section defines charges, establishes their duality with constraints and introduces strings.

\subsection{Charge groups}

Let $\PA$ be a LPG with independent generators $\PA_g$ of a LPG.
We can define
\be
\morphfin \PA:=\morphfin{\PA_g}
\ee
because the choice of the generating set $\PA_g$ is immaterial.
\begin{defn}
The charge group of $\PA$ is
\be\label{charge_group}
 \charge\PA:=\frac{\morphfin \PA}{\syndri{\PA}\pauli}.
\ee
Its elements are the charges of $\PA$.
\end{defn}
When $\phi_1,\phi_2\in c\in \charge\PA$ we write $\phi_1\sim\phi_2$ and $\chg \phi=c$.
A LPG isomorphism  $F$ from $\PA\subset\pauli_A$ to $\PB\subset\pauli_B$ induces an isomorphism
\functmap {F^\ast} {\morphfin {\PA}}{\morphfin\PB}{\phi}{\phi\circ F\inv}
that maps charges to charges because $F^\ast [\syndri\PA{\pauli_A}]=\syndri\PB{\pauli_B},$ giving rise to an isomorphism
\be
\funct {\charge F} {\charge {\PA}}{\charge\PB.}
\ee
Coarse graining does not affect the group of charges, in the sense that there exists a natural isomorphism
\be\label{coarse_natural}
\charge\PA\simeq \charge{\coarse l \PA}
\ee
for any $l\in\N^\ast$.
LPG composition gives yet another natural isomorphism
\be\label{composition_natural}
\charge{\PA\otimes\PB}\simeq\charge{\PA}\times\charge\PB.
\ee

\subsection{Charge of generators}

\begin{defn}
We say that $a\in\PA_g-\phase$ has charge $c\in\charge\PA$ if $\mor{\PA_g} {\sset{a}}\in c.$
\end{defn}

The charge of $a$, denoted $\chgr {\PA_g} a$, depends on the whole $\PA_g$ and
\be\label{charge_composition}
\chg \phi = \prod_{a\in\negr{\PA_g}\phi} \chgr{\PA_g} a.
\ee
We need a prescription that tell us how the charges of generators change as the generating set changes preserving translational invariance.
\begin{prop}\label{prop:charge_change}
Let $\PA$ be a LPG with a LTI set of independent generators $\PA_g$ of period $L$.
Let $a,b\in\PA_g-\phase$ satisfy $b\not\in\strans L (\sset{a})$.
Then
\be
\PA_g':=\PA_g\cup\strans L (\sset{ab})-\strans L (\sset{a})
\ee
is a LTI set of independent generators of $\PA$ and
\be\label{charge_change}
\chgr {\PA_g'} {ab} = \chgr{\PA_g} a, \qquad
\chgr {\PA_g'} {b} = \chgr{\PA_g} a \chgr{\PA_g} b.
\ee
\end{prop}

\begin{proof}
Since $ab\not\in\PA_g$ because $\PA_g$ is independent, we have $b,ab\in\PA_g'$. 
Then $a\in\sget{\PA_g'}$ and by translational invariance $\PA=\sget{\PA_g'}$.
Given $S'\in\cnstrfin{\PA_g'}$, let
\be
z:=\set{(i,j)\in \Z^2}{\trans{iL}{jL} (ab)\in S'}.
\ee
Then we can construct $S\in \cnstrfin{\PA_g}$ setting
\be
S:=S'\add \sum_{(i,j)\in z} \trans{iL}{jL} (\sset {a,b})-\sum_{(i,j)\in z} \trans {iL}{jL} (\sset{ab}).
\ee
But $|S|\geq |S'|$ because $\strans L (\sset{a})\cap S'=\emptyset$, and thus $\PA_g'$ is independent.
Let
\be
\phi_a:=\mor{\PA_g} {\sset{a}}, \qquad \phi_b:=\mor{\PA_g} {\sset{b}}.
\ee 
To recover \eqref{charge_change}, apply \eqref{charge_composition} to both $\PA_g$ and $\PA_g'$ to get
\begin{align}
\chg {\phi_a}&=\chgr{\PA_g} a = \chgr{\PA_g'} {ab},\nl
\chg {\phi_b}&=\chgr{\PA_g} b = \chgr{\PA_g'} {ab}\chgr{\PA_g'} {b}.
\end{align}
\end{proof}

\subsection{Charge-constraint duality}

Given $R\in \cnstr{\PA_g}$ and $c\in\charge \PA$ we can define
\be
c \cdot R := \phi(R),\qquad \phi \in c,
\ee
because the choice of $\phi$ is inmaterial.
\begin{prop}\label{prop:constraint_duality}
If $\cnstr{\PA_g}$ is finite, $\cnstr{\PA_g}$ and $\charge \PA$ are dual.
\end{prop}

\begin{proof}
We construct dual sets of generators for $\cnstr{\PA_g}$ and $\charge \PA$.
Let $C\subset\charge\PA$ be the set of charges $c$ such that $\chgr{\PA_g} {a_c}= c$ for some $a_c\in\PA_g$.
Clearly $\charge\PA=\sget {C}$ and we can choose some countable $C_g\subset C$ as an independent set of generators of $C$.
For each $c\in C_g$ define the set $R_c\subset\PA_g$ as follows
\begin{align}
R_c&:=\bigcup_{c'\in \bar c}{A_{c'}}, \qquad 
\bar c:=\charge\PA-\sget {C_g-\sset c},\nl
A_c&:=\set{a\in\PA_g-\phase}{\chgr{\PA_g}{a}= c}.
\end{align}
Checking that $R_c\in\cnstr{\PA_g}$ amounts to show that $\phi(R_c)=1$ for every $\phi\in\syndri {\PA_g}{\pauli}$.
But given $c_1,c_2\in \bar c$, $c_3,c_4\not\in\bar c$ we have $c_1c_2,c_3c_4\not\in\bar c$ and $c_1c_3\in\bar c$.
Then $\chg \phi =1\not\in \bar c$ gives $|R_c \cap \negr{\PA_g}\phi|$ and thus $\phi(R_c)=1$ as desired. 
Since $a_c\in R_{c'}$ iff $c=c'$, for $c,c'\in C_g$ we have $c\cdot R_{c'}=\mor{\PA_g}{\sset{a_c}}(R_{c'})=1-2\delta_{c,c'}$.
Thus $C_g$ is finite. 
To show that the $R_c$ generate $\cnstr{\PA_g}$, consider that $b\in R\in\cnstr{\PA_g}$ with $a_c\not\in R$ for every $c\in C_g$.
There exists $A\subset \set{a_c}{c\in C_g}$ such that $\chgr{\PA_g}b=\prod_{a\in A}\chgr{\PA_g}a$.
Thus $\phi(R)=-1$ for $\phi=\mor{\PA_g}{\sset b\cup A}\in\syndri{\PA}\pauli$, a contradiction, showing that $R$ is empty.
\end{proof}

\subsection{Coarse graining}

We next show that, given a collection of LPGs with independent generators and finite charge groups, by coarse graining the lattice it is possible to gain charge translational symmetry and other properties. First, there exist generators with any given charge ---among a generating set of charges--- and support in a single site. Second, given a Pauli operator $p$ that commutes with all generators in $\PA_g$ with support outside a connected region $\gamma$, there exist another operator $p'$ that anticommutes with the same generators as $p$ but has support in a region only slightly larger that $\gamma$ ---preparing the ground for string operators. Third, given an element of one of this LPGs with support in a block, its generators have support in a slightly larger block.

\begin{prop}\label{prop:coarse_grain2}
Consider a set of LPGs $\bar \PA_1,\dots,\bar \PA^n$, on a given qubit lattice, all admiting LTI sets of independent generators and with finite charge groups.
There exists $L\in\N^\ast$ such that the coarse grained LPGs 
\be
\PA^k=\coarse L {\bar \PA^k}\subset\pauli, \qquad k=1,\dots,n,
\ee
for any independent set of generators $\sset{c_l^k}_{l=1}^{m_k}$ of $\charge{\PA^k}$, $m_k\in \N_0$, 
admit LTI sets of independent generators $\PA^k_g$ such that the properties 1 and 2 in \propref{coarse_grain1} and the following ones are satisfied.
\begin{enumerate}
\setcounter{enumi}{2}
  \item 
  For any $c\in \charge {\PA^k}$ there exists $\phi\in c$ such that $\suppr{\PA_g^k} \phi\subset \sset \Site.$
  \item 
  $\phi\sim \trans i j (\phi)$ for any $\phi\in \morphfin {\PA^k}$ and $i,j\in\Z$.
  \item 
  For fixed $k$, for any $p\in\pauli$ and connected set of sites $\gamma\in\powersetfin\Sigma$ with $\suppr{\PA^k_g} p\subset \gamma$ there exists $p'\in\pauli$ with
  $\syndr {\PA^k}{pp'}=\emptyset$, $\supp {p'}\subset\thk^1 (\gamma).$
  \item For any $A\in\powersetfin{\PA_g^k}$ and $\gamma=\bl i j {i+L}{j+L}$, where $i,j\in\Z$ and $L\in\N_0$,
  \be
   \supp {\pro A} \subset \gamma\quad\Longrightarrow\quad \supp A \subset\thk^1 (\gamma).
  \ee
 \item
For every $l=1,\dots,m_k$ there exists $a_l^k\in\PA_g^k$ with charge $c_l^k$, $\supp{a_l^k}= \sset \Site$.
 \end{enumerate}
\end{prop}

\begin{proof}
All properties are preserved under coarse graining.
We show, for properties $x=3,\dots,7$, that if properties 1 to $x-1$ are satisfied then property $x$ is also satisfied by further coarse graining or changing the sets of generators.

\noindent {\bf 3)}
Since $\charge{\PA^k}$ is finite, there exist $m\in\N^\ast$ such that for any $k$ and $c\in \charge {{\PA^k}}$ there exists $\phi\in c$ with
$\suppr{\PA_g^k} \phi\subset \bl 0 0 {m}{m}.$
Then the LPGs $\coarse {m}{\PA^k}$ satisfy point 3.

\noindent {\bf 4)}
The key observation here is that for any $\phi,\phi'\in\morphfin {\PA^k}$, $i,j\in\Z$, we have
$\phi\sim\phi'$ iff $\trans{i}{j}(\phi)\sim\trans {i}{j} (\phi').$
For each $k$ and $c\in\charge{\PA^k}$, choose $\phi_c^k\in\morphfin {\PA^k}$ such that 
$\suppr{\PA_g^k} {\phi^k_c}\subset \sset\Site.$
Since $\charge{\PA^k}$ is finite, there exist $m_1,m_2\in\N^\ast$, $m_1<m_2$, such that $\trans {m_1}0(\phi^k_c)\sim\trans {m_2}0 (\phi^k_c)$ for any $k$, $c\in \charge {\PA^k}$ and $i,j\in\Z$.
Or equivalently, $\phi^k_c\sim\trans {m}0 (\phi^k_c)$ with $m=m_2-m_1$.
Then for any $\phi\in c\in\charge{\PA^k}$ we have $\phi\sim \phi_c^k\sim\trans {m} 0  (\phi_c)\sim \trans{m}0 (\phi).$
The same reasoning in the other axis gives $m'\in\N^\ast$, and the LPGs $\coarse {mm'}{\PA^k}=\sget{{\coarse{mm'}{\PA_g^k}}}$ satisfy properties 1-4.

\noindent {\bf 5)}
For each $k$ and for each $\phi\in\syndri{\PA^k}\pauli$ with $\suppr{\PA_g^k} \phi\subset \bl 0 0 3 3$, choose $m'\in\N_0$ such that there exists $p\in\pauli$ with 
$\syndr{\PA^k} p = \phi$ and $\supp p\subset \thk^{m'} (\bl 0 0 3 3).$
Let $m-1$ equal the largest of such $m'$. 
We will show that for any $k$, $\phi\in\syndri {\PA^k}\pauli$ and $\gamma\in\powersetfin\Sigma$ with $\suppr{\PA_g^k} \phi\subset \gamma$ and $\gamma$ connected, there exists $p\in\pauli$ with
$\syndr {\PA^k}p=\phi$ and $\supp p\subset\thk^{m} (\gamma)$.
Then the LPGs $\coarse {m}{\PA^k}=\sget{{\coarse{m}{\PA_g^k}}}$ satisfy properties 1-5.
Let $\sigma\in\Sigma$ be such that the set $\gamma':=\gamma-\sset{\sigma}$ is connected.
There is always such a site if $\gamma\neq\emptyset$.
If $|\gamma|=1$, just notice that $\range {\thk^1 (\gamma)}=3$ and
$\thk^{m}(\gamma)=\thk^{m-1}(\thk^1 (\gamma)).$
In other case, choose $\sigma'\in\Sigma$ adjacent to $\sigma$.
Set 
 \be
 \phi_0:=\mor{\PA_g^k}{\negr{\PA_g^k}\phi \overlap {\sset{\sigma}}}.
 \ee
Due to the second property of \propref{coarse_grain1} $\suppr{\PA_g^k}{\phi_0}\subset\gamma_0:=\thk^1 (\sset\sigma)$.
 Choose $\phi_1\in \morphfin{\coarse {l_2}{\PA^k}}$ and $p\in\pauli$ with $\phi_0\sim\phi_1$ and
 \be
 \suppr {\PA_g^k} {\phi_1}\subset\sset{\sigma'},\qquad
 \syndr{\PA^k}p=\phi_0\phi_1,\qquad
 \supp p \subset \thk^{m-1}(\gamma_0).
 \ee
This is always possible because $\sigma'\in\thk^1(\sset\sigma)$.
 The result follows by induction on $|\gamma|$ observing that
\be 
\suppr{\PA^k_g}{\phi\phi_0\phi_1}\subset\gamma',\qquad
\thk^{m-1}(\gamma_0)\cup  \thk^{m}(\gamma')=\thk^{m}(\gamma).
\ee

\noindent {\bf 6)}
We will show that for any $k$, $A\in\powersetfin{\PA_g^k}$ and $\gamma=\bl i j {i+L}{j+L}$, where $i,j\in\Z$ and $L\in\N_0$,
  \be
   \supp {\pro A} \subset \gamma\quad\Longrightarrow\quad \supp A \subset\thk^2 (\gamma).
  \ee
Then the LPGs $\coarse {2}{\PA^k}=\sget{{\coarse{2}{\PA_g^k}}}$ satisfy properties 1-6.
Let $a\in A$ satisfy $\supp a\not\subset\thk^2(\bl{i}{j}{i+L}{j+L}).$
W.l.o.g, we assume that $\supp a\subset \bl u v {u+2}{v+2}$ with $u\geq i+1+L$.
Choose $w\in\N^\ast$ such that $\trans w 0 (a)\not\in A$ and $p\in\pauli$ such that
$\supp p \subset \thk^1 (\bl u v {u+w+2} {v+2})$, $\syndr {\PA^k} p = \phi\,\trans w 0 (\phi)$, $\phi:=\mor{\PA^k_g} {\sset a}.$
Then $\supp p\cap\supp {\pro A}=\emptyset$ implies $\comm p A=1$, but at the same time $\comm p A=\phi(A)=-1$, a contradiction.

\noindent {\bf 7)}
For each $k$, let us show that there exist a list of LTI sets of independent generators $\PA_g^k=\PA_g^{k,0}, \PA_g^{k,1},\dots, \PA_g^{k,m_k}=\tilde \PA_g^k$
such that for any $t=1,\dots,m_k$ and any $l=1,\dots, t$ there exists $a_l\in\PA_g^{k,t}$ with charge $c_l^k$ and satisfying \eqref{suppakl}.
Moreover, we do it in such a way that all these generating sets preserve properties 1-6, so that the final generating sets $\tilde\PA_g^k$ satisfy all the desired properties.
We construct $\PA_g^{k,t}$ from $\PA_g^{k,t-1}$ as follows.
There exists $\phi\in c_t^k$ with $\suppr{\PA_g^{k,t-1}} \phi = \sset\Site$.
Also, there exists $\hat a_0\in \negr {\PA_g^{k,t-1}} \phi$ such that $\hat a_0\neq a_k$ for any $1\leq k<t$, because the charge $c_t^k$ is independent of the charges $c_1^k,\dots,c_{t-1}^k$.
Label the elements of $\negr {\stab_g^{k,t-1}}\phi-\sset{\hat a_0}$ as $\sset{\hat a_i}_{i=1}^r$ and apply \propref{charge_change} repeatedly to perform the substitutions 
$\hat a_i\rightarrow \hat a_i'=\hat a_0\hat a_i$, $i=1,\dots, r$.
Let $\PA_g^{k,t}$ be the resulting set of generators.
Then $\chgr{\PA_g^{k,t}}{\hat a_0}=c_t^k$ and the rest of generators with support in $\Site$, updated or not, preserve their charge.
\end{proof}

\subsection{Strings}

Let $\PA$ satisfy the properties in \propref{coarse_grain2}.
We define the following sets of `string operators'.
\begin{defn}
Given $\phi\in\morphfin \PA$, $\gamma\subset\Sigma$, we set
\begin{align}
\str {\phi}{\gamma}:= \set{p\in\pauli}{\syndr{\PA_g}p=\phi, \supp p\subset\thk^1 (\gamma)}.
\end{align}
Given $\phi_1,\phi_2\in\morphfin\PA$  with $\suppr{\PA_g}{\phi_i}=\sset{\sigma_i}\subset\Sigma$, we set
\begin{align}
\strd {\phi_1}{\phi_2}:= \str {\phi_1\phi_2}{\pth{\sigma_1,\sigma_2}},&&\str 1 1:=\sset \id,\nl
\strd 1{\phi_1}:=\str {\phi_1}1:= \str{\phi_1}{\sset {\sigma_1}}.&
\end{align}
Finally, we set for $c\in\charge\PA$ and a path $\gamma:=(\sigma_i)_{i=1}^n$,
\begin{align}
\str {c}{\gamma}:=\bigcup \set{\str{\phi_1\phi_2}\gamma}{\phi_1,\phi_2\in c, \supp {\phi_1}\subset{\sigma_1}, \supp {\phi_2}\subset{\sigma_n}}.
\end{align}
\end{defn}

In general $\str\phi\gamma$ can be empty.
However, suppose that for some $n\in\N_0$
\be
\phi=\prod_{k=1}^n\phi_k,\quad\phi_k\in\syndri{\PA}\pauli,\qquad
\gamma=\bigcup_{k=1}^n\gamma_k,\quad \gamma_k\in\powersetfin\Sigma,
\ee
with the sets $\gamma_k$ connected and $\suppr{\PA_g}{\phi_k}\subset\gamma_k$ for $k=1,\dots,n.$
Then due to property 5 $\str\phi\gamma$ is nonempty.
In particular, $\strd {\phi_1}{\phi_2}$ is nonempty iff $\phi_1\sim\phi_2$, and $\str c\gamma$ is always nonempty.

\begin{figure}
\begin{center}
\includegraphics[width=.9\columnwidth]{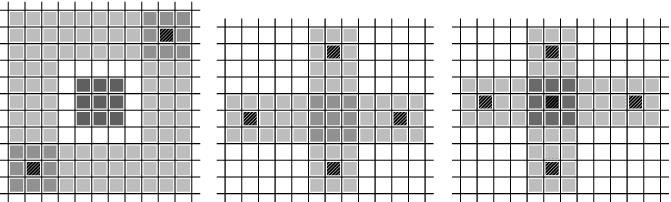}
\end{center}
\caption{
(Left) The geometry of \propref{inside} for $l=8$.
The striped sites are the common endpoints of two string operators $p_1$ and $p_2$, the first running trough the bottom and right side, the second through the top and right side.
The central shaded set of sites is $\bl 3 3 {l-2}{l-2}$.
(Center) The geometry of \propref{mutual} for $a_1=-5$, $a_2=4$, $b_1=-3$ and $b_2=4$.
The striped sites are the endpoints of two string operators $p_A$ and $p_B$.
(Right) The geometry of \propref{spin} for $a_1=-4$, $a_2=5$, $b_1=-5$ and $b_2=3$.
The striped sites are the endpoints of four string operators $p_i$, each with an endpoint in the central striped site.
}
\label{fig:inside}
\label{fig:mutual}
\label{fig:spin}
\end{figure}

The product of two strings with common endpoints gives a `closed' string operator.
Such a Pauli operator belongs to the centralizer of $\PA$.
We next show that when the two strings have nontrivial charge and enclose a block, the support of the generators of the resulting centralizer element must cover entirely the interior of the block.
The geometry is displayed in \fref{inside}.

\begin{prop}\label{prop:inside}
Let $\PB=\cen\PA\pauli$, $l\in\N^\ast$. Let $\phi\in c\in\charge\PA$ with $\suppr{\PA_g}\phi=\sset\Site$ and $c$ nontrivial, and $p_1,p_2\in\pauli$, $B\in\powersetfin {\PB_g}$ with
\be
p_1 \in\strd {\phi}{\trans l l (\phi)},\qquad 
p_2 \in\strd {\trans ll (\phi)}\phi,\qquad
p_1p_2\in\pro{B}.
\ee
Then $\bl 3 3 {l-2} {l-2}\subset  \supp {B}$.
\end{prop}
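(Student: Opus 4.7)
Proof proposal.

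The plan is to argue by contradiction. Suppose $\sigma_0=(i_0,j_0)\in\bl 3 3 {l-2}{l-2}$ with $\sigma_0\notin\supp B$. Under this assumption I will construct a finite Pauli $q\in\pauli$ satisfying $\syndr\PA q=\phi$. Since $\phi$ carries the charge $c$ at the single site $\Site$, producing such a $q$ would give $\phi\in\syndri\PA\pauli$, hence $\phi\sim 1$ and $c$ trivial, contradicting the hypothesis.

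First I would exploit the $2$-boundedness of $\PB_g$: by \corref{centralizer} the centralizer $\PB=\cen\PA\pauli$ is itself a LPG, and by \propref{coarse_grain1} I may assume its LTI generating set has elements supported in $2\times 2$ blocks. Partition $B=B_L\sqcup B_R$ by a vertical cut between columns $i_0$ and $i_0+1$: assign each $b\in B$ to the side containing its $2\times 2$ block, with straddling blocks (columns $\{i_0,i_0+1\}$) placed in $B_L$ by convention. The key input is that the straddling blocks whose $2\times 2$ region contains $\sigma_0$ still have no action at $\sigma_0$, by the contradiction hypothesis; hence $\supp\pro{B_L}$ and $\supp\pro{B_R}$ both avoid $\sigma_0$, both products lie in $\PB$, and $\pro{B_L}\cdot\pro{B_R}=p_1p_2$ (up to phase).

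Next I would form $q':=p_1|_{\mathrm{cols}\le i_0}\cdot p_2|_{\mathrm{cols}\le i_0}\cdot\pro{B_L}$ and compute $\syndr\PA{q'}$: the $\Site$-endpoints of the two truncated strings contribute $\phi^2=1$, the cut endpoints at $(i_0,0)$ and $(i_0,l)$ contribute charge-$c$ morphisms $\phi_1,\phi_2$, and $\pro{B_L}\in\PB$ contributes nothing, so $\syndr\PA{q'}=\phi_1\phi_2$. Applying a second, horizontal cut through $\sigma_0$ analogously decomposes $B_L$ into upper and lower pieces, producing two more charge-$c$ endpoints on row $j_0$. Because $\sigma_0\notin\supp B$ the two orthogonal partitions are compatible, so one can take the corresponding quadrant pieces of $B$, combine them with the quadrant restrictions of $p_1,p_2$, and pair up the four residual endpoints surrounding $\sigma_0$. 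All four charges cancel, leaving exactly $\phi$ at $\Site$. The main obstacle is the syndrome bookkeeping at the intersection of the two cuts: one must track the support and phase of each quadrant piece carefully and verify that the hypothesis $\sigma_0\notin\supp B$ is precisely what permits a consistent simultaneous partition along both cuts — without it, a $B$-element acting on $\sigma_0$ would obstruct the clean orthogonal split and leave a residual charge $\phi_{\sigma_0}$ in the bulk that no finite Pauli could annihilate.
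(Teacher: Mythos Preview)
Your plan has a genuine gap: the operators you build never carry a nontrivial charge, so no contradiction is reached.

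Concretely, your $q' = p_1|_{\mathrm{cols}\le i_0}\cdot p_2|_{\mathrm{cols}\le i_0}\cdot\pro{B_L}$ has $\syndr\PA{q'}=\phi_1\phi_2$ with $\phi_1,\phi_2\in c$; hence its charge is $c^2=1$, which is automatic for any element of $\pauli$ and yields nothing. The second horizontal cut does not help: each quadrant piece $\pro{B_{XY}}$ lies in $\PB=\cen\PA\pauli$ and contributes \emph{trivial} syndrome, while every quadrant restriction of $p_1,p_2$ that contains $\Site$ contributes $\phi$ there --- and since both strings end at $\Site$, those two $\phi$'s always cancel. No combination of quadrant pieces of $B$ with quadrant restrictions of $p_1,p_2$ will leave a single $\phi$ at $\Site$ (or any other morphism of nontrivial charge). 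The hypothesis $\sigma_0\notin\supp B$ plays no role in your partition: it is not needed to split $B$ into well-defined left/right subsets, and the resulting products are in $\PB$ regardless.

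The paper's argument uses the missing site differently. It does not partition $B$ by a full vertical line; instead it removes from $p_1p_2$ the product $b$ of those $B$-generators touching the half-column $\gamma_1=\{i\}\times[j,\infty)$ \emph{above} the gap. Because $(i,j)\notin\supp B$, every element of $B\overlap{\gamma_1}$ has support in rows $\ge j$, so $p_3:=b\,p_1p_2\in\PB$ has no support on $\gamma_1$. Now restricting $p_3$ to the left half-plane $\gamma_2$ gives an operator $q$ whose $\PA$-syndrome is concentrated near the single bottom crossing $(i,0)$; there $p_3$ coincides with $p_1$ (since $b$ and $p_2$ are far away), so $\syndr\PA q$ equals the cut-syndrome $\phi_1\in c$ of $p_1|_{\gamma_2}$. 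That produces a Pauli with syndrome in $c$, contradicting nontriviality. The essential idea you are missing is to use the gap to kill one of the two crossings of a $\PB$-element across the cut, leaving a restriction whose syndrome is a \emph{single} charge-$c$ morphism rather than a canceling pair.
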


\begin{proof}
It is enough to consider $l\geq 6$. 
Let $\gamma:=\bl 3 3 {l-2} {l-2}$ and suppose that there is a site $\site i j \in \gamma-\supp {B}$.
Define the sets of sites
\be
\gamma_1:={\bl i j {i+1} \infty },\quad\gamma_2:={\bl {-\infty}{-\infty}{i}{\infty}},\quad \gamma_3:=\bl {i-1}{-1}{i+2}{3}.
\ee
Choose $b\in\pro {B\overlap {\gamma_1}}$ and set $p_3:=bp_1p_2$, so that 
$\supp b\subset\bl {i-1} {j}{i+2}\infty$ and $\gamma_1\cap \supp {p_3}=\emptyset.$
Choose $p,q,r\in\pauli$ with $p\propto p_1|_{\gamma_2}$, $q\propto p_3|_{\gamma_2}$.
Let $\Sigma_2:=\set{\bl a b {a+2}{b+2}}{\site a b\in\Sigma}$. Since $p_3\in\cen\PA\pauli$, $\suppr{\PA_g}{q}$ is contained in the set
\be
\bigcup \set {\gamma\in\Sigma_2} {\supp {p_3}\cap{\gamma_2}\cap\gamma\neq\emptyset, \supp {p_3}\cap(\Sigma-\gamma_2)\cap\gamma\neq\emptyset}\subset\gamma_3.
\ee
Let $\phi_1\in c$ be such that $\syndr{\PA}{p}=\phi_1\,\phi.$ Then $\suppr{\PA_g}{\phi_1}$ is contained in
\begin{align}
&\bigcup \set {\gamma\in\Sigma_2} {\supp {p}\cap{\gamma_2}\cap\gamma\neq\emptyset, \supp {p}\cap(\Sigma-\gamma_2)\cap\gamma\neq\emptyset}\cup\nl
&\quad\cup(\supp{\trans l l (\phi)}\cap\gamma_2)\subset\gamma_3.
\end{align}
But from $\thk(\gamma_3)\cap(\supp b\cup \supp{p_2})=\emptyset$ we get $\suppr{\PA_g}{q}=\suppr{\PA_g}{q}|_{\gamma_3}=\suppr{\PA_g}{p}|_{\gamma_3}$.
Thus $\syndr {\PA_g} {q} = \phi \in c$, a contradiction.
\end{proof}

\section{Topological charge}

This section explores how the commutation properties of string operators sometimes only depend on the charges and topology of the involved strings.

\subsection{String commutation rules}


We first consider the case of two crossing string operators, with charges possibly in two different $LPGs$.
When the two $LPG$-s are suitably related, whether the string operators commute depends only on the respective charges.
From a physical perspective, this corresponds to the `mutual statistics' of topological charges.
The geometry involved is displayed in \fref{mutual}.

\begin{prop}\label{prop:mutual}
Let $\PA,\PB$ be LPGs on a given lattice satisfying the properties in \propref{coarse_grain2} and 
\be\label{condition_mutual}
\cen\PB\pauli\subset\cen{\cen\PA\pauli}\pauli.
\ee
Let $p_A,p_B\in\pauli$ be such that, see \fref{mutual},
\begin{alignat}{2}
p_A \in\str {c_A}{\gamma_h},& \quad &&\gamma_h=\pth {\site {a_1}0,\site {a_2} 0},\nl
p_B \in\str{c_B}{\gamma_v},&\quad &&\gamma_v=\pth {\site 0{b_1},\site 0 {b_2}},
\end{alignat}
for some integers $a_1,b_1\leq -3$, $a_2,b_2\geq 3$, and $c_A\in\charge\PA$, $c_B\in\charge\PB$.
The quantity
\be
\mutual {c_A}{c_B}:=\comm{p_A}{p_B},
\ee
only depends on $c_A$ and $c_B$. Moreover, for any $c_A'\in\charge\PA$,
\be\label{properties_mutual}
\mutual {c_A}{c_B}=\mutual {c_B}{c_A},\qquad
\mutual {c_Ac_A'}{c_B}=\mutual {c_A}{c_B}\mutual {c_A'}{c_B}.
\ee
\end{prop}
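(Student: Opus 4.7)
The plan is to first establish well-definedness of $\comm{p_A}{p_B}$ (independence from the specific string representatives and endpoint coordinates), and then derive symmetry and bilinearity as short consequences. The main mechanism is a closed-loop construction: two admissible choices of $A$-string combine, with local correctors near the endpoints, into an element $s \in \cen{\PA}{\pauli}$, while $p_B$ together with a suitably routed detour produces an element $t \in \cen{\PB}{\pauli}$ whose support avoids that of $s$; hypothesis~\eqref{condition_mutual} then forces $\comm{s}{t}=1$, and shedding trivially-commuting pieces yields the desired invariance.

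In detail, given $\tilde p_A \in \strp{c_A}{\site{a_1}0}{\site{a_2}0}$, the syndrome $\syndr{\PA}{p_A \tilde p_A}$ factors as $(\phi_1 \tilde\phi_1)(\phi_2 \tilde\phi_2)$ with each $\phi_i \tilde\phi_i$ lying in the trivial class $\syndri{\PA}{\pauli}$ and supported at $\site{a_i}0$. Property~5 of \propref{coarse_grain2} supplies $u_i \in \pauli$ with $\syndr{\PA}{u_i}=\phi_i \tilde\phi_i$ and $\supp{u_i}\subset \thk(\sset{\site{a_i}0})$, so $s := p_A \tilde p_A u_1 u_2 \in \cen{\PA}{\pauli}$ is concentrated in a thickened $x$-axis segment. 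Picking a detour $\gamma_B := \pth{\site{0}{b_1},\site{M}{b_1},\site{M}{b_2},\site{0}{b_2}}$ for some $M>a_2+2$ and applying property~5 to $\PB$, one obtains $w \in \pauli$ with $\syndr{\PB}{w}=\syndr{\PB}{p_B}$ and $\supp{w}\subset \thk(\gamma_B)$; hence $t := p_B w \in \cen{\PB}{\pauli}$. Condition~\eqref{condition_mutual} gives $\comm{s}{t}=1$; since $\supp{s}\cap\supp{w}=\emptyset$ by the choice of $M$ and by $|b_j|\geq 3$, this collapses to $\comm{s}{p_B}=1$; and since $|a_i|\geq 3$ ensures $\supp{u_i}\subset\sset{|x-a_i|\leq 1}$ is disjoint from $\supp{p_B}\subset\sset{|x|\leq 1}$, we conclude $\comm{p_A}{p_B}=\comm{\tilde p_A}{p_B}$. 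Interchanging the roles of $\PA$ and $\PB$ handles variation of $p_B$, and a shift of, say, $\site{a_1}0$ to $\site{a_1'}0$ is effected by pre-composing $p_A$ with a connecting string $q \in \strp{c_A}{\site{a_1'}0}{\site{a_1}0}$ whose support lies in $\sset{x\leq -2}$ and hence commutes with $p_B$.

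Once well-definedness is established, the symmetry $\mutual{c_A}{c_B}=\mutual{c_B}{c_A}$ is immediate from $\comm{p_A}{p_B}=\comm{p_B}{p_A}$ for Pauli signs. For bilinearity, note that if $p_A \in \strp{c_A}{\site{a_1}0}{\site{a_2}0}$ and $p_A' \in \strp{c_A'}{\site{a_1}0}{\site{a_2}0}$, the product $p_A p_A'$ has syndrome $(\phi_1 \phi_1')(\phi_2 \phi_2')$ with $\phi_i \phi_i' \in c_A c_A'$ at the endpoints and support still contained in $\thk(\pth{\site{a_1}0,\site{a_2}0})$; thus $p_A p_A' \in \strp{c_A c_A'}{\site{a_1}0}{\site{a_2}0}$, and well-definedness gives $\mutual{c_A c_A'}{c_B}=\comm{p_A p_A'}{p_B}=\comm{p_A}{p_B}\comm{p_A'}{p_B}=\mutual{c_A}{c_B}\mutual{c_A'}{c_B}$.

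The hard part is the geometric bookkeeping: one must ensure that the auxiliary operators $u_i$, $w$, and $q$ can simultaneously be placed with supports that ``close up'' each loop while remaining disjoint from the opposing string. This depends crucially on the coarse-graining hypotheses of \propref{coarse_grain2} (which secure small, well-located Pauli preimages for locally-supported morphisms via property~5) and on the endpoint separation $|a_i|,|b_j|\geq 3$; without the latter, additional coarse graining would be needed to cleanly separate the $\thk$-neighborhoods of the endpoints from the transverse string.
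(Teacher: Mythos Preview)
Your well-definedness argument is correct and arguably cleaner than the paper's: you vary one string at a time, closing $p_A\tilde p_A$ with local correctors $u_i$ into $s\in\cen\PA\pauli$ while routing $p_B$ around via a detour $w$ so that $t=p_Bw\in\cen\PB\pauli$, whereas the paper varies both strings simultaneously by translating the alternative pair far away and joining with auxiliary strings $q_A,q_B$. Both approaches hinge on the same mechanism---condition~\eqref{condition_mutual} plus support disjointness---and your bilinearity argument coincides with the paper's.

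However, your symmetry argument has a genuine gap. The identity $\comm{p_A}{p_B}=\comm{p_B}{p_A}$ is trivially true for Pauli signs, but it does \emph{not} yield $\mutual{c_A}{c_B}=\mutual{c_B}{c_A}$. By the definition in the statement, $\mutual{c_B}{c_A}$ is computed from a \emph{horizontal} $\PB$-string crossed by a \emph{vertical} $\PA$-string---a rotated configuration, not the same crossing read backwards. Your well-definedness proof only establishes invariance under change of representative and under shifting endpoints \emph{along the given axes}; it never shows that rotating the whole crossing by $90^\circ$ preserves the sign, and your endpoint-extension strings $q$ are deliberately placed to avoid $p_B$, so they cannot be used to swing $p_A$ around to the vertical direction. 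The paper handles this with an explicit six-string construction: three $\PA$-strings forming a closed triangle in $\cen\PA\pauli$ and three $\PB$-strings forming a closed triangle in $\cen\PB\pauli$, arranged so that exactly two cross-commutators survive, one realizing $\mutual{c_A}{c_B}$ and the other $\mutual{c_B}{c_A}$; the vanishing of the total commutator then forces their product to be $1$. You need an argument of this kind to complete the proof.
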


\begin{proof}
Let $\bar p_A,\bar p_B, \bar a_1,\bar a_2, \bar b_1,\bar b_2$ be an alternative choice for the definition of $\mutual {c_A}{c_B}$, with the same properties as $p_A, p_B, a_1, a_2, b_1, b_2$.
Set $a:=a_2-\bar a_1$, $b:=b_2+\bar b_2+3$, and choose $q_A, q_B\in\pauli$ such that
\begin{align}
q_A &\in\str{\syndr{\PA}{p_A\bar p_A}}{\gamma_A},\,\,\, q_B \in\str{\syndr{\PB}{p_B\bar p_B}}{\gamma_B},\nl
\gamma_A&:=\pth{\site{a_1}{0},\site{a_1} {b},\site {\bar a_1+a} 0}\cup\bl{a_2}{0}{a_2+1}{1},\nl
\gamma_B&:=\pth{\site{0}{b_1},\site{a} {\bar b_1}}\cup\pth{\site{0}{b_2},\site{a} {\bar b_2}}.
\end{align}
We set $\tilde p_A:= \trans {a}0(\bar p_A)$, $\tilde p_B:= \trans {a}0(\bar p_B)$, so that
$p_A \tilde p_A q_A\in\cen\PA\pauli$, $p_B \tilde p_B q_B\in\cen\PB\pauli.$
Simple support considerations give
\begin{align}\label{example_comm}
\comm {p_A}{p_B}\comm {\bar p_A}{\bar p_B}=\comm {p_A}{p_B}\comm {\tilde p_A}{\tilde p_B} = \comm{p_A\tilde p_A q_A^1 q_A^2}{p_B \tilde p_B q_B^1 q_B^2}=1.
\end{align}
Regarding the first property in \eqref{properties_mutual}, first notice that condition \eqref{condition_mutual} is symmetric, so that $\mutual{c_B}{c_A}$ is defined.
Choose $\phi_A\in\charge\PA$, $\phi_B\in\charge\PB$ with support in $\sset \Site$.
Let $p_A, q_A, r_A, p_B, q_B, r_B \in\pauli$ be such that
\begin{align}
p_A&\in\strd {\trans {-6}0 (\phi_A)}{\trans {0}0 (\phi_A)},\quad
&q_A&\in\strd {\trans {0}{-6} (\phi_A)}{\trans {0}0 (\phi_A)},\nl
r_A&\in\strd {\trans {0}{-6} (\phi_A)}{\trans {-6}{0} (\phi_A)},\quad
&p_B&\in\strd {\trans {3}{-3} (\phi_B)}{\trans {-3}{-3} (\phi_B)},\nl
q_B&\in\strd {\trans {-3}{3} (\phi_B)}{\trans {-3}{-3} (\phi_B)},\quad
&r_B&\in\strd {\trans {-3}{3} (\phi_B)}{\trans {3}{-3} (\phi_B)}.
\end{align}
Again from support considerations
\begin{align}
\mutual {c_A}{c_B}\mutual {c_B}{c_A}=\comm {p_A}{p_B}\comm {q_A}{q_B}=\mutual{p_Aq_Ar_A}{p_Bq_Br_B}=1.
\end{align}
As for the second property in \eqref{properties_mutual}, just notice that for $c_1,c_2\in\charge\PA$, $\sigma, \sigma'\in\Sigma$,
\begin{align}
p_1\in&\strp {c_1} {\sigma}{\sigma'}\,\wedge\,p_2\in\strp {c_2} {\sigma}{\sigma'}\Longrightarrow\nl
&\Longrightarrow p_1p_2\in\strp {c_1c_2} {\sigma}{\sigma'}
\end{align}

\end{proof}

\begin{prop}\label{prop:exists_dual}
Let $\PA,\PB$ be as in \propref{mutual}.
Let $c_A\in \charge\PA-\sset 1$.
There exists $c_B \in \charge\PB$ with $\mutual {c_A}{c_B}=-1.$
\end{prop}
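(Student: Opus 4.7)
The plan is to argue by contradiction, using \propref{inside} together with the structural properties afforded by \propref{coarse_grain2}. Assume $\mutual{c_A}{c_B}=1$ for every $c_B\in\charge\PB$. Pick $\phi_A\in c_A$ with $\suppr{\PA_g}{\phi_A}=\sset{\Site}$ (property 3) and, for $l$ sufficiently large, construct two $\PA$-strings of charge $c_A$ running from $\Site$ to $\trans{l}{l}(\Site)$ along complementary $L$-shaped paths: $p_1\in\strp{c_A}{\Site}{\trans{l}{l}(\Site)}$ through $\site{l}{0}$ and $p_2\in\strp{c_A}{\trans{l}{l}(\Site)}{\Site}$ through $\site{0}{l}$. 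Their syndromes on $\PA$ cancel, so $p_1p_2\in\cen{\PA}{\pauli}$, and in the cases of interest this coincides with $\PB\cdot\phase$; hence $p_1p_2=\pro{B}$ for some finite $B\subset\PB_g$. Applying \propref{inside}, $\bl{3}{3}{l-2}{l-2}\subset\supp{B}$, so there exists some $b\in B$ with $\sigma:=\site{l/2}{l/2}\in\supp{b}$.

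The natural candidate is $c_B:=[\phi_B]\in\charge\PB$, where $\phi_B:=\mor{\PB_g}{\sset{b}}\in\morphfin{\PB}$. Independence of $\PB_g$ forces $\phi_B$ to flip only $b$, so $\phi_B(\pro{B})=\prod_{b'\in B}\phi_B(b')=-1$. The step I expect to be the main obstacle is showing that $c_B$ is nontrivial, which I would handle by contradiction. If $c_B$ were trivial then $\phi_B\in\syndri{\PB}{\pauli}$, and property 5 of \propref{coarse_grain2} applied to the connected region $\supp{b}$ would produce $q_0\in\pauli$ with $\syndr{\PB}{q_0}=\phi_B$ and $\supp{q_0}\subset\thk(\supp{b})$, a small neighbourhood of $\sigma$. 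For $l$ chosen large enough this neighbourhood is disjoint from $\supp{p_1p_2}$, whence $\phi_B(\pro{B})=\comm{q_0}{p_1p_2}=1$, contradicting $\phi_B(\pro{B})=-1$. The delicate point is to track the locality constants in property 5 so that the deep interior site $\sigma$ genuinely isolates $q_0$ from the loop boundary.

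Finally I would compute $\mutual{c_A}{c_B}$ by a crossing-string argument and match it with $\phi_B(\pro{B})$. Using properties 3--4 pick $\phi_B'\in c_B$ with $\suppr{\PB_g}{\phi_B'}\subset\sset{\sigma'}$ for $\sigma':=\site{-L}{l/2}$ with $L$ large, and take a horizontal $\PB$-string $q\in\strd{\phi_B}{\phi_B'}$. By construction $q$ crosses $p_2$ exactly once on its vertical left leg and misses $p_1$ entirely, so the well-definedness of mutual statistics (\propref{mutual}, axis-exchanged) yields $\comm{q}{p_1p_2}=\comm{q}{p_2}=\mutual{c_A}{c_B}$. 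On the other hand $\syndr{\PB}{q}=\phi_B\phi_B'$, and since $\sigma'$ lies far from $\supp{p_1p_2}$ we have $\phi_B'(\pro{B})=1$, giving $\comm{q}{p_1p_2}=\phi_B(\pro{B})\phi_B'(\pro{B})=-1$. Therefore $\mutual{c_A}{c_B}=-1$, contradicting the assumption and producing the required dual charge.
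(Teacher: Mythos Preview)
Your approach is essentially the paper's: build a square loop of $c_A$-strings, invoke \propref{inside} to locate a generator $b\in\PB_g$ deep in the interior, set $c_B$ to be its charge, and cross the loop once with a $\PB$-string emanating from $b$ to obtain $\mutual{c_A}{c_B}=-1$. Two remarks. First, your proof-by-contradiction framing and the separate non-triviality argument for $c_B$ are superfluous: the computation $\mutual{c_A}{c_B}=-1$ already forces $c_B\neq 1$, and the hypothesis ``$\mutual{c_A}{c_B}=1$ for all $c_B$'' is never actually invoked. Second, the paper assembles the loop from four \emph{straight} segments $h,\trans{0}{6}(h),v,\trans{6}{0}(v)$, so the crossing string $p$ meets exactly one of them and $\comm{p}{v}=\mutual{c_A}{c_B}$ is a direct instance of \propref{mutual}; with your $L$-shaped $p_2$ the identification $\comm{q}{p_2}=\mutual{c_A}{c_B}$ requires one more line---split $p_2$ into a horizontal piece and a straight vertical piece and note that $q$ misses the former.
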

\begin{proof}
Choose $\PB_g$ independent, and $\phi\in c$, $h,v\in\pauli$, $B\in\powersetfin {\PB_g-\phase}$ with
\begin{alignat}{2}
h &\in\strd {\phi} {\trans 6 0 (\phi)},\qquad
&&\pro{B}\ni h\,\trans 0 6 (h)\, v\, \trans 6 0 (v), \nl
v &\in\strd {\phi}{\trans 06 (\phi)},\qquad
&&\suppr{\PA_g}{\phi}=\Site.
\end{alignat}
According to \propref{inside} we can choose $b\in B$ with $\supp {b}\subset \bl 3 3 4 4$.
Let $c_B$ be the charge of $b$ in $\PB_g$ and choose $L\in\N^\ast$ such that $\trans {-L} 0 (b)\not\in B$ and $p\in\str {\phi}{\trans 0 {-L}\,(\phi)}$, $\phi:=\mor{\PB_g} b$.
Then $\mutual {c_A}{c_B}=\comm p {v}=\comm p r=-1$ for any $r\in\pro B$ because $\supp p\cap\supp{h\trans 0 6(h)\trans 6 0 (v)}=\emptyset$.
\end{proof}

\begin{cor}\label{cor:duality}
$\charge\PA$ is dual to $\charge\PB$.
\end{cor}

Next we consider the case of string operators with common charge and a common endpoint.
The geometry involved is displayed in \fref{spin}.
From a physical perspective, the resulting invariant $\spin c$ corresponds to the `topological spin' of the topological charge $c$.
We say that $c$ is a boson if $\spin c=1$ and a fermion if $\spin c=-1$.

\begin{prop}\label{prop:spin}
Let $\PA$ be a LPG satisfying the properties in \propref{coarse_grain2} and
\be\label{condition_spin}
\cen\PA\pauli\subset\cen{\cen\PA\pauli}\pauli.
\ee
Let $c\in\charge \PA$, $P=\sset{p_1,p_2,p_3,p_4}\subset\pauli$ be such that for $k,l=1,2,3,4$
\begin{align}
p_k\in\str c {\gamma_k},\quad 
&\gamma_1 = \pth {\Site,\site{a_1}0},\quad
&\gamma_2 = \pth {\Site,\site{a_2}0},\nl 
p_kp_l \in\str c{\gamma_l\circ\gamma_k\inv},\quad
&\gamma_3 = \pth {\Site,\site{0}{b_1}},\quad 
&\gamma_4 = \pth {\Site,\site{0}{b_2}},
\end{align}
for some integers $a_1,b_1\leq -3$, $a_2,b_2\geq 3$.
Let $p,q,r\in P$ be all different.
The quantity
\be
\spin {c}:=\comm{p}{q}\comm{p}{r}\comm q r=\comm{pq}{pr},
\ee
only depends on $c$.
Moreover, for any $c_1,c_2\in\charge\PA$,
\begin{equation}\label{spin_mutual}
    \spin {c_1c_2}=\spin {c_1}\spin {c_2}\mutual{c_1}{c_2}.
\end{equation}
\end{prop}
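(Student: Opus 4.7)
The proposition has two claims: that $\spin c$ is well-defined, and the multiplicativity formula \eqref{spin_mutual}.

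For well-definedness, I first observe that
\[
\comm{pq}{pr}=\comm p q\comm p r\comm q r
\]
(from bilinearity of $\comm\cdot\cdot$ together with $\comm a a=1$ and $\comm a b=\comm b a$) is manifestly symmetric in $p,q,r$, so $\spin c$ cannot depend on the ordering within a chosen triple. To prove independence both from the 3-subset of $P$ and from the choice of $P$ itself, I adapt the auxiliary-string argument of \propref{mutual}: given two four-tuples $P=\{p_k\}$ and $\bar P=\{\bar p_k\}$, I close each $p_k\bar p_k$ into a loop $\ell_k:=p_k\bar p_k a_k$ by appending an auxiliary $c$-string $a_k$ running from $\sigma_k$ to $\bar\sigma_k$ well away from the origin, with the $a_k$'s arranged to have mutually disjoint supports. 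Each $\ell_k$ has trivial syndrome, hence lies in $\cen\PA\pauli$, and by \eqref{condition_spin} also in $\cen{\cen\PA\pauli}\pauli$, so any two $\ell_k$'s commute. Computing $\comm{\ell_1\ell_2}{\ell_1\ell_3}=1$ on the one hand by this commutativity, and on the other by a full bilinear expansion in which the auxiliary contributions cancel by spatial separation, yields $\spin c|_P\cdot\spin c|_{\bar P}=1$, hence equality.

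For \eqref{spin_mutual}, I choose four $c_1$-strings $p_k\in\str{c_1}{\gamma_k}$ and four $c_2$-strings $p'_k\in\str{c_2}{\gamma_k}$ based at the origin in the four canonical directions and set $q_k:=p_k p'_k$; each $q_k$ lies in $\str{c_1 c_2}{\gamma_k}$ because $c_1^2=c_2^2=1$ trivialises the origin charge. Bilinearly expanding
\[
\spin{c_1 c_2}=\comm{q_1 q_2}{q_1 q_3}=\comm{p_1 p_2 p'_1 p'_2}{p_1 p_3 p'_1 p'_3}
\]
and cancelling self-commutators together with paired factors $\comm{p_i}{p'_i}\comm{p'_i}{p_i}=1$ cleanly factors the result as
\[
\spin{c_1 c_2}=\spin{c_1}\,\spin{c_2}\,M,\qquad M=\prod_{\substack{i\neq j\\ i,j\in\{1,2,3\}}}\comm{p_i}{p'_j}.
\]

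The main obstacle is identifying $M$ with $\mutual{c_1}{c_2}$. Applying \propref{mutual} to the horizontal $c_1$-string $p_1 p_2$ and the vertical $c_2$-string $p'_3 p'_4$, which cross once at the origin, gives $\mutual{c_1}{c_2}=\comm{p_1 p_2}{p'_3 p'_4}$, and straight algebraic rearrangement reduces $M=\mutual{c_1}{c_2}$ to proving
\[
\comm{p_2 p_3}{p'_1}\,\comm{p_1 p_3}{p'_2}\,\comm{p_1 p_2}{p'_4}=1.
\]
Each factor is the commutator of a bent $c_1$-string through the origin with a single $c_2$-ray from the origin. I would establish the identity by exploiting the freedom from the well-definedness of Part 1 to re-route the primed strings via auxiliary closed $c_2$-loops, which lie in $\cen{\cen\PA\pauli}\pauli$ by \eqref{condition_spin} and hence commute with everything in $\cen\PA\pauli$; the three commutators can then be combined using the centrality of $(p_2 p_3)(p_1 p_3)(p_1 p_2)\in\phase$ (each $p_k^2$ being a scalar) together with the vanishing of far-separated commutators. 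This combinatorial–geometric cancellation is the delicate step and directly parallels the auxiliary-string technique used in \propref{mutual}.
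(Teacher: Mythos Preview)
Your strategy for both parts keeps all strings anchored at a common basepoint, and this is exactly where the argument breaks.

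\textbf{Part 1.} With $P$ and $\bar P$ both based at $\Site$, your loops $\ell_k=p_k\bar p_k a_k$ are fine only if the origin syndromes of $p_k$ and $\bar p_k$ agree, which is not guaranteed. More seriously, even granting that, expanding $\comm{\ell_1\ell_2}{\ell_1\ell_3}=1$ does \emph{not} give $\spin c|_P\cdot\spin c|_{\bar P}$: the auxiliary $a_k$-contributions vanish by separation, but the six cross terms $\comm{p_i}{\bar p_j}$ with $i\neq j$ do not. All of them live near the origin and none is controlled by topology. The paper avoids this entirely by first \emph{translating} the alternative tuple (by $b:=b_2-\bar b_1$ in the vertical direction) so that the two basepoints are far apart; then each cross term between a $p_k$ and a translated $\bar p_{k'}$ vanishes by disjointness of supports, and only one auxiliary $c$-string per direction is needed to close the resulting long loops.

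\textbf{Part 2.} Your reduction to
\[
\comm{p_2 p_3}{p'_1}\,\comm{p_1 p_3}{p'_2}\,\comm{p_1 p_2}{p'_4}=1
\]
is correct algebra, but your proposed proof by ``re-routing the primed strings via closed $c_2$-loops'' fails: a closed $c_2$-loop lies in $\cen{\cen\PA\pauli}\pauli$ and therefore commutes with elements of $\cen\PA\pauli$, but $p_ip_j$ is \emph{open} (it has nontrivial syndrome at its far endpoints) and so is not in $\cen\PA\pauli$. Hence multiplying $p'_k$ by a closed loop can change $\comm{p_ip_j}{p'_k}$. The identity $(p_2p_3)(p_1p_3)(p_1p_2)\in\phase$ doesn't help either, since the three factors are paired with three \emph{different} $p'_k$. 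The paper's fix is again geometric separation: it places the $c_1$-strings at $\Site$ and the $c_2$-strings at $\site 33$. With this offset, every cross term $\comm{p_i}{p'_j}$ vanishes by disjointness of the thickened paths except the single genuine crossing $\comm q{p'}$, which is precisely $\mutual{c_1}{c_2}$ by \propref{mutual}. A final coarse graining by $4$ collapses $pp',qq',rr'$ into legitimate $(c_1c_2)$-strings from a common basepoint, identifying the whole commutator with $\spin{c_1c_2}$.

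In short, the missing idea in both halves is the same: translate one family of strings so that unwanted commutators near the shared basepoint become commutators of operators with disjoint support.
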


\begin{proof}
There exist $\phi_k\in\morphfin{\PA}$ such that
\begin{align}
\supp{\phi_0}&\subset\bl 0 0 11,\quad&
\supp{\phi_1}&\subset\bl {a_1} 0 {a_1+1}1,\quad&
&\supp{\phi_2}\subset\bl {a_2} 0 {a_2+1}1,\nl 
\supp{\phi_3}&\subset\bl 0 {b_1} 1{b_1+1},\quad&
\supp{\phi_4}&\subset\bl 0 {b_2} 1{b_2+1},\quad&
&p_k \in\strd {\phi_0} {\phi_k},
\end{align}
where $k=1,2,3,4$.
Let $\bar \phi_k, \bar p_{k'}, \bar a_1,\bar a_2, \bar b_1,\bar b_2$ be an alternative choice for the definition of $\mutual {c_A}{c_B}$, with the same properties as $\bar \phi_k, p_{k'}, a_1, a_2, b_1,b_2$, $k=0,\dots,4$, $k'=1,\dots,4$.
We show that $\comm {p_1p_4}{p_2p_4}=\comm {\bar p_1\bar p_3}{\bar p_3\bar p_4}$, the other combinations are similar.
Set $b:=b_2-\bar b_1$ and choose $q_1, q_2,q_3\in\pauli$ such that
\begin{align}
q_1 &\in\strp{{\phi_1\trans 0b(\bar \phi_1)}}{\site{a_1}{0}}{\site{\bar a_1}{b}},\nl
q_2 &\in\strp{{\phi_2\trans 0b(\bar \phi_4)}}{\site 0{b+\bar a_2}}{\site{a_2}{0}},\nl
q_3 &\in\strp{{\phi_4\trans 0b(\bar \phi_3)}}{\site{0}{b_2}}{\site{0}{b+\bar b_1}}.
\end{align}
Then, from support considerations
\begin{align}
&\comm {p_1p_4}{p_2p_4}\comm {\trans 0 b (\bar p_1\bar p_3)}{\trans 0 b (\bar p_3\bar p_4)} =\\&\quad= \comm{p_1p_4 \trans 0b(\bar p_1\bar p_3) q_1q_3}{p_2p_4 \trans 0b(\bar p_3\bar p_4) q_2 q_3}=1.
\end{align}

We next prove \eqref{spin_mutual}.
Choose $\phi_1\in c_1,\phi_2\in c_2$ with support in $\sset \Site$.
Let $p, q, r, p', q', r' \in\pauli$ be such that
\begin{align}
p&\in\strd {\trans {0}0 (\phi_1)}{\trans {0}{-11} (\phi_1)},\qquad&
q&\in\strd {\trans {0}{0} (\phi_1)}{\trans {12}{0} (\phi_1)},\nl
r&\in\strd {\trans {0}{0} (\phi_1)}{\trans {0}{12} (\phi_1)},\qquad&
p'&\in\strd {\trans {3}{3} (\phi_2)}{\trans {3}{-8} (\phi_2)},\nl
q'&\in\strd {\trans {3}{3} (\phi_2)}{\trans {15}{3} (\phi_2)},\qquad&
r'&\in\strd {\trans {3}{3} (\phi_2)}{\trans {3}{15} (\phi_2)}.
\end{align}
Then $\comm {pp'qq'}{pp'rr'} = \comm {pq}{qr}\comm{p'q'}{q'r'}\comm{p'}q=\spin c\spin{c'}\mutual c{c'}.$
But, setting $\phi:=\coarse 4 {\phi_1\trans 3 3 (\phi_2)}$,  
\begin{align}
\coarse 4 {pp'}&\in\strd {\trans {0}0 (\phi)}{\trans {0}{-3} (\phi)},\qquad
\coarse 4 {qq'}&\in\strd {\trans {0}0 (\phi)}{\trans {3}{0} (\phi)},\nl
\coarse 4 {qq'}&\in\strd {\trans {0}0 (\phi)}{\trans {0}{3} (\phi)},
\end{align}
and, using the observation \eqref{coarse_charge} below, we get
\begin{align}
&\mutual {c}{c'}=\mutual {\coarse 4 c}{\coarse 4 c}
=\comm {\coarse 4 {pp'}\coarse 4 {qq'}}{\coarse 4 {pp'}\coarse 4 {rr'}} =\nl
&\quad=\comm{pp'qq'}{pp'rr'}= \comm {pq}{qr}\comm{p'q'}{q'r'}\comm{p'}q
=\spin c\spin{c'}\mutual c{c'}.
\end{align}
\end{proof}

Clearly $\mutual c 1=\spin 1 = 1$. Then \eqref{spin_mutual} gives $\mutual c c=1.$

\subsection{Local transformations}

$\mutualx$ and $\spinx$ are invariant under coarse graining, up to the natural isomorphism \eqref{coarse_natural},
\be\label{coarse_charge}
\mutualx = \mutualx\circ (\coarsex l\times\coarsex l),\qquad \spinx = \spinx\circ \coarsex l,
\ee
because $\coarse l {\str c \gamma}\subset \str{\coarse l c} {\coarse l \gamma}$ for any $c\in\charge\PA$, path $\gamma$ and $l\in\N^\ast$.
As for the behavior under LPG composition, consider the projections
$
\funct {p_i}{\charge{\PA_1\otimes\PA_2}}{\charge{\PA_i}},
$
constructed according to the isomorphism \eqref{composition_natural}.
When both sides are defined,
\be\label{tensor_charge}
\mutualx = (\mutualx\circ (p_1\times p_1))(\mutualx\circ (p_2\times p_2)), \qquad\spinx = (\mutualx\circ p_1)(\mutualx\circ p_2).
\ee
Finally, given an LPG isomorphism $F$ from $\PA$ to $\PB$ that can be regarded also as an LPG isomorphism $F'$ from $\PA'$ to $\PB'$, whenever both sides are defined
\be\label{map_charge}
\mutualx = \mutualx \circ (\charge F\times \charge {F'}),\qquad \spinx = \spinx\circ \charge F,
\ee
because, e.g., for any $c\in\charge\PA$, $L\in\Z$,
\begin{align}
\coarse {2r+1} {F[\str {c}{\gamma_1}]}\subset\str{\coarse {2r+1}{\charge F(c)}}{\gamma_2}
\end{align}
where $r=\range F-1$, $
\gamma_1:={\pth{\site{r}{r},\site{(2r+1)L+r}{r}}},
\gamma_2:={\pth{\Site,\site L0}}.
$

\section{Duality and canonical generators}

This section uncovers the relationship between $\charge \stab$ and $\charge\gauge$ and provides a classification of the topological charges that a TSSG may exhibit.

\subsection{Injection morphism}

Let $\stab$ be a TSSG with gauge group $\gauge$, which is a LPG according to \corref{centralizer}.
Using \thmref{independent}, \lemref{global_constraints}, \propref{constraint_duality} and \corref{duality} we learn that $\charge\stab$ and $\charge\gauge$ are finite and dual through $\mutualx$.
The functions
\be
\funct{\mutualx}{\charge\gauge\times\charge\gauge}{\pm 1},\qquad\funct{\spinx}{\charge\gauge}{\pm 1},\qquad
\funct{\mutualx}{\charge\gauge\times\charge\stab}{\pm 1},
\ee
are all defined. The charges in $\charge\gauge$ and $\charge\stab$ are naturally related. 
Since $\stab\subset\gauge$, there is an injection morphism $\funct \iota {\stab}{\gauge}$ that gives rise to the morphism
\functmap {\iota^\ast} {\morphfin {\gauge}}{\morphfin\stab}{\phi}{\phi\circ\iota.}
Since $\iota^\ast [\syndri\gauge\pauli]\subset\syndri\stab\pauli$, we get a natural morphism
\be\label{injection_morphism}
\funct {\charge\iota} {\charge {\gauge}}{\charge\stab}.
\ee

\begin{prop}\label{prop:iota_mutual}
$\mutual c d = \mutual c {\charge\iota(d)}$ for any $c,d\in\charge\gauge$.
\end{prop}

\begin{proof}
By coarse graining, we can always get $\stab_g$ and $\gauge_g$ such that \propref{coarse_grain1} holds for them and for every $s\in\stab_g$ there exists $G_s\subset \gauge_g$ with $s\in\pro {G_s}$ and $\supp {G_s}\subset\thk^1 (\supp s)$. 
But for any $c\in\charge\gauge$, $L\in\Z$,
\begin{align}
\coarse 5 {\str {c}{\pth{\site{2}{2},\site{5L+2}{2}}}}\subset\str{\coarse 5{\charge\iota(c)}}{\pth{\Site,\site L0}}.
\end{align}
\end{proof}

\begin{prop}\label{prop:iota_kernel}
For any $c\in\charge\gauge$, $\charge\iota(c)=1$ iff $\mutual c d =1$ for every $d\in\charge\gauge$.
\end{prop}

\begin{proof}
If $\charge\iota(c)=d\neq 1$ for some $d\in\charge\stab$, by duality there exists $e\in\charge\gauge$ such that $\mutual e d=-1=\mutual e c$. 
\end{proof}

\subsection{Canonical charge generators}\label{sec:canonical}

We want dual canonical sets of generators that reflect the properties of the injection morphism \eqref{injection_morphism}.
\begin{thm}\label{thm:charges}
Let $\stab$ be a TSSG with gauge group $\gauge$. For some $\alpha,Ê\beta\in \N_0$ and $\chi,f=\pm 1$ the groups $\charge\stab$ and $\charge\gauge$ admit independent set of generators
\begin{align}\label{canonical}
\charge\gauge=\sget {c_1,\dots,c_\alpha, d_1,\dots,d_\alpha,e_1,\dots,e_\beta}=\sget{c_i}_{i=1}^{2\alpha+\beta},\nl
\charge\stab=\sget{\tilde c_1,\dots,\tilde c_\alpha, \tilde d_1,\dots,\tilde d_\alpha, \tilde e_1,\dots,\tilde e_\beta}=\sget{\tilde c_i}_{i=1}^{2\alpha+\beta},
\end{align}
such that for $i=1,\dots, \alpha$, $k=1,\dots,\beta$ and $u,v=1,\dots,2\alpha+\beta$,
\begin{align}\label{canonical_properties}
\charge\iota(c_i)&=\tilde d_i,\quad&
\charge\iota(e_k)&=1,\quad&
&\spin {c_i}=\spin {d_i}=\chi^{\delta_{i1}},\nl
\charge\iota(d_i)&=\tilde c_i,\quad&
\spin {e_k}&=f^{\delta_{k1}},\quad&
&\mutual {c_u}{\tilde c_v}=1-2{\delta_{uv}}. 
\end{align}
Moreover, $\alpha$, $\beta$, $\chi$ and $f$ only depend on the TSSG, not the choice of generators.
\end{thm}

Before proceeding with the proof, notice that propositions \ref{prop:iota_mutual} and \ref{prop:iota_kernel} imply
\be\label{canonical_mutual_gauge}
  \mutual {c_i}{c_j}=\mutual {d_i}{d_j}=\mutual {c_u}{e_k}=1,\quad \mutual{c_i}{d_j}=1-2{\delta_{ij}}.
\ee

\begin{proof}
We find the generators in three steps.

\noindent {\bf 1)} 
Let $K$ be the kernel of $\charge\iota$ in $\charge\gauge$. 
$K\simeq \Z_2^\beta$ for some $\beta\in\N_0$ and it admits a set of independent generators $K=\sget{e_1,\dots,e_\beta}$.
If $\theta(e_k)=1$ for every $k=1,\dots, n$, we do nothing and set $f=1$.
In other case, suppose w.l.o.g. that $\theta(e_1)=-1$ and take new generators $\bar e_k$ with $\bar e_k=e_k$ if $\spin {e_k}=1$ or $k=1$, $\bar e_k=e_k e_\beta$ otherwise, remove the bars and set $f=-1$.
Since $f=1$ iff all the elements of $K$ are bosons, $f$ does not depend on the choice of generators.

\noindent {\bf 2)} 
$\charge\gauge\simeq \Z_2^{\beta+n}$ for some $n\in \N_0$ and we choose a set of independent generators $\gauge=\sget{e_1,\dots,e_\beta, g_1,\dots,g_n}$.
According to \propref{iota_kernel}, for each $g_i$ there exists $g_j$ such that $\mutual {g_i}{g_j}=-1$, $i,j=1,\dots, n$.
In a similar way that a canonical basis of the Pauli group is obtained, we can obtain from the $g_i$ a set of generators of $\charge\gauge$ as in \eqref{canonical} and with the mutual statistics of \eqref{canonical_mutual_gauge}.
If $\spin {c_i}\spin {d_i}=1$ does not hold, we can always find suitable generators. 
E.g., if $\spin {c_i}=-\spin {d_i}=1$ we set $\bar c_i:=c_i, \bar d_i:=c_id_i$ and remove the bars.
Similarly, suppose that there exist $1\leq i<j\leq n$ with $\spin {c_i}=\spin {c_j}=-1$.
Then we can set $\bar c_i:=c_i c_j$, $\bar d_i:=d_i c_j$, $\bar c_j:=c_i d_i d_j$ and $\bar d_j:=c_i d_i c_j d_j$ so that $\spin {\bar c_i}=\spin {\bar d_i}=\spin {\bar c_j}=\spin {\bar d_j}=1$, and remove the bars. 
$\chi$ only depends on the TSG because the total number of bosos in $\charge\gauge$ is $2^{\alpha+\beta-1}(2^{\alpha+1}+\chi+\chi f).$

\noindent {\bf 3)}
We define $\tilde c_i$ and $\tilde d_i$ according to \eqref{canonical_properties}.
The existence of the $e_k$ is a consequence of duality, which also guarantees the independence of all these generators.
\end{proof}

\begin{defn}
The characteristic of a TSSG is given by the numbers $\alpha,\beta\in\N_0$ and $\chi,f=\pm 1$ in \thmref{charges}, with $\chi=1$ if $\alpha=0$ and $f=1$ if $\beta=0$.
\end{defn}

The composition of two TSSGs with characteristics $\alpha^k,\beta^k,\chi^k,f^k$, $k=1,2$, has characteristic
\begin{align}
\alpha=\alpha^1+\alpha^2,\quad
\beta=\beta^1+\beta^2,\quad
\chi=\chi^1\chi^2,\quad
2(1+f)=(1+f^1)(1+f^2).
\end{align}
The examples in table \ref{tab:characteristic} show that TSSGs of arbitrary characteristic exist.

\begin{table}[h]
\centering
\begin{tabular}{|l|c|c|c|c|}
 \hline
  Code & $\alpha$ & $\beta$ & $\chi$ & $f$ \\
  \hline
  Empty code, trivial code, trivial subsystem code & 0 & 0 & 1 & 1 \\
  Toric code & 1 & 0 & 1 & 1 \\
  Subsystem toric code & 0 & 1 & 1 & 1 \\
  Fermionic subsystem toric code, honeycomb code & 0 & 1 & 1 & -1 \\
  Topological subsystem color code & 1 & 0 & -1 & 1 \\
   \hline
\end{tabular}
\caption{
The characteristic of several codes.
}
\label{tab:characteristic}
\end{table}

\subsection{Local equivalence}

Two TSSGs $\stab_1,\stab_2$ have the same characteristic iff there exists a group isomorphism
$\funct{\lambda} {\charge{\gauge_1}}{\charge{\gauge_2}}$ such that 
\be
\mutualx\circ (\lambda\times\lambda)=\mutualx, \qquad\spinx\circ \lambda=\spinx.
\ee
In view of \eqref{map_charge} a LPG isomorphism $F$ from $\stab_1$ to $\stab_2$ induces such an isomorphism $\lambda=\charge {F}$, with $F$ regarded as an isomorphism from $\gauge_1$ to $\gauge_2$.
The same is true for coarse graining, see \eqref{coarse_charge}.
Finally, the composition of a TSSG with a trivial code also gives a charge isomorphism in view of \eqref{tensor_charge}.
In summary:

\begin{prop}
Locally equivalent TSSGs have the same characteristic.
\end{prop}

\subsection{Chirality}

A TSG $\stab$ can be regarded as a TSSG with $\beta=0$, so that its characteristic is given by $\alpha$ and $\chi$.
But all known TSGs have $\chi=1$ (e.g., for toric codes $\alpha=1$ and for color codes $\alpha=2$).
Indeed, from the condensed matter perspective the Hamiltonian model related to a TSG is chiral if $\chi=-1$ \cite{bombin:2011:universal}, and this is thought to be incompatible with the fact that the stabilizer generators commute with each other.
Therefore, unlike in TSSGs not all characteristics may admit a realization.
On the other hand, if we consider only non-chiral TSGs, those with $\chi=1$, there exists TSGs with arbitrary values of $\alpha$.
We will generally refer to TSSGs with $\chi = -1$ as chiral.

\section{Structure}\label{sec:classification}

This section shows how any TSSG can be put in a standard form by means of a framework of string operators.

\subsection{More coarse graining}

We will find useful the following notation for the indices of canonical charges as given in \thmref{charges}:
\begin{alignat*}{2}
\indc&:=\sset{1,\dots,\alpha},\quad
&\inde&:=\sset{2\alpha+1,\dots,2\alpha+\beta},\nl
\indd&:=\sset{\alpha+1,\dots,2\alpha},\quad
&\indall&:=\indc\cup \indd\cup \inde,
\end{alignat*} 
\be\label{notation_k}
k^\ast:= \left\{
     \begin{array}{ll}
        k+\alpha&\,\mathrm{ if }\,\, k\in \indc,\\
        k-\alpha&\,\mathrm{ if }\,\, k\in \indd,\\
        k&\,\mathrm{ if }\,\, k\in \inde,\\
     \end{array}
 \right.
\ee

Before proceeding with the main result we need to gain an additional property by coarse graining.
Namely, for $k\in \indc\cup \indd$ and given canonical charge generators, not only we should be able to find, in any given site, separately a stabilizer generator $s_k$ with charge $\tilde c_k$ and a gauge generator $g_{k^\ast}$ with charge $c_{k^\ast}$.
Rather, both should correspond to the same $\phi$ up to the morphism $\iota^\ast$.
\begin{prop}\label{prop:charge_basis_site_TSSG}
Any TSSG can be coarse grained to a TSSG $\stab$, with gauge group $\gauge$ and LTI set of independent generators $\stab_g$,  $\gauge_g$, satisfying the properties in \propref{coarse_grain2} and also the following, for a choice of canonical charge generators.
There exist $s_k\in\stab_g$ and $g_k\in\gauge_g-\phase$, $k\in \indall$, with 
\be
\chgr{\stab_g} {s_k}= \tilde c_k,\qquad \chgr{\gauge_g} {g_k}= c_k,\qquad\supp{s_k}= \supp{g_k}=\sset\Site,
\ee
and, for $k\in \indc\cup \indd$,
\be\label{charge_accord}
{\sset{s_k}}=\negr{\stab_g} {\iota^\ast (\mor{\gauge_g} {\sset{g_{k^\ast}}})}.
\ee
\end{prop}

\begin{proof}
Coarse graining and choosing $\gauge'_g,\stab_g'$ according to \propref{coarse_grain2} for certain $c_k',\tilde c_k'$, there exist $s_k', g_k'$ that satisfy all conditions except \eqref{charge_accord}. 
Let $l\in\N^\ast$ be the minimal integer such that for any $k\in \indc\cup \indd$
\be
\suppr{\stab_g'}{\iota^\ast(\mor{\gauge_g'} {\sset{g_{k}'}})}\subset\thk^l(\sset\Site).
\ee
In order that all properties except \eqref{charge_accord} are met, let 
\begin{align}
\gauge_g&:=\coarse {2l+1} {\gauge_g'},\quad&
c_k&:=\coarse {2l+1} {c_k'},\quad&
g_k&:=\coarse {2l+1} {\trans l l (g_k')}, \nl
\stab_g''&:=\coarse {2l+1} {\stab_g'},\quad&
\tilde c_k&:=\coarse {2l+1} {\tilde c_k'},\quad&
s_k''&:=\coarse {2l+1} {\trans l l (s_k')}.
\end{align}
For any $k\in \indc\cup \indd$ we have
$\suppr{\stab_g''}{\iota^\ast (\mor {\gauge_g'}{g_k}}=\sset\Site.$
For $k\in \inde$ we set $s_k:=s_k''$.
Let us show that there exist a list of LTI sets of independent generators $\stab_g''=\stab_g^0,\stab_g^1,\dots, \stab_g^{2\alpha}=\stab_g$
such that for any $k\in K$ and $\stab_g^t$ with $k>2\alpha-t$ there exist a suitable $s_k$.
We construct $\stab_g^{t+1}$ from $\stab_g^{t}$ as follows.
Set $u:=2\alpha-t$, $\phi:=\iota^\ast(\mor{\gauge_g}{g_{u^\ast}})$ and choose $s_u\in \negr {\stab_g^{t}}\phi$ such that $s_u\neq s_{k}$ for any $k\in K$ with $k>u$.
This is always possible because the charge $\tilde c_{u}$ is independent of the charges $\tilde c_{u+1},\dots,\tilde c_{2\alpha+\beta}$.
Label the elements of $\negr {\stab_g^{t}}\phi-\sset{s_u}$ as $\sset{\hat s_i}_{i=1}^r$ and apply \propref{charge_change} repeatedly to perform the substitutions 
$\hat s_i\rightarrow \hat s_u\hat s_i$, $i=1,\dots, r$. 
Letting $\stab_g^{t+1}$ be the resulting set of generators, we get as needed 
\be
\mor{\stab_g^{t+1}} {\sset{s_{u}}}=\iota^\ast (\mor{\gauge_g} {\sset{g_{u^\ast}}})\in \tilde c_{u}.
\ee
Notice that the transformations $\hat s_i\rightarrow \hat s_0\hat s_i$ may affect the $s_k$, $k>u$, that had the required properties in $\stab_g^{t}$, but does not affect these properties.
\end{proof}

\subsection{Homological structure}

It turns out to be useful to consider an extension of $\mutualx$ and $\spinx$ to all charges $\charge \stab$.
We fix part of this extension setting $\mutual {\tilde c_k} \cdot=\mutual {c_{k^\ast}}\cdot$ for $k\in K$ and $\spin{\tilde c_k}=\spin{c_{k^\ast}}$ for $k\in \indc\cup \indd$.
This still leaves room for choosing $\spin{\tilde c_k}=\pm 1$ for $k\in \inde$ in any way that we please.

With such an extension in hand, we now give a `homological' construction from which the generators of $\gauge$ and $\stab$ with nontrivial charge can be recovered in a convenient way.
The idea is as follows.
For each $k\in \indc\cup \inde$ we visualize an infinite square lattice $\graph$ and its dual $\dgraph$, and attach dual charges to them.
For each edge and dual edge there is an operator in $\pauli$, and the commutation relations of these operators are those that we would expect were they string operators with the corresponding charge.
In particular, for $k\in \indc$ direct edges carry both charge $c_k$ and $\tilde d_k$, and dual edges both charge $d_k$ and $\tilde c_k$.
As for $k\in \inde$, direct edges carry charge $e_l$ and dual edges charge $\tilde e_l$, where $k=l+2\alpha$.
Edge operators from different pairs of dual lattices commute.
Closed strings in a given lattice give rise to elements of $\gauge$ or $\stab$,
and it is possible to find $\stab_g$ and $\gauge_g$ such that all charged elements are `face operators'.

\begin{thm}\label{thm:lattice}
Let $\stab$ be a TSSG with gauge group $\gauge$, characteristic $\alpha,\beta,\chi,f$ and canonical generators \eqref{canonical}.
There exist $L\in\N^\ast$, 
independent generator sets $\stab_g$, $\gauge_g$, and mappings 
\be\label{lattice_edges}
\funct {\edgeopx k}{\edges\cup\dedges}{\pauli},\qquad k\in \indc\cup \inde
\ee
that satisfy the following properties.

\begin{enumerate}
\item 
The mappings are translationally invariant,
\be\label{lattice_TI}
\trans {iL}{jL} \circ \edgeopx k =\edgeopx k \circ \trans i j.
\ee

\item 
For any $d,e\in\edges$ and $k,l\in \indc\cup \inde$,
\begin{align}\label{lattice_edges_comm}
\comm{\edgeop k {d}}{\edgeop l {e}}&=
 \left\{
     \begin{array}{ll}
       \spin{c_k} &\text{if } l= k\text{ and }|\partial d\cap\partial e|= 1,\\
       1 & \text{otherwise,}
     \end{array}
 \right.\nl
\comm{\edgeop k {d^\ast}}{\edgeop l {e^\ast}}&=
 \left\{
     \begin{array}{ll}
       \spin{\tilde c_k} &\text{if } l=k\text{ and }|\partial d^\ast\cap\partial e^\ast|= 1,\\
       1 & \text{otherwise,}
     \end{array}
 \right.\nl
\comm{\edgeop k {d}}{\edgeop l {e^\ast}}&=
 \left\{
     \begin{array}{ll}
       -1\quad &\text{if } l= k\text{ and } d= e,\\
       1 & \text{otherwise.}
     \end{array}
 \right.
\end{align}

\item
Let $\stab_g(c)\subset\stab_g$ contain all elements of charge $c$, and similarly for $\gauge_g(c)$.
For $k\in \indc$, $l\in \inde$,
\begin{alignat}{2}\label{lattice_sg_generators_k}
\gauge_g(c_k)&=\stab_g(\tilde d_k) \propto \bigcup_{f^\ast\in\dfaces} \pro {\edgeopi k {\partial f^\ast}},\quad
&\stab_g(\tilde c_l)  &\propto \bigcup_{f\in\faces} \pro {\edgeopi l {\partial f}},\nl
\gauge_g(d_k)&=\stab_g(\tilde c_k)  \propto \bigcup_{f\in\faces} \pro {\edgeopi k {\partial f}},\quad
&\gauge_g(c_l) &\propto \bigcup_{f^\ast\in\dfaces} \pro {\edgeopi l {\partial f^\ast}}.
\end{alignat}

\item
Let $\pauli_k := \sget {\edgeop k e}_{e\in\edges}$ and $\pauli_k^\ast := \sget {\edgeop k e}_{e\in\dedges}$. Then
\begin{alignat}{2}\label{lattice_sg_generators_1}
\gauge_g(1) &\subset \cen{\pauli_\graph'}\pauli\qquad
&\pauli_\graph'&:=\prod_{k\in \indc} \pauli_k\pauli_k^\ast\prod_{l\in \indd} \pauli_l,\nl
\stab_g(1) &\subset \cen{\pauli_\graph}\pauli,\qquad 
&\pauli_\graph&:=\prod_{k\in \indc\cup \indd} \pauli_k\pauli_k^\ast.
\end{alignat}

\end{enumerate}
\end{thm}

\begin{proof}
We proceed in three steps.
\noindent {\bf 0)} 
We assume that $\gauge=\sget{\gauge_g'}$ and $\stab=\sget{\stab_g'}$, with $\gauge_g'$ and $\stab_g'$ LTI sets of independent generators that have been coarse grained and chosen according to \propref{charge_basis_site_TSSG}.
We can assume such a coarse graining w.l.o.g because it can be absorbed in the value of $L$.

\noindent {\bf 1)} 
We first give the construction of $\edgeopx k$, and then explain it.
Let $L=3(2\alpha+5\beta)$ and set 
$\phi_k:=\mor{\gauge_g'} {\sset{g_{k}}},$ $\tilde\phi_k:=\mor{\stab_g'} {\sset{s_{k}}}$ for $k\in \indall$.
Choose for $k\in \indc\cup \indd$, $l\in\inde$,
\begin{alignat}{3}\label{edge_raw}
H_k&\in\strd {\phi_k}{\trans L0(\phi_k)},\quad &V_k&\in\strd{\phi_k}{\trans 0L(\phi_k)},\quad &h_l&\in\strd {\phi_l}{\trans 60(\phi_l)},\nl
H_l&\in\strd {\phi_l}{\trans L0(\phi_l)},\quad &V_l&\in\strd{\phi_l}{\trans 0L(\phi_l)},\quad &v_l&\in\strd{\phi_l}{\trans 06(\phi_l)},\nl
\tilde H_l&\in\strd{\tilde\phi^l}{ \trans L0(\tilde\phi_l)},\quad &\tilde V_l&\in\strd{\tilde\phi_k}{\trans 0L(\tilde\phi_l)},\quad&p_l&\in\strd{\iota^\ast(\phi_l)}1.
\end{alignat}
Define for $k\in \indc\cup \indd$, $l,l'\in\inde$,
\begin{alignat}{2}\label{binary_corrections}
2\,\theta_0^k&:=1-\spin{c_k}\,\comm {H_k}{\trans L 0 (H_k)},\qquad &2\,\theta_{i,j}^l&:=1-\spin{\tilde c_l}\,\comm {D_i^l}{D_j^l},\nl
2\,\theta_1^k&:=1-(1-2\theta_0^k)\,\comm {H_k}{V_k},\qquad &2\,\tau_{l,l'}&:=1-\comm {H_l}{\trans 3 3 (V_{l'})},\nl
2\,\theta_2^k&:=1-(1-2\theta_1^k)\,\comm {V_k}{\trans 0 L (V_k)},
\end{alignat}
where $i,j\in\sset {\mathrm {E(ast), \mathrm N(orth),\mathrm W(est),\mathrm S(outh)}},$ and
\be
D_{\mathrm E}^l:=\tilde H_l,\quad D_{\mathrm N}^l:= \tilde V_l,\quad D_{\mathrm W}^l:=\trans {-L} 0 (\tilde H_l),\quad D_{\mathrm S}^l:=\trans 0{-L}(\tilde V_l).
\ee
Let $\Delta:=-18\alpha+3\beta$. Define $\edgeopx k$ according to \eqref{lattice_TI} setting for $k\in \indc$, $l\in \inde$
\begin{alignat}{2}\label{define_edges}
\edgeop k \hedge &:=\Upsilon_{3k,3k}^{L,0}(s_k, \theta_0^k, 0, H_k),\qquad &
\dedgeop {k} \hedge&:=\Upsilon_{3k^\ast,3k^\ast-L}^{0,L}(s_{k^\ast},\theta_1^{k^\ast},\theta_2^{k^\ast},V_{k^\ast}),\nl
\edgeop k \vedge&:=\Upsilon_{3k,3k}^{0,L}(s_k, \theta_1^k,\theta_2^k, V_k),\qquad &
\dedgeop {k} \vedge &:=\Upsilon_{3k^\ast-L,3k^\ast}^{L,0}(s_{k^\ast},{\theta_0(k^\ast)},0,H_{k^\ast}),\nl
\edgeop {l} \hedge &:=\Upsilon_{3l,3l}^{L,0}(p_l,1,1,H_l),\qquad&
\dedgeop {l} \hedge &:=\trans{12l+\Delta}{12l+\Delta-L}(\tilde v_l\tilde V_l),\nl
\edgeop {l} \vedge &:=\Upsilon_{3l,3l}^{0,L}(p_l,1,1,V_l),\qquad&
\dedgeop {l} \vedge &:=\trans{12l+\Delta-L}{12l+\Delta}(\tilde h_l\tilde H_l),
\end{alignat}
\begin{align}
\tilde h_l:= \hat h_l^{\theta^l_{\mathrm{E,S}}}\,\trans L0(\hat h_l^{\theta^l_{\mathrm{W,S}}}\hat v_l^{\theta^l_{\mathrm{W,E}}})\prod_{d=1}^{2\alpha+\beta-l} \trans {3d-L,3d}(\hat v_l),&\qquad
\hat h_l := \Upsilon_{-3,6}^{6,0}(p_l,1,1,h_l),\nl
\tilde v_l:= \hat h_l^{\theta^l_{\mathrm{N,S}}}\hat v_l^{\theta^l_{\mathrm{N,E}}}\,\trans {-L}0(\hat v_l^{\theta^l_{\mathrm{N,W}}})\prod_{d=1}^{2\alpha+\beta-l} \trans {3d,3d}(\hat h_l),&\qquad
\hat v_l := \Upsilon_{6,-3}^{0,6}(p_l,1,1,v_l),\nl
\Upsilon_{x,y}^{u,v}(c,t,t',W):=\trans x y\left(c^t \trans u v (c^{t'})W\right).&
\end{align}
The main idea behind this constructions is to build the edge operators from suitable string operators, as those in \eqref{edge_raw}, placing them in lattices that are shifted from each other to guarantee suitable commutation relations.
As for the details, firstly we seek $\edgeop {l} e\in\gauge$ for $e\in\edges$, which is the reason for the introduction of $p_l$.
Secondly, we want the equations in \eqref{lattice_edges_comm} to hold, so we introduce the binary values in \eqref{binary_corrections} that tell us what adjustments are needed.
Regarding corrections involving the topological spin, marked $\theta$, different values of $k$ require different adjustments.
For $k\in \indc$ only three numbers describe the adjustments because of the constraints imposed by \propref{spin}, and the adjustment can be made using $s_k$.
For $k\in\inde$ and dual edges there are six adjustments to make, and they can be made with the help of suitably placed `tiny' string operators $\hat h_l, \hat v_l\in\gauge$ that only share support with one of the `edge' string operators in the construction.
The reason for the factor 12 in \eqref{define_edges} ---rather than 3, the width of a string---, is indeed the need to make room for these tiny strings.
For $k\in\inde$ and direct edges no adjustments are needed.
E.g., if we take $h={\edgeop l \hedge}, \quad v={\edgeop l \vedge}, \quad e=\trans {-L}0 (h)$ and $s:=\pro {\edgeopi l \face} \in\phase \stab$, then $\comm e h \comm e v = \comm e {hv}=\comm e s = 1$ and thus $\comm e h=\comm e v$.
Similarly one can get $\comm h v=\comm e v$ and thus $\comm e h=\spin {c_l}$ using \propref{spin}.
Finally, corrections regarding mutual statistics, marked $\tau$, are only needed for $k\in\inde$ and dual edges, and we perform them using $\hat h_l, \hat v_l\in\gauge$ again.

\noindent {\bf 2)}
We next construct $\stab_g$, $\gauge_g$.
Let for $i,j\in\Z$, $k\in K$,
\begin{align}
s^k_{i,j}&:=
 \left\{
     \begin{array}{ll}
       \trans{iL+3{k^\ast}}{jL+3{k^\ast}} (s_{k}) &\text{if } k\in \indc\cup \indd,\\
       \trans{iL+12k+\Delta}{jL+12K+\Delta} (s_{k}) & \text{otherwise,}
     \end{array}
 \right.\nl
g^k_{i,j}&:=\trans{iL+3{k}}{jL+3{k}} (g_{k}),\qquad
p^k_{i,j}:=\trans{iL+3{k}}{jL+3{k}} (p_{k}),
\end{align}
Choose $S_k\in\stab$ and $G_k\in\gauge$ for $k\in \indall$ such that
\begin{align}
S_k&\propto \left\{
     \begin{array}{ll}
       \pro{\edgeopi k {\face}} &\,\mathrm{ if }\,\, k\in \indc,\\
       \pro{\edgeopi {k^\ast} \dface} &\,\mathrm{ if }\,\, k\in \indd,\\
       \pro{\edgeopi {k} \face} &\,\mathrm{ if }\,\, k\in \inde.\\
     \end{array}
 \right.
\qquad
G_k&\propto \left\{
     \begin{array}{ll}
       \pro{\edgeopi k \dface} &\,\mathrm{ if }\,\, k\in \indc,\\
       \pro{\edgeopi {k^\ast} \face} &\,\mathrm{ if }\,\, k\in \indd,\\
       \pro{\edgeopi {k} {\dface}} &\,\mathrm{ if }\,\, k\in \inde.\\
     \end{array}
 \right.
\end{align}
As we show below, for $k,k'\in K$, $i,j\in \Z$,  
\be\label{assume}
s^k_{i,j}\in \hat S^{k'}\iff g^k_{i,j}\in \hat G^{k'}\iff i=j=k-k'=0,
\ee
where $\hat S^k\in\powersetfin{\stab_g'}$, $\hat G^k\in\powersetfin{\gauge_g'}$ are such that $S^k\in\pro{\hat S^k}$, $G^k\in\pro{\hat G^k}$.
Applying repeatedly \propref{charge_change} we can define new independent generators
\begin{align}
\stab_g:=\stab_g'\cup\bigcup_{k\in K}\strans L (\sset{S_k})-\bigcup_{k\in K}\strans L (\sset{s^k_{0,0}}),\nl
\gauge_g'':=\gauge_g'\cup\bigcup_{k\in K}\strans L (\sset{G_k})-\bigcup_{k\in K}\strans L (\sset{g^k_{0,0}}).
\end{align}
such that $\chgr{\stab_g}{S_k}=\tilde c_k$ and $\chgr{\gauge_g}{G_k''}=c_k$. 
The rest of generators have trivial charge because for any $k\in K$
\be
\strans L (\sset{S_k})\in \cnstr {\stab_g},\quad \strans L (\sset{G_k})\in \cnstr {\gauge_g''}.
\ee

For $l\in \inde$, let $P_l=\negr{\gauge_g''}{p^l_{0,0}}$. Clearly ---consider the support--- $G_k\not\in \strans L {P_l}$ for any $k\in \indall$.
Then applying repeatedly \propref{charge_change} we can define
\be
\gauge_g:= {\gauge_g^0}''\cup\bigcup_{l\in \inde}\bigcup_{p\in P_l}\strans L (pG_l)-\bigcup_{l\in \inde}\bigcup_{p\in P_l}\strans L (p).
\ee
This last change does not involve charged generators and thus completes the construction, which clearly satisfies the desired properties.

To prove \eqref{assume} we define the semi-infinite strings
\be
E_{i,j}:=\bigcup_{n\in\N_0}\trans{i+n}{j} (\hedge),\quad
E_{i,j}^\ast:=\bigcup_{n\in\N^\ast}\trans{i+n}{j} (\vedge^\ast).
\ee
Then for any $k,k'\in \indc$, $l,l'\in \inde$, by construction
\begin{align}
\negr{\gauge_g'}{\edgeopi {k} {E_{i,j}}}=\negr{\stab_g'}{\edgeopi {k} {E_{i,j}}}&=\sset{g^{k}_{i,j}}=\sset{s^{k^\ast}_{i,j}},&
&\negr{\gauge_g'}{\edgeopi {l} {E_{i,j}}\,p^l_{i,j}}=\sset{g^{l}_{i,j}},\nl
\negr{\gauge_g'}{\edgeopi {k} {E_{i,j}^\ast}}=\negr{\stab_g'}{\edgeopi {k} {E_{i,j}^\ast}}&=\sset{g^{k^\ast}_{i,j}}=\sset{s^{k}_{i,j}},&
&\negr{\stab_g'}{\edgeopi {l} {E_{i,j}^\ast}}=\sset{s^{l}_{i,j}},
\end{align}
and due to \eqref{lattice_edges_comm}
\begin{align}
&\comm {\edgeopi {k'} {E_{i,j}}}{G_k}=
\comm {\edgeopi {k'} {E_{i,j}}}{S_{k^\ast}}=
\comm {\edgeopi {k'} {E_{i,j}^\ast}}{G^{k^\ast}}
=\comm {\edgeopi {k'} {E_{i,j}^\ast}}{S^{k}}=1-2\delta_{i,0}\delta_{j,0}\delta_{k,k'},\nl
&\comm {\edgeopi {l'} {E_{i,j}}p^l_{i,j}}{G^l}=
\comm {\edgeopi {l'} {E_{i,j}^\ast}}{S^l}=1-2\delta_{i,0}\delta_{j,0}\delta_{l,l'},
\end{align}
which together give the desired result by inspection.
\end{proof}

According to \eqref{lattice_sg_generators_k} we can label non-trivially charged elements of $\stab_g$ and  $\gauge_g$ with a face or dual face, so that $\stab_g(\tilde c_k)=\sset{s^k_f}_f$ and $\gauge_g(c_k)=\sset{g^k_f}_f$,
where for each $k$ the index $f$ takes values either in $\faces$ or $\dfaces$ and $g^k_f=s^{k^\ast}_f$ for $k\in\indc$.
Then for $E\subset \edges$, $k\in\indc\cup\inde$,
\begin{align}
\negr{\mathcal F_g}{\edgeopi k {E}}&=\set{s^k_{f^\ast}}{f^\ast\in\partial E},\quad
\negr{\mathcal F_g}{\edgeopi k {E^\ast}}&=\set{g^k_f}{f\in\partial E^\ast},
\end{align}
where $\mathcal F_g:=\stab_g\cup(\gauge_g-\gauge_g(1))$. 
Define
\begin{alignat}{2}
\hedgeop {k}{e} &:= \edgeop k {e^\ast} (g^k_{f} g^k_{h})^{b(c_k)},\qquad&\partial e^\ast &= \sset{f,h},\nl
\hedgeop {k}{e^\ast} &:= \edgeop k {e} (s^k_{f^\ast} s^k_{h^\ast})^{b(\tilde c_k)},\qquad&\partial e &= \sset{f^\ast,h^\ast},
\end{alignat}
where $b(\cdot)=(1-\spin{\cdot})/2$, $k\in\indc\cup\inde$ and $e\in\edges$. 
Then for $d,e\in\edges\cup\dedges$ and $k,l\in\indc\cup\inde$,
\be\label{dual_edges}
\comm{\hedgeop {k}{d}}{\edgeop l e} = 1-2\delta_{k,l}\delta_{d,e},
\ee
which implies that the edge operators $\edgeop k e$ form a LTI set $\mathcal E$ of independent generators of $\pauli_\graph$ with $\cnstr {\mathcal E}=\emptyset$.
Dual edge operators $\hedgeop k e\in\pauli_\graph$ satisfy the same properties as the edge operators $\edgeop k {e^\ast}$.
More importantly, \eqref{dual_edges} implies that any element of $\pauli$ can be uniquely decomposed ---up to a phase--- as $pq$ with $p\in\pauli_\graph$ and $q\in\cen{\pauli_\graph}\pauli$.
This paves the way for the following complement to \thmref{lattice}, which isolates the trivial part of the stabilizer.

\begin{prop}\label{prop:stab_injection}
There exists a LPG morphism $\funct F{\paulitrivial^{\otimes n}}{\coarse L\pauli}$ 
\be
\morphism F {\stabtrivial^{\otimes n}}{\coarse L{\sget{\stab_g(1)}}},
\ee
with period 1 and $F[\paulitrivial^{\otimes n}]\subset\coarse L {\cen {\pauli_\graph} {\pauli}}$, for some $n\in \N_0$.
\end{prop}

\begin{proof}
Due to the coarse graining, we can assume w.l.o.g. that $L=1$. 
Due to the above considerations, for each $s\in\stab_g(1)$ we can choose $\overline s\in \cen {\pauli_\graph}\pauli$ such that $\negr{\stab_g}{\syndr\stab{\overline s}}=\sset{s}$.
Moreover, the choice can be done in a translationally invariant way so that $\overline {\trans {i}{j} (s)}=\trans{i}{j} (\overline s)$.
Consider a translationally invariant ordering of $\stab_g(1)$, so that for $r,s\in\stab_g(1)$ we have $r<s$ iff $\trans {i} {j} (r)<\trans {i} {j} (s)$.
For $s\in\stab_g^0$, introduce the \emph{finite} sets $\mu(s):=\set{r\in\stab_g(1)}{r<s, \comm {\overline r}{\overline s}=-1}$, and let $\hat s:=\bar s\prod_{r\in\mu(s)} r$, which clearly is a translationally invariant definition.
Then for any $r,s\in\stab_g(1)$ we have
\begin{equation}
\comm {r} {s}=1,\quad
\comm {\hat r} {\hat s}=1,\quad\comm {r} {\hat s} =1-2 \delta_{r,s}.
\end{equation}
Given $s\in\stab_g(1)$, let  $\sigma_s\in\Sigma$ be the minimal element of $\supp s$ in lexicographical ordering, which is translationally invariant.
Let us label the elements of $\stab_1$ with $\sigma_s=\Site$ as $s^m$, with $m=1,\dots,n$ for some $n\in\N_0$.
We can then define $F$ setting $F(Z_{i,j}^m)=s^m$ and $F(X_{i,j}^m)=\psi^m \hat s^m$, where $\psi^m=1,i$ as needed and $X_{i,j}^m,Z_{i,j}^m$ denote $X_{i,j},Z_{i,j}$ on the $m$-th copy of $\paulitrivial$.
\end{proof}

\begin{cor}\label{cor:pqg_decomposition}
Every element of $\pauli$ admits a unique decomposition ---up to phases--- as $pqg$ with $p\in\pauli_\graph$, $q\in\pauli_1\subset\pauli$, $g\in\gauge_1\subset\gauge-\stab$, where
\be
\coarse L {\pauli_1}:=F[\paulitrivial^{\otimes n}], \qquad\gauge_1:=\cen{\pauli_\graph\pauli_1}\pauli.
\ee

\end{cor}

\subsection{Examples}\label{sec:examples_structure}

Some examples of the construction in \thmref{lattice} follow. 
Only their nontrivial content is given.
For the topological spin extensions we choose $\spin{\tilde c_1}=1$.
 
\begin{itemize}
\item
Toric code and subsystem toric code $(L=1)$
\be
\edgeop 1 {\hedge} = Z_{0,0}^h, \quad \dedgeop 1 {\hedge} = X_{0,0}^h,\quad
 \edgeop 1 {\vedge}= Z_{0,0}^v,\quad \dedgeop 1 {\vedge} = X_{0,0}^v.
\ee

\item
{Fermionic subsystem toric code} 
($L=1$)
\begin{align}
\edgeop 1 {\hedge} = Z_f^h, \quad \edgeop 1 {\hedge^\ast} = X_{0,0}^h,\quad
 \edgeop 1 {\vedge} = Z_f^v,\quad \edgeop 1 {\vedge^\ast} = X_{0,0}^v,
\end{align}

\item
Honeycomb subsystem code ($L=2$)
\begin{align}
\edgeop 1 {\hedge} &=  G^Y_{0,0}G^X_{1,0}G^Y_{1,0}G^X_{2,0},\qquad\edgeop 1 {\hedge^\ast} =Y_{1,0}^1Y_{1,-1}^1,\nl
\edgeop 1 {\vedge} &=  G^Z_{0,0}G^X_{0,1}G^Z_{0,1}G^X_{0,2} ,\qquad\edgeop 1 {\vedge^\ast} = Z_{0,1}^1Z_{-1,0}^2,\nl
\stab_g (1) &= \stab_g^{\hexagon}-\strans 2 (S^{\hexagon}),\qquad \quad g\in\pro {\edgeopi 1 {\dface}},\nl
\gauge_g (1) &=\gauge_g\cup\strans 2 (\sset{g G^Y, g G^Z})-\strans 2 (\sset{G^X,G^Y,G^Z}),\nl
\end{align}

\item
Topological subsystem color code ($L=3$)
\begin{align}
\edgeop 1 {\hedge} &:= Z_{1,0}^2 X_{1,0}^3 Y_{1,0}^1 X_{1,0}^5 X_{2,0}^2 Y_{2,0}^4  X_{2,0}^6 Z_{2,0}^5 Y_{3,0}^4 Y_{3,0}^2 X_{3,0}^6 X_{3,0}^3 Y_{3,0}^5  Y_{3,0}^1,\nl
\edgeop 1 {\vedge} &:=  Z_{0,0}^1 Z_{0,1}^6 Y_{0,1}^5 Y_{0,1}^3 X_{0,1}^1 X_{0,2}^4 Y_{0,2}^6  Y_{0,2}^2 Z_{0,2}^3 X_{0,3}^4 Y_{0,3}^2 X_{0,3}^6 X_{0,3}^3 Y_{0,3}^5 Y_{0,3}^1,\nl
\dedgeop 1 {\hedge} &:=  \trans 2 {-2} (\edgeop 1 {\vedge}),\qquad 
\dedgeop 1 {\vedge} :=  \trans {-1} 1 (\edgeop 1 {\hedge}),\nl
G_g(1) &:= \gaugeTSCCg - \strans 3 (\sset{G^1_{0,0},G^1_{2,1}}),\nl
S_g(1) &:=  \stabTSCCg \cup \strans 3 (\sset{S_{0,0}^1S_{0,0}^2,S_{2,0}^2S_{2,1}^1,S_{2,1}^1S_{2,1}^2,S_{0,2}^2S_{3,0}^1})-\nl 
&- \strans 3 (\sset{S_{0,0}^1,S_{0,0}^2,S_{2,0}^2,S_{2,1}^1,S_{2,1}^2,S_{0,2}^2}),
\end{align}
\end{itemize}

\subsection{Local equivalence}\label{sec:equivalence}

Among the examples above, the three variants of the toric code have a particularly simple structure, suggesting the following definition.
\begin{defn}
A TSSG $\stab$ is elementary if it admits the structure of \thmref{lattice} with $\pauli_\graph=\pauli$.
\end{defn}
From our examples it follows that there exist elementary TSSGs of any non-chiral characteristic.
When $\beta=0$, an elementary TSSG is a TSG.

\begin{prop}\label{prop:elementary}
An elementary TSSG is locally equivalent to any TSSG with the same characteristic.
\end{prop}

\begin{proof}
If $\stab'\subset\pauli'$ is an elementary TSSG, it admits $\edgeopxp k$ as in \thmref{lattice} with $\pauli_\graph=\pauli'$.
Let $\stab\subset\pauli$ be a TSSG with the same characteristic.
Using the same extension of the topological spin as for $\edgeopxp k$, choose $\edgeopx k$ and $\stab_g$ according \thmref{lattice}.
Since we are allowed to coarse grain, w.l.o.g. we assume that $L=1$ in both cases.
We can construct a LPG morphism $G$ from $\stab'\subset\pauli'$ to $\stab\cap\pauli_\graph\subset\pauli$ and with image in $\pauli_\graph$.
Namely, set $G[\edgeopp k e]=\psi(e,k)\sigma(e,k)\edgeop k e$ for any $k\in\indc\cup\inde$ and $e\in\edges$, where $\psi(e,k)=1,i$ and $\sigma(e,k)=\pm 1$ are translationally invariant and as follows.
There is a unique choice of $\psi(k,e)$ to preserve self-adjointness.
We set $\sigma(e,k)=1$ unless $e$ is horizontal and belongs to an even row, and then it is easy to check that there is a unique choice for $\sigma$ that maps stabilizers to stabilizers, without sign flips.
Now choose $\funct F {\paulitrivial^{\otimes n}} \pauli$ according to \propref{stab_injection}.
Consider the LPG morphism $\funct H {\pauli'\otimes\paulitrivial^{\otimes n}}{\pauli}$ from $\stab'\otimes\stabtrivial^{\otimes n}$ to $\stab$ defined by
$H(p\otimes q)=F(p)G(q)$.
According to \lemref{extension} we can extend $H$ to an isomorphism $\funct {\hat H}{\pauli'\otimes\paulitrivial^{\otimes n+m}}{\pauli}$ that gives the local equivalence.
\end{proof}

\begin{cor}\label{cor:equivalence_nc}
Two non-chiral TSSGs are locally equivalent iff they have the same characteristic.
\end{cor}

\begin{prop}\label{thm:equivalence}
Given a TSG, there exists a elementary TSG with the same characteristic.
\end{prop}

\begin{proof}
Let $\stab\subset \pauli$ be a TSG and $\funct F{\paulitrivial^{\otimes n}}{\pauli}$ the LPG isomorphism of \propref{stab_injection}, where
we assume w.l.o.g. $L=1$.
Then $\pauli_1=F[\paulitrivial^{n}]$ and $\gauge_1=\phase$ in \corref{pqg_decomposition}.
According to \lemref{extension} there exists for some $m\in\N_0$ a lattice Pauli morphism $\funct G{\paulitrivial^{m}}{\pauli}$ with $G[\paulitrivial^{m}]=\cen{\pauli_1}\pauli=\pauli_\graph$ and period 1.
The desired elementary TSG is then $\stab_0\subset\pauli_0$, with $\stab_0:=G\inv[\stab\cap\pauli_\graph]\subset\paulitrivial^{m}=:\pauli_0$.
Indeed, $G$ is a monomorphism and $\cen{\pauli_\graph\cap\stab}{\pauli_\graph}\propto\pauli_\graph\cap\stab$.
\end{proof}

\begin{cor}\label{cor:equivalence_TSG}
Two TSGs are locally equivalent iff they have the same characteristic.
\end{cor}

\section{Conclusions and outlook}\label{sec:conclusions}

Under the sole assumption of translational invariance, we have provided a detailed study of the structure of two-dimensional TSCs.
This structure can always be understood in terms of string operators that carry a topological charge, allowing to extend the insights from well known codes such as the toric code.
Codes can be classified in terms of their topological charges, which are invariant under local transformations.
Both for subspace codes and non-chiral subsystem codes, we show that two codes with isomorphic topological charges can be related by a local transformation.
The existence of chiral subspace codes remains open.

From a computational perspective, the results have interesting implications. 
The relevant charges in a code can be arranged into several copies of the toric code or subsystem color codes. 
But by means of code deformations \cite{bombin:2009:deformation}, in these codes the whole Clifford group of gates can be implemented in a topologically protected way. 
In the case of subsystem color codes this only involves the introduction of twist defects to encode qubits \cite{bombin:2010:subsystem}. 
For toric codes, single qubit Clifford gates can be recovered by encoding in twists \cite{bombin:2010:twist} and CNot gates by encoding in hole pairs \cite{raussendorf:2007:deformation,bombin:2009:deformation}. 
But it is possible to switch between both encodings, so that also in the toric code deformations are enough to recover the whole Clifford group. 
As a result ---since such techniques do not depend on details of the codes but rather on the charge content---, we can perform all Clifford gates by code deformation in any two-dimensional TSC.

We finish with a discussion of some natural extensions of the present work.

\begin{sep}{Boundaries}
The same ideas that we have used to classify TSCs can be applied to classify boundaries between them.
To model them, we can take the right and left half-plane to correspond to two possibly different TSSGs, allowing for arbitrary but translationally symmetric ---along the axis direction--- gauge and stabilizer generators in the axis between them.
Of course, we have to impose the topological condition on $\stab$, which still makes sense because we do not want to have localized degrees of freedom on small portions of the boundary.

Clearly there exist natural morphisms from the charge groups of the two TSSGs to the charge group of this boundary TSSG.
These turn out to be enough to label all the charges ---because no global constraints, and thus no charge, can be confined to the axis.
For charges not confined to either half-plane, the values of $\mutualx$ and $\spinx$ on each side must agree.
This is in particular true for the trivial charge.
Thus, those charges that form the kernel of each of the two natural morphisms must be bosons with trivial mutual interactions.
Such charges are especially relevant because they `disolve' in the boundary, giving rise to logical string operators with endpoints on the boundary \cite{bombin:2010:subsystem}.
\end{sep}

\begin{sep}{General codes}
It would be more interesting to investigate the structure of general two-dimensional topological quantum error correcting codes, defined in terms of LTI sets of commuting projectors.
The expected result is that such codes would be describable in terms of anyon models with additional structure.
\end{sep}

\begin{sep}{Higher dimensions}
Even in the case of the relatively simple stabilizer formalism, the general structure of (translationally invariant) topological codes in three dimensions and above turns out to be quite rich.
Already in the three-dimensional case there exist examples that do not fit on the standard homological picture \cite{bravyi:2010:topological,haah:2011:local}.
This is of great interest due to the possible thermal stability \cite{bravyi:2011:energy} of the corresponding quantum memories, protected by the local Hamiltonian.
\end{sep}

\vspace{.5cm}
\noindent{\emph {Acknowlegments}---}
I would like to acknowledge useful discussions with Sergey Bravyi and Daniel Gottesman.
I thank especially Guillaume Duclos-Cianci and David Poulin for many discussions and comments on several versions of the paper.
This work was supported with research grants QUITEMAD S2009-ESP-1594, FIS2009-10061 and UCM-BS/910758.
Work at Perimeter Institute is supported by Industry Canada and Ontario MRI.

 \bibliographystyle{ieeetr}
 \bibliography{refs}

\end{document}